\newif\iflong\longtrue
\newtheorem{claim}{Claim}
\newcommand{\Oh}{\mathcal{O}}
\newcommand{\Mo}{\mathcal{M}} 
\newcommand{\Sk}{\mathcal{S}}
\newcommand{\Fa}{\mathcal{F}} 
\newcommand{\poly}{\text{\normalfont{poly}}}
\newcommand{\badstuffhappens}{\ensuremath{\text{\rm NP} \subseteq \text{\rm coNP} / \text{\rm poly}}}
\newcommand{\todok}[1]{\todo[backgroundcolor=red,linecolor=black]{ #1}}
\DeclareMathOperator*{\argmax}{\arg\!\max}
\title{Learning Bayesian Networks Under Sparsity Constraints: A~Parameterized Complexity Analysis}
\author{\name Niels Grüttemeier \email niegru@informatik.uni-marburg.de \\
       \name Christian Komusiewicz \email komusiewicz@informatik.uni-marburg.de \\
       \addr Hans-Meerwein-Straße 6,\\
		35032 Marburg, Germany
       }
\begin{document}


\maketitle

\begin{abstract}
  
  We study the problem of learning the structure of an optimal Bayesian network when additional constraints are posed on the network or on its moralized graph. More precisely, we consider the constraint that the network or its moralized graph are close, in terms of vertex or edge deletions, to a sparse graph class~$\Pi$. For example, we
  show that learning an optimal network whose moralized graph has vertex deletion distance at most~$k$ from a graph with maximum degree 1 can be
  computed in polynomial time when~$k$ is constant. This extends previous work that gave an algorithm with such a 
  running time for the vertex deletion distance to edgeless graphs~[Korhonen \&
  Parviainen, NIPS 2015]. We then show that further extensions or improvements are presumably
  impossible. For example, we show that learning optimal networks where the network or its moralized graph have maximum degree~$2$
  or connected components of size at most~$c$,~$c\ge 3$, is NP-hard. Finally, we show that learning an optimal
  network with at most~$k$ edges in the moralized graph presumably has no~$f(k)\cdot |I|^{\Oh(1)}$-time algorithm and that, in contrast, an optimal network with at
  most~$k$ arcs  can be computed in~$2^{\Oh(k)}\cdot
  |I|^{\Oh(1)}$~time where~$|I|$ is the total input~size.
\end{abstract}


\subsection*{Acknowledgment}

We would like to thank our colleague Nils Morawietz (Philipps-Universität Marburg) for his helpful discussions that led to the proof of Theorem~\ref{Theorem: Bounded-VC W[2]-h}. A preliminary version of this work appeared in \emph{Proceedings of the Twenty-Ninth International Joint Conference on Artificial Intelligence, (IJCAI '20)}, pages 4245--4251. The full version contains all missing proofs, new results for~\textsc{BNSL} with constraints on the skeleton, and an improved algorithm for~\textsc{$(\Pi_1+v)$-Moral BNSL} leading to a slightly better running time.

\section{Introduction}
\todok{degree-2-modulater is the set, not the number}
Bayesian networks are graphical models for probability distributions in which the presence of statistical
dependencies between a set of random variables are represented via a directed acyclic graph
(DAG)~$D=(N,A)$ over a set~$N$ of~$n$ random variables~\cite{D09}. An arc \iflong from a vertex~$u$ to a vertex~$v$ \else $(u,v)$ \fi in a Bayesian
network means that the distribution of~$v$ depends on the value of~$u$. Once we have
obtained a Bayesian network, one may infer the distribution of some random variables given the
values of other random variables.

First, however, one needs to learn the network from observed data. An important step herein
is to learn the \emph{structure} of the network, that is, the arc set of the corresponding DAG. This problem is known as \textsc{Bayesian Network Structure Learning (BNSL)}. In BNSL, one is given for each network vertex~$v$ and each set of possible
parents of~$v$ a parent score and the goal is to learn an acyclic network with a maximal sum of parent scores. To  represent the observed data as
closely as possible, it may seem appropriate to learn \iflong a tournament, that is, \fi a DAG in which
every pair of vertices~$u$ and~$v$ is connected either by the arc~$(u,v)$ or by the
arc~$(v,u)$. There are, however, several reasons why \iflong learning a tournament-like DAG \else this \fi 
should be avoided~\iflong(For a detailed discussion we refer to the book by~\citeA{D09})\else{}\cite{D09}\fi: First, such a network
gives no information about which variables are conditionally independent. Second, including too
many dependencies \iflong in the model \fi makes the model vulnerable to overfitting. Finally, the problem
of inferring distributions on a given Bayesian network is intractable when the DAG is \iflong tournament-like\else too dense\fi.
More precisely, \iflong the inference problem \else inference \fi on Bayesian networks is NP-hard~\cite{C90}. 
When the  network is tree-like, however, efficient inference algorithms are possible: If the moralized graph has small treewidth, the inference task can be solved more efficiently~\cite{D09}; the moralized graph of a network~$D$ is the undirected graph on the same
vertex set that is obtained by adding an edge between each pair of vertices that is adjacent or has a common child
in~$D$.

Motivated by \iflong these reasons for avoiding tournament-like networks and instead aiming for
tree-like networks,\else this, \fi{} it has been proposed to learn optimal networks under structural constraints
that guarantee that the network or its moralized graph is
tree-like~\cite{EG08,KP13,KP15,CL68,D99,GKLOS15}.  We continue this line of research,
focusing on exact algorithms with worst-case running time guarantees. In other words, we want
to find out for which structural constraints there are fast algorithms for learning optimal
Bayesian networks under these constraints and for which constraints this is
presumably impossible. 

\paragraph{Known Results.}
The problem of learning a Bayesian network without structural constraints, which we call 
\textsc{Vanilla-BNSL}, is NP-hard~\cite{C95} and can be
solved in~$2^n n^{\Oh(1)}$ time by dynamic programming over all subsets
of~$N$~\cite{OM03,SM06}.

When the network is restricted to be a \iflong branching, that is, a \fi directed tree in which every
vertex has indegree at most one, then an optimal network can be computed in polynomial
time~\cite{CL68,GKLOS15}. \iflong Note that learning a more restricted Bayesian network is not necessarily easier: While learning a branching is solvable in polynomial time, the problem becomes NP-hard if we aim to learn a directed path~\cite{M01}.\fi

On the negative side, BNSL where the moralized graph of the network is restricted to have
treewidth at most~$\omega$ is NP-hard for every fixed~$\omega \geq 2$ and can be solved
in~$3^n n^{\omega+\Oh(1)}$ time~\cite{KP13}.  Finally, Korhonen and Parviainen~\cite{KP15}
considered a structural constraint that restricts the treewidth of the moralized graph by
restricting the size of its vertex cover. A \emph{vertex cover} in a graph~$G$ is a vertex
set~$S$ such that every edge of~$G$ has at least one endpoint in~$S$. Korhonen and Parviainen~\cite{KP15} showed that \textsc{BNSL} where
the moralized graph is restricted to have a vertex cover of size at most~$k$ can be solved
in~$4^k\cdot n^{2k+\Oh(1)}$ time~\cite{KP15}. Since having a bounded vertex cover---we refer to graphs with this property as bounded-vc graphs---implies
that the graph has bounded treewidth, the networks that are learned by \textsc{BNSL} with
bounded-vc moralized graphs allow for fast inference algorithms.  An
algorithm with running time~$f(k)\cdot |I|^{\Oh(1)}$ is unlikely for this \textsc{BNSL}
variant, since it is W[1]-hard with respect to the parameter~$k$~\cite{KP15}. Here,~$|I|$
denotes the total input size. In other
words, 
it seems necessary that the degree of the running time polynomial depends on~$k$.


\paragraph{Our Results.}
\begin{table}[t]
\caption{An overview of the parameterized complexity of constrained BNSL problems for an input parameter~$k$ that upper-bounds structural parameters in the resulting skeleton or moralized graph. The \emph{distance to degree 2} is the minimum size of a vertex set~$S$, such that after removing~$S$, the maximum degree is 2. The~\emph{$c$-component order connectivity ($c$-COC)} is the minimum size of a vertex set~$S$, such that after deleting~$S$, every connected componenet has order at most~$c$.}
\begin{center}
\begin{tabular}{lll}
\toprule
\textbf{Bounded by~$\mathbf{k}$}&\textbf{Skeleton}&\textbf{Moralized Graph}\\
\midrule
Treewidth &NP-h for~$k=1$ &NP-h for~$k=2$\\
&\cite{D99}&\cite{KP13}\\

\midrule

Vertex cover number & XP (Thm~\ref{Theorem: VC XP Algo}) & XP~\cite{KP15} \\
& W[2]-h (Thm~\ref{Theorem: Bounded-VC W[2]-h}) & W[2]-h (Cor~\ref{Cor: Bounded-VC W[2]-h})\\

\midrule

Dissociation number &W[2]-h (Thm~\ref{Theorem: Bounded-VC W[2]-h}) & XP (Thm~\ref{Theorem: XP Algo}) \\
&&W[2]-h (Cor~\ref{Cor: Bounded-VC W[2]-h})\\

\midrule

Distance to degree~2 &NP-h for~$k=0$ (Thm \ref{Theorem: Bounded-DelToDeg2 NPh}) &NP-h for~$k=0$ (Thm \ref{Theorem: Bounded-DelToDeg2 NPh})\\

\midrule

$c$-COC for~$c\geq 3$ &NP-h for~$k=0$ (Thm \ref{Theorem: Bounded-3coc NP-h}) &NP-h for~$k=0$ (Thm \ref{Theorem: Bounded-3coc NP-h})\\

\midrule

Number of edges &FPT (Cor~\ref{Cor: BA-BNSL FPT})  & XP (Prop~\ref{Prop: Bounded-Arc BNSL in XP}) \\
&no~$k^{\Oh(1)}$ kernel (Cor~\ref{Cor: BNSL No Poly Kernel for n})&W[1]-h (Thm~\ref{Theorem: Bounded-Edges BNSL W[1]-h})\\

\midrule

Feedback edge set & NP-h for~$k=0$ & W[1]-h (Thm~\ref{Theorem: B-FES BNSL W1h})\\
&\cite{D99}&\\
\bottomrule
\end{tabular}
\end{center}
\label{tab:results}
\end{table}

Extending previous work, we provide an almost complete picture of the
parameterized and classical complexity of BNSL with respect to several constraints that
guarantee tree-like networks or moralized graphs. Since the constraints are formulated in terms of undirected graphs, we will refer to the undirected underlying graph of a network as its \textit{skeleton}. An overview of our results and previous results for the considered problems is given in Table~\ref{tab:results}.

The results for \textsc{BNSL} with bounded-vc moralized graphs~\cite{KP15} form the starting point for our
work. We show that BNSL with bounded-vc skeletons can be solved in polynomial time when the vertex cover size bound~$k$ is a constant. Moreover, we show
that, as for bounded-vc moralized graphs, an algorithm with running time~$f(k)\cdot |I|^{\Oh(1)}$ is unlikely since BNSL with bounded-vc skeletons is W[2]-hard.

After complementing the results for BNSL  with bounded-vc moralized graphs by
results for its skeleton counter part, we consider further, related structural
constraints.  To this end, we take the following alternative view of vertex covers: A
graph has a vertex cover of size~$k$ if and only if it can be transformed into an edgeless
graph by~$k$ vertex deletions. Thus, in \textsc{BNSL} with bounded-vc moralized graphs we
learn a network whose moralized graph is close, in terms of the number of vertex
deletions, to a sparse graph class. We investigate whether there are further positive
examples for such constrained network learning problems.

First, we consider the constraint that the skeleton or the moralized graph can be
transformed into a graph with maximum degree~$1$ by at most~$k$ vertex deletions. This
property is also known as having \emph{dissociation number} at most~$k$ and we refer to
graphs with this property as bounded-diss-number graphs in the following. We show that
under this constraint, BNSL with bounded-diss-number moralized graphs can be solved
in~$n^{3k} \cdot k^{\Oh(k)} \cdot |I|^{\Oh(1)}$~time and thus in polynomial time for every constant
value of~$k$. This extends the result for \textsc{BNSL} with bounded-vc moralized graphs
in the following sense: the value of~$k$ can be arbitrarily smaller than the vertex cover
number of the resulting network and thus for fixed~$k$ our algorithm can learn an optimal
network for a larger class of graphs than the algorithm for \textsc{BNSL} with bounded-vc
moralized graphs. Observe that moralized graphs with bounded dissociation number still
have bounded treewidth and thus inference on the learned networks will still be solvable
efficiently. On the negative side, we show that an algorithm with running
time~$f(k)\cdot |I|^{\Oh(1)}$ is unlikely since the problem is W[2]-hard. The latter
hardness result also holds for BNSL with bounded-diss-number skeletons; we did
not obtain a positive result for this case, however.

We then consider two further constraints that are related to the dissociation number: We
show that learning an optimal network whose skeleton or moralized graph has maximum
degree~$2$ is NP-hard and that learning an optimal network in which every component of the
skeleton or the moralized graph has at most $c$~vertices,~$c\ge 3$, is
NP-hard. The latter constraint is related to the dissociation number since in a graph with
maximum degree 1 every connected component has at most two vertices.

Next, we consider constraints that are formulated in terms of edge sets of the skeleton or the moralized graph. We show that optimal networks with at most~$k$ arcs can be found in time~$2^{\Oh(k)}\cdot |I|^{\Oh(1)}$. In contrast, when we aim to compute an optimal network whose moralized graph has at most~$k$
edges, an~$f(k)\cdot |I|^{\Oh(1)}$-time algorithm is unlikely.  Thus, putting structural constraints on the
  moralized graph may make the problem much harder than putting similar structural
  constraints on the skeleton.

Furthermore, we consider the case where the edge deletion distance to trees is measured, in
other words, the case where the skeleton or the moralized graph have a feedback edge set
of size at most~$k$. BNSL with tree skeletons is known
as \textsc{Polytree Learning}. Thus, the learning problem for skeletons with feedback edge
sets of size at most~$k$ is NP-hard even for~$k=0$~\cite{D99}. For BNSL with moralized
graphs with feedback edge sets of size at most~$k$, we obtain the first hardness result:
we show that an algorithm with running time~$f(k) \cdot |I|^{\Oh(1)}$ is unlikely, since
the problem is W[1]-hard; an algorithm with running time~$n^{f(k)}$ is however still possible.

Finally, we obtain a further hardness result for \textsc{Vanilla-BNSL}: Under standard
assumptions in complexity theory, it is impossible that we can transform a given instance of
\textsc{Vanilla-BNSL} in polynomial time to an equivalent one of size~$n^{\Oh(1)}$. Thus, it is sometimes necessary to keep an exponential number of parent scores to
compute an optimal network.

Altogether, our results reveal that the difficulty of the learning problem may differ depending on whether we put the constraints on the skeleton or the moralized graph. Moreover, more general networks than those with bounded-vc moralized graphs can be computed efficiently. The room for generalization seems, however, very limited as even learning networks with constant degree or constant component size is NP-hard. 

\section{Preliminaries}
\iflong
\subsection{Notation}
\else
\paragraph{Notation.}
\fi
A directed graph~$D=(N,A)$ consists of a \emph{vertex set}~$N$ and an \emph{arc set}~$A \subseteq N \times N$. Let~$D=(N,A)$ be a directed graph. If~$D$ does not contain directed cycles, then~$D$ is a \emph{directed acyclic graph}~(DAG). An arc~$(u,v) \in A$ is called \emph{incoming arc into~$v$} and~\emph{outgoing arc from~$u$}. \iflong Given a vertex~$v$, the number of incoming arcs into~$v$ is called~\emph{in-degree of~$v$}, and the number of outgoing arcs from~$v$ is the~\emph{out-degree of~$v$}. \fi A vertex without incoming arcs is a~\emph{source}. A vertex without outgoing arcs is a~\emph{sink}.
The set~$P^A_v:=\{u \in N \mid (u,v)\in A\}$ is called \emph{parent set of~$v$}. The vertices in~$P^A_v$ are called \emph{parents of~$v$} and for every~$u \in P^A_v$, the vertex~$v$ is called \emph{child of~$u$}. We call~$v_1$ an \emph{ancestor of~$v_\ell$} and~$v_\ell$ a \emph{descendant of~$v_1$} if there is a directed path~$(v_1, v_2, \dots, v_\ell)$ in~$D$.

An undirected graph~$G=(V,E)$ consists of a vertex set~$V$ and an \emph{edge set}~$E \subseteq \{ \{u,v\} \mid u,v \in V\}$.
For a vertex~$v \in V$, we write~$N_G(v):=\{u \mid \{u,v\} \in E\}$ to denote the neighborhood of~$v$ in~$G$. The \emph{degree of a vertex~$v$} is denoted by~$\deg_G(v):=|N_G(v)|$. \iflong For~$V_1, V_2 \subseteq V$, we write~$E_G(V_1,V_2):= \{\{v_1,v_2\} \in E \mid v_1 \in V_1, v_2 \in V_2\}$ for the set of edges between~$V_1$ and~$V_2$. Moreover we set~$E_G(K):=E_G(K,K)$. \fi Given an edge set~$E' \subseteq E$, we let~$G \setminus E'$ denote the graph we obtain after deleting the edges of~$E'$ from~$G$. Given a vertex set~$V' \subseteq V$, we let~$G-V'$ denote the graph we obtain after deleting the vertices in~$V'$ and their incident edges from~$G$. \iflong A set~$T \subseteq E$ is called \emph{feedback edge set} if~$G \setminus T$ contains no cycles. The size of a smallest possible feedback edge set for~$G$ is the~\emph{feedback edge number of~$G$}. \fi A set~$S \subseteq V$ is called \emph{dissociation set} if~$G-S$ has maximum degree 1. The size of a smallest possible dissociation set for~$G$ is the~\emph{dissociation number of~$G$}.

A \emph{graph class}~$\Pi$ is a set of undirected graphs. For a graph class~$\Pi$ and~$k \in \mathds{N}$, let~$\Pi + kv \iflong := \{G=(V,E) \mid \exists V' \subseteq V: (|V'| \leq k \land G-V' \in \Pi)\} \fi$ denote the class of graphs that can be transformed into a graph in~$\Pi$ by performing at most~$k$ vertex deletions. Analogously, we define~$\Pi+ ke \iflong := \{G=(V,E) \mid \exists E' \subseteq E: (|E'| \leq k \land G \setminus E' \in \Pi)\} \fi$ as the class of graphs that can be transformed into a graph in~$\Pi$ by performing at most~$k$ edge deletions. We call $\Pi$~\emph{monotone} if~$\Pi$ is closed under edge- and vertex deletions. Note that~$\Pi$ being monotone implies that for every~$k \in \mathds{N}_0$, the graph classes~$\Pi+kv$ and~$\Pi+ke$ are monotone. 

\subsection{Bayesian Network Structure Learning}

\paragraph{Problem Definitions.}
Given a vertex set~$N$, we call a family~$\Fa=\{f_v:2^{N \setminus \{v\}} \to \mathds{N}_0 \mid v \in N\}$ a family of \emph{local scores} for~$N$. Intuitively, for a vertex~$v \in N$ and some~$P \in 2^{N \setminus \{v\}}$, the value~$f_v(P) \in \mathds{N}_0$ represents the score we obtain if we choose exactly the vertices of~$P$ as parents for~$v$. Given a vertex set~$N$, local scores~$\Fa$, and some integer~$t \in \mathds{N}_0$, an arc set~$A \subseteq N \times N$ is called \emph{$(N,\Fa,t)$-valid} if~$(N,A)$ is a DAG and~$\sum_{v \in N} f_v(P^A_v) \geq t$. In~\textsc{Bayesian Network Structure Learning (Vanilla-BNSL)}, the input is a vertex set~$N$, local scores~$\Fa$, and an integer~$t$. The question is, whether there exists an~$(N, \Fa, t)$-valid arc set.

In this work, we study~\textsc{Vanilla-BNSL} under additional sparsity constraints. These sparsity constraints are posed on the skeleton and the moralized graph~\cite{EG08} of the network. Let~$D:=(N,A)$ be a DAG. The \emph{skeleton of~$D$} is the undirected graph~$\Sk(D):=(N,E)$, with~$E:=\{\{u,v\} \mid (u,v) \in A\}$. The \emph{moralized graph of~$D$} is the undirected graph~$\Mo(D):=(N, E_1 \cup E_2)$ where the edge set is defined by~$E_1 := \{ \{u,v\} \mid (u,v) \in A\}$ and~$E_2 := \{ \{u,v\} \mid u \text{ and }v \text{ have a common child in }D \}$. The edges in~$E_2$ are called~\emph{moral edges}. Given a DAG~$(N,A)$ we write~$\Sk(N,A):=\Sk((N,A))$ and~$\Mo(N,A):=\Mo((N,A))$ for sake of readability. The problems are defined as~follows.

\begin{center}
	\begin{minipage}[c]{.9\linewidth}
          \textsc{$(\Pi+v)$-Skeleton Bayesian Network Structure Learning \\($(\Pi+v)$-Skeleton BNSL)}\\
          \textbf{Input}: A set of vertices~$N$, local scores~$\Fa=\{f_v \mid v \in N\}$, and two integers~$t,k \in \mathds{N}_0$.\\
          \textbf{Question}: Is there an $(N,\Fa,t)$-valid arc set~$A \subseteq N \times N$ such that~$\Sk(N,A) \in \Pi + kv$?
	\end{minipage}
\end{center}

\begin{center}
	\begin{minipage}[c]{.9\linewidth}
          \textsc{$(\Pi+v)$-Moral Bayesian Network Structure Learning \\($(\Pi+v)$-Moral BNSL)}\\
          \textbf{Input}: A set of vertices~$N$, local scores~$\Fa=\{f_v \mid v \in N\}$, and two integers~$t,k \in \mathds{N}_0$.\\
          \textbf{Question}: Is there an $(N,\Fa,t)$-valid arc set~$A \subseteq N \times N$ such that~$\Mo(N,A) \in \Pi + kv$?
	\end{minipage}
\end{center}

Furthermore, we define the problems~\textsc{$(\Pi+e)$-Skeleton-BNSL} and \textsc{$(\Pi+e)$-Moral-BNSL} on the same input and we ask if there exists an~$(N,\Fa,t)$-valid arc set~$A$ such that~$\Sk(N,A) \in \Pi + ke$ or~$\Mo(N,A) \in \Pi + ke$, respectively. Given a graph class~$\Pi$, we refer to all problems described above as~\emph{constrained BNSL problems for~$\Pi$}. For a constrained BNSL problem we refer to the constraint on~$\Sk$ or~$\Mo$ as \emph{sparsity constraint}. Given an instance~$I$ of a constrained BNSL problem for some~$\Pi$, we call the requested arc set~$A$ a \emph{solution} of~$I$. Note that, if~$\Pi$ is monotone and contains infinitely many graphs and~$k=n$, then the sparsity constraints~$\Sk(N,A) \in \Pi + kv$ and~$\Mo(N,A) \in \Pi + kv$ always hold, since the empty graph belongs to~$\Pi$. Moreover, if~$\Pi$ is monotone and contains infinitely many graphs and~$k=n^2$, then the sparsity constraints~$\Sk(N,A) \in \Pi + ke$ and~$\Mo(N,A) \in \Pi + ke$ always hold, since every edgeless graph belongs to~$\Pi$. Hence,~all problems considered in this work are generalizations of~\textsc{Vanilla-BNSL} and thus NP-hard for every monotone and infinite~$\Pi$. For formal reasons, the problems are stated as decision problems. However, the algorithms presented in this work solve the corresponding optimization problem within the same running time.

 \paragraph{Input Representation.} 
Throughout this work, we let~$n:=|N|$ denote the number of vertices given in an instance~$I=(N,\Fa,t,k)$ of a constrained BNSL problem. Furthermore, we assume that for~$N=\{v_1, \dots, v_n\}$, the local scores~$\Fa$ are given as a two-dimensional array~$\Fa:=[Q_{1}, Q_{2}, \dots, Q_{n}]$, where each~$Q_i$ is an array containing all triples~$(f_{v_i}(P), |P|, P)$ where~$f_{v_i}(P)>0$ or~$P=\emptyset$. Note that the triple~$(f_{v_i}(P), |P|, P)$ of a non-empty parent set~$P$ is only part of the input if its local score~$f_{v_i}(P)$ is not 0. This input representation is known as \emph{non-zero representation}~\cite{OS13}. The size~$|\Fa|$ is then defined as the number of bits we need to store this two-dimensional array. As the \emph{size of~$I$} we define~$|I|:= n + |\Fa| + \log(t) + \log(k)$. 

 \paragraph{Basic Observations.} 
Let~$I$ be a yes-instance of a constrained BNSL problem. We call a solution~$A$ for~$I$ \emph{nice} if~$f_v(P^A_v) \leq f_v(\emptyset)$ implies~$P^A_v = \emptyset$. In this work, we consider constrained BNSL problems for some monotone graph classes~$\Pi$. We next show that in these cases every yes-instance has a nice solution~$A$.
 
\begin{proposition} \label{Prop: Potential Parents}
Let~$\Pi$ be a monotone graph property, and let~$(N, \Fa, t, k)$ be a yes-instance of a constrained BNSL problem for~$\Pi$. Then, there exists a nice solution~$A$ for~$(N, \Fa, t, k)$. 
\end{proposition}

\iflong
\begin{proof}
Let~$A$ be a solution for~$I:=(N, \Fa, t, k)$ such that there exist vertices~$v_1, \dots, v_\ell \in N$ with~$P_{v_i}^A \neq \emptyset$ and~$f_{v_i}(P_{v_i}^A)\leq f_{v_i}(\emptyset)$. We then set~$A' := A \setminus \{(u, v_i) \mid u \in N, i \in \{1, \dots, \ell\}\}$. Observe that~$P_{v_i}^{A'} = \emptyset$ for all~$i \in \{1, \dots, \ell\}$. Moreover,~$f_{v}(P^{A'}_{v}) \geq f_{v}(P^{A}_{v})$ for every~$v \in N$ and~$(N,A')$ is a DAG. Therefore,~$A'$ is~$(N, \Fa, t)$-valid. Finally, since~$\Pi$ is monotone and~$\Sk(N,A)$ (or~$\Mo(N,A)$, respectively) satisfies the sparsity constraint,~$\Sk(N,A')$ (or~$\Mo(N,A')$, respectively) also satisfies the sparsity constraint. $\hfill \Box$
\end{proof}
\fi

Observe that an instance~$I:=(N,\Fa,t,k)$ of a constrained BNSL problem for some monotone~$\Pi$ is a trivial yes-instance if~$\sum_{v \in N} f_v(\emptyset) \geq t$, since the empty arc set is a solution of~$I$. Throughout this work, we assume that for a non-trivial instance of a constrained BNSL problem it holds that~$f_v(\emptyset)=0$ for every~$v \in N$. With the next proposition we assure that every instance can be preprocessed in~$\Oh(|\Fa|)$ time into an instance that satisfies our assumption.

\begin{proposition} \label{Prop: Translation}
Let~$\Pi$ be a graph class, and let~$I:=(N,\Fa,t,k)$ be an instance of a constrained BNSL problem for~$\Pi$ where~$t \geq \sum_{v \in N} f_v(\emptyset)$. Then, there exist~$\Fa':=\{f_v' \mid v \in N\}$ with~$f'_v(\emptyset)=0$ for every~$v\in N$ and~$t' \in \mathds{N}_0$, such that an arc set~$A$ is a nice solution for~$I$ if and only if~$A$ is a nice solution for~$I':= (N,\Fa',t',k)$. Furthermore,~$I'$ can be computed in~$\Oh(|\Fa|)$~time.
\end{proposition}

\begin{proof}

Let~$v \in N$. We define the new local scores~$f_v'$ by setting~$f_v'(P):=f_v(P)-f_v(\emptyset)$, if~$f_v(P) \geq f_v(\emptyset)$, and~$f_v'(P):=0$ otherwise. Note that~$f'_v(\emptyset)=0$ for all~$v \in N$. Furthermore, we set~$t' := t - \sum_{v \in N} f_v(\emptyset)$. Obviously,~$\Fa'$ and~$t'$ can be computed in~$\Oh(|\Fa|)$ time by iterating over the two-dimensional array representing~$\Fa$. Moreover,~$t' \geq 0$ since~$t \geq \sum_{v \in N} f_v(\emptyset)$. We next show that~$A \subseteq N \times N$ is a nice solution for~$I$ if and only if~$A$ is a nice solution for~$I'$.

$(\Rightarrow)$ Let~$A$ be a nice solution for~$I$. Obviously,~$(N,A)$ is a DAG and the sparsity constraint is satisfied. 
Furthermore, we have
\begin{align*}
\sum_{v \in N} f'_v(P^A_v) &\geq \sum_{v \in N} (f_v(P^A_v) - f_v(\emptyset))\\
& \geq t - \sum_{v \in N} f_v(\emptyset) = t'.
\end{align*}
It remains to show that~$A$ is nice for~$I'$. To this end, let~$f'_v(P^A_v) \leq f'_v(\emptyset)$. We conclude~$f'_v(P^A_v)=0$ and therefore~$f_v(P^A_v) \leq f_v(\emptyset)$. Since~$A$ is nice for~$I$, we conclude~$P^A_v=\emptyset$. Hence,~$A$ is a nice solution for~$I'$.

$(\Leftarrow)$ Conversely, let~$A$ be nice for~$I'$. We show that~$A$ is a nice solution for~$I$. Obviously~$(N,A)$ is a DAG and the sparsity constraint is satisfied. Hence, it remains to show that~$\sum_{v \in N} f_v(P^A_v) \geq t$ and that~$A$ is nice for~$I$. 

To this end, we first show that for every~$v \in N$ it holds that~$f'_v(P^A_v) = f_v(P^A_v) - f_v(\emptyset)$. Assume towards a contradiction that there exists some~$v \in N$ such that~$f'_v(P^A_v) \neq  f_v(P^A_v) - f_v(\emptyset)$. It then follows by the definition of~$\Fa'$, that~$f_v(P^A_v) < f_v(\emptyset)$ and that~$f_v'(P^A_v)=0$. Note that~$f_v'(P^A_v)=0$ implies~$f_v'(P^A_v) \leq f_v'(\emptyset)$ and therefore~$P^A_v=\emptyset$ since~$A$ is nice for~$I'$. This contradicts the fact that~$f_v(P^A_v) < f_v(\emptyset)$.

Since~$f'_v(P^A_v) = f_v(P^A_v) - f_v(\emptyset)$ for every~$v \in N$ the sum of the local scores is
\begin{align*}
\sum_{v \in N} f_v(P^A_v) &= \sum_{n \in N} (f'_v(P^A_v) + f_v(\emptyset))\\
& \geq t' + \sum_{n \in N} f_v(\emptyset) = t.
\end{align*}
To show that~$A$ is nice for~$I$, let~$f_v(P^A_v) \leq f_v(\emptyset)$. By the construction of~$\Fa$, this implies that~$f'_v(P^A_v)=0$ and therefore~$f'_v(P^A_v) = f'_v(\emptyset)$. Since~$A$ is nice for~$I'$ we conclude~$P^A_v= \emptyset$. Hence,~$A$ is a nice solution for~$I$. $\hfill \Box$
\end{proof}

\paragraph{Potential Parent Sets.} Given an instance~$I:=(N,\Fa,t,k)$ and some~$v \in N$, we denote the \emph{potential parent sets of~$v$} by~$\mathcal{P}_\Fa (v) := \{ P \subseteq N \setminus \{v\}: f_v(P)>0\} \cup \{\emptyset\}$, which are exactly the parent sets stored in~$\Fa$. If~$\Pi$ is monotone, we can assume by Proposition~\ref{Prop: Potential Parents} that in a solution~$A$ of~$I$, every vertex~$v$ has a parent set~$P^A_v \in \mathcal{P}_\Fa (v)$. An important measurement for the running times of our algorithms  is the maximum number of potential parent sets~\iflong{}$\delta_\Fa$ which is formally defined by \fi$\delta_\Fa := \max_{v \in N} |\mathcal{P}_\Fa (v)|$~\cite{OS13}. Given a vertex~$v \in N$, we can iterate over all potential parent sets of~$v$ and the vertices in these sets in~$\Oh(\delta_\Fa \cdot n)$ time.  

Another tool for designing algorithms for~BNSL problems is the superstructure~\cite{OS13}. \iflong Let~$N$ be a vertex set with local scores~$\Fa$. \fi The \emph{superstructure} of~$N$ and~$\Fa$ is the directed graph~$S_{\vec{\Fa}}=(N,A_{\Fa})$ with~$A_{\Fa}=\{(u,v) \mid \exists P \in \mathcal{P}_\Fa(v): u \in P\}$. \iflong Intuitively, there exists an arc~$(u,v) \in A_{\Fa}$ if and only if~$u$ is a potential parent of~$v$.  Given~$N$ and~$\Fa$, the superstructure~$S_{\vec{\Fa}}$ can be constructed in linear time. Throughout this work we let~$m := |A_{\Fa}|$ denote the number of arcs in the superstructure. Note that~$m < n^2$.\fi

 \subsection{Parameterized Complexity} In parameterized complexity~\cite{CFKLMPPS15} one measures the running time of algorithms depending on the total input size and a problem parameter.
 A \emph{parameterized problem} is a language~$L \subseteq \Sigma \times \mathds{N}_0$ over a finite alphapet~$\Sigma$. For an instance~$(I,k)$ of~$L$ we call~$k$ the \emph{parameter}.
 A parameterized problem~$L$ \emph{has an~XP-time algorithm} if for every instance~$(I,k)$ it can be decided in~$\Oh(|I|^{f(k)})$~time for a computable function~$f$ whether~$(I,k) \in L$. That is,  the problem is solvable in polynomial~time when the parameter is constant. A parameterized problem~$L$ is called~\emph{fixed-parameter tractable~(FPT)} if for every instance~$(I,k)$ it can be decided in $f(k) \cdot |I|^{\Oh(1)}$~time for a computable function~$f$ whether~$(I,k) \in L$. A \emph{problem kernelization} for a parameterized problem~$L$ is a polynomial-time preprocessing. Given an instance~$(I,k)$ of~$L$, a problem kernelization computes an equivalent instance~$(I',k')$ of~$L$ in polynomial time such that~$|I'|+k' \leq g(k)$ for some computable function~$g$. If~$g$ is a polynomial, then~$L$ \emph{admits a polynomial kernel}. Some fixed-parameter tractable problems are known to not admit a polynomial kernel unless~\badstuffhappens.


 A parameterized reduction maps an instance~$(I,k)$ of some parameterized problem~$L$ in FPT time to an equivalent instance~$(I',k')$ of a parameterized problem~$L'$ such that~$k' \leq f(k)$ for some computable funktion~$f$. If the reduction runs in~$|I|^{\Oh(1)}$ time and~$f$ is a polynomial, then the parameterized reduction is called~\emph{polynomial parameter transformation}. If there exists a parameterized reduction from a W[$i$]-hard parameterized problem~$L$ to a parameterized problem~$L'$, then~$L'$ is also~W[$i$]-hard for~$i \in \mathds{N}$. If a problem is W[$i$]-hard, then it is assumed to be \emph{fixed-parameter intractable}. If there exists a polynomial parameter transformation from a parameterized problem~$L$ to a parameterized problem~$L'$ and~$L$ does not admit a polynomial kernel unless~\badstuffhappens, then~$L'$ does not admit a polynomial kernel unless~\badstuffhappens~\cite{BTY11}.
 
\section{BNSL with Bounded Vertex Cover Number}
We first study the task of learning Bayesian network structures with a bounded vertex cover number. In the framework of constrained BNSL problems, these are the problems~\textsc{$(\Pi_0+v)$-Skeleton BNSL} and~\textsc{$(\Pi_0+v)$-Moral BNSL}, where~$\Pi_0$ is the class of edgeless graphs. Note that~$\Pi_0$ is monotone. Korhonen and Parviainen~\cite{KP15} analyzed the parameterized complexity of~\textsc{$(\Pi_0+v)$-Moral BNSL} for parameter~$k$. In their work, they provided an XP-time algorithm and proved W[1]-hardness. We adapt their approach to obtain an XP-time algorithm for~\textsc{$(\Pi_0+v)$-Skeleton BNSL}. Furthermore, we show a slightly stronger hardness result for both~problems.

\subsection{An XP-time Algorithm for Skeletons with Small Vertex Cover}

The XP-time algorithm for~\textsc{$(\Pi_0+v)$-Skeleton BNSL} follows the basic idea of the XP-time algorithm for~\textsc{$(\Pi_0+v)$-Moral BNSL}~\cite{KP15}: First, iterate over every possible choice of the vertex cover~$S$ and then split the arc set into two parts which are the arcs between~$S$ and the parents of~$S$ and the arcs between~$S$ and the children of~$S$. These two arc sets can be learned and combined independently.

However, we would like to point out that our algorithm for learning a network with bounded vertex cover number in the skeleton differs from the moralized version in one technical point. In the moralized graph, every vertex of a vertex cover~$S$ has at most one parent outside~$S$. For~\textsc{$(\Pi_0+v)$-Moral BNSL} this can be exploited to find the arcs between the vertices of~$S$ and their parents. However, this does not hold for the skeleton: Consider a star where all the arcs are directed towards a center. In this case, the central vertex forms a minimum vertex cover but the vertex has many parents. In the moralized graph, such star becomes a clique and the vertex cover number is large. To overcome this issue, we split the resulting network into three disjoint arc sets: The incoming arcs of vertices of~$S$, the incoming arcs of parents~$Q$ of vertices of~$S$, and the incoming arcs of the remaining vertices. 

In summary, the intuitive idea behind the algorithm is to find the vertex cover~$S$ and all parent sets of vertices in~$S$ via bruteforce. For each choice, we compute two further arc sets and combine them all to a solution of~\textsc{$(\Pi_0+v)$-Skeleton BNSL}. To find the incoming arcs of parents of~$S$, we adapt a dynamic programming algorithm for~\textsc{Vanilla-BNSL}~\cite{OM03,SM06}. With the next two lemmas, we formalize how our solution is built from disjoint arc sets.

\begin{lemma} \label{Lemma: VC Segmentation -> DAG}
Let~$(N,\Fa,t,k)$ be an instance of~\textsc{$(\Pi_0+v)$-Skeleton BNSL}, and let~$S$ and~$Q$ be disjoint subsets of~$N$. Furthermore, let there be arc sets~$B_1 \subseteq (Q \cup S) \times S$,~$B_2\subseteq S \times Q$, and~$B_3 \subseteq S \times (N \setminus (S \cup Q))$. If~$D':=(S \cup Q, B_1 \cup B_2)$ is a DAG where~$S$ is a vertex cover of~$\Sk(D')$, then
\begin{enumerate}
\item[a)] $D:=(N, A)$ with~$A:= B_1 \cup B_2 \cup B_3$ is a DAG,
\item[b)] $S$ is a vertex cover of~$\Sk(D)$, and
\item[c)] $\sum_{v \in N} f_v(P^A_v) = \sum_{v \in S} f_v(P^{B_1}_v) + \sum_{v \in Q} f_v(P^{B_2}_v) + \sum_{v \in N \setminus (S \cup Q)} f_v(P^{B_3}_v)$.
\end{enumerate}
\end{lemma}

\begin{proof}
Consider Statement~$a)$. Observe that if~$(v,w) \in B_3$, then~$w$ is a sink in~$D$. Together with the fact that~$D'$ is a DAG, this implies that~$D$ is a DAG.
Moreover, Statement~$b)$ holds, since every arc in~$A$ has at least one endpoint in~$S$.
For Statement~$c)$, observe that~$S$,~$Q$,~and~$(N \setminus (S \cup Q))$ form a partition of~$N$, and thus, every~$v \in N$ has incoming arcs from either~$B_1$,~$B_2$, or~$B_3$. $\hfill \Box$
\end{proof}

\begin{lemma} \label{Lemma: VC DAG -> Segmentation}
Let~$D:=(N,A)$ be a DAG such that~$S \subseteq N$ is a vertex cover in~$\Sk(D)$. Then, there exists a set~$Q \subseteq N \setminus S$ and arc sets~$B_1 \subseteq (Q \cup S) \times S$,~$B_2\subseteq S \times Q$, and~$B_3 \subseteq S \times (N \setminus (S \cup Q))$ that form a partition of~$A$. Moreover, every vertex in~$Q$ has a child in~$S$.
\end{lemma}

\begin{proof}
We set~$Q:= \{ v \in N \setminus S \mid v \text{ has a child in }S\}$. Then, every vertex in~$Q$ has a child in~$S$ by definition. Furthermore, we set~$B_1 := ((Q \cup S) \times S) \cap A$,~$B_2 := (S \times Q) \cap A$, and~$B_3 := (S \times (N \setminus (S \cup Q))) \cap A$.

Obviously,~$B_1 \cup B_2 \cup B_3 \subseteq A$, and the sets are pairwise disjoint, since~$S$,~$Q$, and~$N \setminus (S \cup Q)$ are disjoint subsets of~$N$. It remains to show that~$B_1 \cup B_2 \cup B_3 \supseteq A$. To this end, let~$(v,w) \in A$. If~$w \in S$, then~$v$ has a child in~$S$. Consequently,~$v \in S \cup Q$ and therefore,~$(v,w) \in B_1$. Otherwise, if~$w \not \in S$, then~$v \in S$, since~$S$ is a vertex cover of~$\Sk(D)$. Therefore,~$(v,w) \in B_2 \cup B_3$. $\hfill \Box$
\end{proof}

Intuitively, the algorithm works as follows: We iterate over all possible choices of~$S$,~$Q$, and~$B_1$. Then, for each such choice, we compute~$B_2$ and~$B_3$ that maximize the sum of local scores for~$A:=B_1 \cup B_2 \cup B_3$. In the following, we describe how to compute~$B_2$ when~$S$,~$Q$, and~$B_1$ are given. This step is the main difference between this algorithm and the XP-time algorithm for~\textsc{$(\Pi_0+v)$-Moral BNSL}~\cite{KP15}.

\begin{proposition} \label{Prop: Compute B_2}
Let~$I:=(N,\Fa,t,k)$ be an instance of~\textsc{$(\Pi_0+v)$-Skeleton BNSL}, and let~$S$ and~$Q$ be disjoint subsets of~$N$. Furthermore, let~$B_1 \subseteq (Q \cup S) \times S$ be an arc set such that~$(Q \cup S, B_1)$ is a DAG and every~$w \in Q$ has a child in~$S$. Then, we can compute an arc set~$B_2$ that maximizes~$\sum_{v \in Q} f_v(P^{B_2}_v)$ among all arc sets where~$(Q \cup S, B_1 \cup B_2)$ is a DAG and~$B_2 \subseteq S \times Q$ in~$2^{|S|} \cdot |I|^{\Oh(1)}$ time.
\end{proposition}

\begin{proof}
We describe a dynamic programming algorithm. 

\textit{Intuition.} Before we present the algorithm, we provide some intuition. Given a subset~$S' \subseteq S$ and the set~$Q' \subseteq Q$ containing parents of vertices in~$S$, we want to compute an arc-set~$B \subseteq S' \times Q'$ such that the sum of local scores for the arc-set~$B_1 \cup B$ is maximized. This is done by recursively choosing a vertex~$v \in S'$ that is a sink in the resulting DAG and letting all~$w \in Q'$ whose only child is~$v$ choose their best possible parent set in~$S' \setminus \{v\}$.

\textit{Algorithm.} To describe the algorithm, we introduce some notation. Given some~$w \in Q$, we let~$C^{B_1}_w$ denote the set of children of~$w$ in~$(S \cup Q, B_1)$. Note that~$C^{B_1}_w \subseteq S$ for all~$w \in Q$. Given a subset~$S' \subseteq S$, we let~$Q(S'):= Q \cap (\bigcup_{v \in S'} P^{B_1}_v)$ denote the set of parents of vertices in~$S'$ that belong to~$Q$, and~$D(S')$ denote the DAG with vertex set~$S' \cup Q(S')$ and arc set~$B_1 \cap ((S' \cup Q(S')) \times S')$. Furthermore, given~$S' \subseteq S$ and~$v \in S'$, we let
\begin{align*}
X(S',v) := \{w \in Q(S') \mid C^{B_1}_w \cap S' = \{v\}\}
\end{align*}
denote the vertices of~$Q(S')$ whose only child in~$S'$ is~$v$. Finally, given~$S' \subseteq S$ and~$w \in Q$, we define~$\widehat{f}_w(S'):= \max_{S'' \subseteq S'} f_w(S'')$ as the best possible score for a parent set of~$w$ containing only vertices from~$S'$. The values~$\widehat{f}_w(S')$ for all~$S' \subseteq S$ and~$w\in Q$ can be computed in overall~$2^{|S|} \cdot |I|^{\Oh(1)}$~time~\cite{OM03}.

The dynamic programming table~$T$ has entries of the type~$T[S']$ where~$S' \subseteq S$. Each entry stores the score of the best possible arc set~$B \subseteq S' \times Q(S')$ such that~$(S \cup Q, B_1 \cup B)$ is a DAG. For one-element sets~$\{v\} \subseteq S$, we set~$T[\{v\}]:= \sum_{w \in P_v^{B_1}} f_w(\emptyset)$. Note that, due to Proposition~\ref{Prop: Translation} we may assume that~$T[\{v\}]=0$. The recurrence to compute an entry for~$S'$ with~$|S'|>1$ is
\begin{align*}
T[S'] := \max_{\substack{v \in S' \\ v\text{ is a sink in }D(S')}} \left( T[S' \setminus \{v\}] + \sum_{w \in X(S',v)} \widehat{f}_w (S' \setminus \{v\}) \right).
\end{align*}

The score of the best possible arc set~$B_2 \subseteq S \times Q$ such that~$(S \cup Q, B_1 \cup B_2)$ is a DAG can be computed by evaluating~$T[S]$. The corresponding arc set can be found via traceback. The correctness proof is straightforward and thus omitted.

\textit{Running Time.} Recall that all values~$\widehat{f}_w(S')$ with~$S' \subseteq S$ and~$w \in Q$ can be computed in~$2^{|S|} \cdot |I|^{\Oh(1)}$ time. The dynamic programming table has~$2^{|S|}$ entries and each entry can be computed in~$|I|^{\Oh(1)}$~time. Thus, the overall running time is~$2^{|S|} \cdot |I|^{\Oh(1)}$ as claimed. $\hfill \Box$
\end{proof}

We now present the XP-time algorithm for~\textsc{$(\Pi_0+v)$-Skeleton BNSL}. This algorithm uses the algorithm behind Proposition~\ref{Prop: Compute B_2} as a subroutine.

\begin{theorem} \label{Theorem: VC XP Algo}
\textsc{$(\Pi_0+v)$-Skeleton BNSL} can be solved in~$(n\delta_\Fa)^k \cdot 2^k \cdot |I|^{\Oh(1)}$~time.
\end{theorem}

\begin{proof}
\textit{Algorithm.} Let~$I:= (N,\Fa,t,k)$ be an instance of~\textsc{$(\Pi_0+v)$-Skeleton BNSL}. The following algorithm decides whether~$I$ is a yes-instance or a no-instance: First, iterate over every possible choice of a vertex set~$S$ with~$|S| \leq k$ forming the vertex cover of the skeleton of the resulting network. For each choice of~$S$ iterate over every choice of potential parent sets for the vertices of~$S$. Let~$B_1$ be the corresponding arc set, and let~$Q$ be the set of parents of~$S$ in~$(N,B_1)$. For each choice of~$S$ and~$B_1$, do the following:
\begin{enumerate}
\item[•] Use the algorithm behind Proposition~\ref{Prop: Compute B_2} to compute an arc set~$B_2 \subseteq S \times Q$ that maximizes~$\sum_{v \in Q} f_v(P^{B_2}_v)$ among all arc sets where~$(Q \cup S, B_1 \cup B_2)$ is a DAG.
\item[•] For every~$v \in N\setminus (S \cup Q)$, compute a potential parent set that maximizes~$f_v(P)$ among all potential parent sets with~$P \subseteq S$. Let~$B_3 \subseteq S \times (N\setminus (S \cup Q))$ be the resulting arc set.
\item[•] If~$\sum_{v \in S} f_v(P^{B_1}_v) + \sum_{v \in Q} f_v(P^{B_2}_v) + \sum_{v \in N \setminus (S \cup Q)} f_v(P^{B_3}_v) \geq t$, then return~\emph{yes}.
\end{enumerate}
If for none of the choices of~$S$ and~$B_1$ the answer~\emph{yes} was returned, then return~\emph{no}.

\textit{Running Time.} First, we discuss the running time of the algorithm. Since~$|S| \leq k$, there are~$\Oh(n^k)$ choices for~$S$ and~$\Oh({\delta_\Fa}^k)$ choices for~$B_1$. For each such choice, the algorithm behind Proposition~\ref{Prop: Compute B_2} can be applied in~$2^k \cdot |I|^{\Oh(1)}$ time and the choice of the parent sets of vertices in~$N\setminus (S \cup Q)$ can be done in~$|I|^{\Oh(1)}$ time. This gives an overall running time of~$(n\delta_\Fa)^k \cdot 2^k \cdot |I|^{\Oh(1)}$ as claimed.

\textit{Correctness.} Second, we show that the algorithm returns~\emph{yes} if and only if~$I$ is a yes-instance.

$(\Rightarrow)$ Suppose the algorithm returns~\emph{yes}. Then, there exist disjoint subsets~$S$ and~$Q$ of~$N$, with~$|S| \leq k$ and arc sets~$B_1 \subseteq (S \cup Q) \times S$,~$B_2 \subseteq S \times Q$, and~$B_3\subseteq S \times N\setminus (S \cup Q)$ such that~$(Q \cup S, B_1 \cup B_2)$ is a DAG. Due to Lemma~\ref{Lemma: VC Segmentation -> DAG},~$D:=(N,A)$ with~$A:=B_1 \cup B_2 \cup B_3$ is a DAG and~$S$ is a vertex cover of~$\Sk(D)$. Moreover, $\sum_{v \in N} f_v(P^A_v) \geq t$ and therefore,~$I$ is a yes-instance.

$(\Leftarrow)$ Let~$I$ be a yes-instance. Then, there exists an~$(N,\Fa,t)$-valid arc set~$A$ such that the skeleton of~$D:=(N,A)$ has a vertex cover~$S$ of size at most~$k$. By Lemma~\ref{Lemma: VC DAG -> Segmentation}, there exists a set~$Q\subseteq N \setminus S$ and arc sets~$B_1 \subseteq (S \cup Q) \times S$,~$B_2 \subseteq S \times Q$, and~$B_3\subseteq S \times N\setminus (S \cup Q)$ that form a partition of~$A$. Since the algorithm iterates over all choices of~$S$ and~$B_1$ with~$|S| \leq k$, it considers~$S$ and~$B_1$ at some point. For this choice of~$S$ and~$B_1$ the algorithm then computes an arc set~$B'_2 \subseteq S \times Q$ with
\begin{align*}
\sum_{v \in Q} f_v(P^{B'_2}_v) \geq  \sum_{v \in Q} f_v(P^{B_2}_v)
\end{align*}
and an arc set~$B'_3\subseteq S \times N\setminus (S \cup Q)$ with
\begin{align*}
\sum_{v \in N\setminus (S \cup Q)} f_v(P^{B'_3}_v) \geq  \sum_{v \in N\setminus (S \cup Q)} f_v(P^{B_3}_v).
\end{align*}
Then, since the sum of the local scores under~$A$ is at least~$t$, the algorithm returns~\emph{yes}. $\hfill \Box$
\end{proof}

\subsection{W[2]-hardness for Skeletons with Small Vertex Cover}

We complement the XP-time algorithm from the previous subsection by proving~W[2]-hardness of~\textsc{$(\Pi_0+v)$-Skeleton BNSL}. Thus,~\textsc{$(\Pi_0+v)$-Skeleton BNSL} is not FPT for parameter~$k$ unless~$\text{W[2]}=\text{FPT}$. 
We show that the hardness also holds for the task of learning a Bayesian network where the skeleton has a a bounded dissociation number. Formally, this is~\textsc{$(\Pi_1+v)$-Skeleton BNSL}, with~$\Pi_1 := \{G \mid G \text{ has maximum degree }1\}$. Observe that~$\Pi_1$ is monotone.

\begin{theorem} \label{Theorem: Bounded-VC W[2]-h}
Let~$\Pi \in \{\Pi_0, \Pi_1\}$. Then, \textsc{$(\Pi+v)$-Skeleton BNSL} is W[2]-hard for~$k$ even when the superstructure is a DAG, the maximum parent set size is 1, and every local score is either 1 or 0.
\end{theorem}

\begin{proof}
We give a parameterized reduction from~\textsc{Set Cover}. In~\textsc{Set Cover}, one is given a universe~$U$, a family~$\mathcal{X}$ of subsets of~$U$, and an integer~$\ell$. The question is, whether there exists a a subfamily~$\mathcal{X}' \subseteq \mathcal{X}$ with~$|\mathcal{X}'| \leq \ell$ that covers~$U$. That is, every~$u \in U$ is contained in some set of~$\mathcal{X}'$. \textsc{Set Cover} is~W[2]-hard when parameterized by~$\ell$~\cite{CFKLMPPS15}. We first describe a parameterized reduction from~\textsc{Set Cover} to~\textsc{$(\Pi_0+v)$-Skeleton BNSL} and afterwards, we describe how this construction can be modified to obtain W[2]-hardness for~\textsc{$(\Pi_1+v)$-Skeleton BNSL}.

\textit{Construction.} Let~$(U, \mathcal{X}, \ell)$ be an instance of~\textsc{Set Cover}. We describe how to construct an equivalent instance~$I:=(N,\Fa,t,k)$ with~$k=\ell$. First, we set~$N:=U \cup \{v_X \mid X \in \mathcal{X}\}$. Next, we define the local scores~$\Fa$. All local scores are either 1 or 0. For every~$u \in U$ we set~$f_u(P)=1$ if and only if~$P=\{v_X\}$ for some~$X \in \mathcal{X}$ that contains~$u$. Furthermore, for every~$v \in \{v_X \mid X \in \mathcal{X}\}$, we set~$f_v(P)=0$ for every~$P$. To finish the construction, we set~$k:=\ell$ and~$t:=|U|$.

Observe that for every arc~$(u,v)$ of the superstructure, we have~$u \in U$ and~$v \in \{v_X \mid X \in \mathcal{X}\}$. Consequently, the super structure is a DAG. Furthermore, by the construction of~$\Fa$, the maximum parent set size is~$1$.

\textit{Intuition.} Before we show the correctness, we provide some intuition. To obtain a score of~$t=|U|$, every vertex in~$U$ has to choose one parent vertex. The chosen parent vertices correspond to the subfamily~$\mathcal{X}' \subseteq \mathcal{X}$ that covers~$U$. The vertex cover constraint on the network ensures that~$\mathcal{X}'$ has size at most~$k$.

\textit{Correctness.} We show that~$(U, \mathcal{X}, \ell)$ is a yes-instance of \textsc{Set Cover} if and only if~$I$ is a yes-instance of~\textsc{$(\Pi_0+v)$-Skeleton BNSL}.

$(\Rightarrow)$ Let~$\mathcal{X}' \subseteq \mathcal{X}$ be a subfamily of size at most~$k$ that covers~$U$. Then, for every~$u \in U$, there exists some set~$X^u \in \mathcal{X}'$ that contains~$u$. We define~$A:=\{(v_{X^u},u) \mid u \in U\}$ and show that~$A$ is a solution of~$I$.

Consider the skeleton~$\Sk(N,A)$. Each connected component of~$\Sk(N,A)$ is either an isolated vertex or a star consisting of a central vertex from~$\{v_X \mid X \in \mathcal{X}'\}$ and leaf vertices from~$U$. Thus,~$(N,A)$ is a~DAG and~$\{v_X \mid X \in \mathcal{X}'\}$ is a vertex cover of the skeleton. Thus,~$\Sk(N,A) \in \Pi_0+kv$, since~$|\mathcal{X}'| \leq k$. Moreover, observe that~$f_u(P^A_u)=1$ for every~$u \in U$. Therefore,~$A$ is~$(N,\Fa,t)$-valid.

$(\Leftarrow)$ Conversely, let~$A$ be an~$(N,\Fa,t)$-valid arc set such that~$\Sk(N,A)$ has a vertex cover of size at most~$k$. Then, since~$t=|U|$, we have~$f_u(P^A_u)=1$ for every~$u \in U$. Thus, for every~$u \in U$ we have~$P^A_u=\{v_X\}$ for some~$X \in \mathcal{X}$ containing~$u$. We define~$\mathcal{X}' := \{ X \in \mathcal{X} \mid P^A_u =\{v_X\} \text{ for some }u \in U \}$.

We first show that~$\mathcal{X}'$ covers~$U$. Let~$u \in U$. Then,~$P^A_u= \{v_X\}$ for some~$X$ containing~$u$ and therefore~$X \in \mathcal{X}'$. Thus,~$\mathcal{X}'$ covers~$U$. It remains to show that~$|\mathcal{X}'| \leq k$. Assume towards a contradiction that~$|\mathcal{X}'| > k$. Then, there exist pairwise distinct vertices~$u_1, \dots, u_{k+1}$ in~$U$ and~$v_1, \dots, v_{k+1}$ in~$\{v_X \mid X \in \mathcal{X}'\}$ such that~$(v_i,u_i) \in A$ for~$i\in \{1, \dots, k+1\}$. Then, the edges~$\{v_i,v_i\}$ form a matching of size~$k+1$ in~$\Sk(N,A)$. This contradicts the fact that~$\Sk(N,A)$ has a vertex cover of size at most~$k$.

\textit{BNSL with bounded Dissociation Number.} We now explain how to modify the construction described above, to obtain W[2]-hardness for~\textsc{$(\Pi_1+v)$-Skeleton BNSL} when parameterized by~$k$.

In the construction, we set~$N:= U \cup \{v_X \mid X \in X\} \cup \{w_X \mid X \in X\}$. As in the construction described above, for~$u\in U$ we set~$f_u(P) := 1$ if and only if~$P=\{v_X\}$ for some~$X \in \mathcal{X}$ containing~$u$, and for~$v \in \{v_X \mid X \in X\}$ we set~$f_v(P):=0$ for every~$P$. Additionally, for every~$w_X$, we set~$f_{w_X}(P):=1$ if and only if~$P=\{v_X\}$. Furthermore, we set~$k:=\ell$ and~$t:=|U|+|\mathcal{X}|$

$(\Rightarrow)$ Let~$\mathcal{X}' \subseteq \mathcal{X}$ be a subfamily with~$|\mathcal{X}'| \leq k$ that covers~$U$. We set~$A:= \{(v_{X^u},u) \mid u \in U\} \cup \{(v_X, w_X) \mid X \in \mathcal{X}\}$. Then,~$(N,A)$ is a DAG and the sum of the local scores is~$t$. Furthermore, the connected components of~$\Sk(N,A)$ are isolated edges or disjoint stars with central vertex in~$\{v_X \mid X \in \mathcal{X}'\}$. Then, $|\mathcal{X}'| \leq k$ implies~$\Sk(N,A) \in \Pi_1 +kv$.

$(\Leftarrow)$ Let~$A$ be a solution of~$I$. Again, we define~$\mathcal{X}' := \{ X \in \mathcal{X} \mid P^A_u =\{v_X\} \text{ for some }u \in U \}$, which  covers~$U$ by the same arguments as above. Note that the skeleton of~$(N,A)$ contains an edge~$\{v_X, w_X \}$ for every~$X \in \mathcal{X}$, since the sum of local scores under~$A$ is at least~$t$. Then, assuming~$|\mathcal{X}'| > k$ implies that there exist~$k+1$ vertex disjoint sets~$\{u,v_X,w_X\}$ where~$v_X$  is adjacent with~$u$ and~$w_X$ in~$\Sk(N,A)$. This contradicts the fact that~$\Sk(N,A)$ has a dissociation set of size at most~$k$. $\hfill \Box$
\end{proof}

Observe that for a DAG~$D:=(N,A)$ where each vertex has at most one parent, the skeleton~$\Sk(D)$ and the moralized graph~$\Mo(D)$ are the same. Thus, Theorem~\ref{Theorem: Bounded-VC W[2]-h} also implies W[2]-hardness if the sparsity constraints are posed on the moralized graph. Note that a~W[1]-hardness for~\textsc{$(\Pi_0+v)$-Moral BNSL} when parameterized by~$k$ has been shown~\cite{KP15}. We now obtain a slightly stronger hardness result with an additional restriction on the maximum parent set size.

\begin{corollary} \label{Cor: Bounded-VC W[2]-h}
Let~$\Pi \in \{\Pi_0, \Pi_1\}$. Then, \textsc{$(\Pi+v)$-Moral BNSL} is W[2]-hard for~$k$ even when the superstructure is a DAG, the maximum parent set size is 1, and every local score is either 1 or 0.
\end{corollary}

By Corollary~\ref{Cor: Bounded-VC W[2]-h} it is presumably not possible that~\textsc{$(\Pi_1+v)$-Moral BNSL} is FPT for~$k$. 

Consider networks where the maximum parent set size is 1. These networks are also known as \emph{branchings}. Learning a branching without further restrictions can be done in polynomial time~\cite{CL68,GKLOS15}. Due to Theorem~\ref{Theorem: Bounded-VC W[2]-h} and Corollary~\ref{Cor: Bounded-VC W[2]-h}, there is presumably no such polynomial-time algorithm if we add a sparsity constraint on the vertex cover size. Thus, the task to learn a branching is an example where it is harder to learn a more restricted~network.

\section{BNSL with Bounded Dissociation Number} \label{Section: DissNo}
In this section we provide an algorithm for \textsc{$(\Pi_1+v)$-Moral BNSL}, that is, for Bayesian network learning where the moralized graph has dissociation number at most~$k$. By the results above, an FPT algorithm for~$k$ is unlikely. We show that it can be solved in~XP-time when parameterized by~$k$. As detailed in the introduction, this shows that we can find optimal networks for a class of moral graphs that is larger than the ones with bounded vertex cover number, while maintaining the highly desirable property that the treewidth is bounded. In fact, graphs with dissociation number at most~$k$ have treewidth at most~$k+1$ and thus the Bayesian inference task can be preformed efficiently if~$k$ is small~\cite{D09}.

Before we describe the main idea of the algorithm, we provide the following simple observation \iflong about Bayesian networks whose moralized graph has a bounded dissociation number\fi.

\begin{proposition} \label{Prop: <= 2|S| ancestors}
Let~$D=(N,A)$ be a DAG and~$S \subseteq N$ be a dissociation set of~$\Mo(D)$. Then, at most~$2|S|$ vertices in~$N\setminus S$ have descendants in~$S$.
\end{proposition}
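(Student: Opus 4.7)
The plan is to construct a map $f\colon T \to S$, where $T := \{v \in N \setminus S : v \text{ has a descendant in } S\}$, and to show $|f^{-1}(s)| \leq 2$ for every $s \in S$; summing over $s$ will then give $|T| = \sum_{s \in S} |f^{-1}(s)| \leq 2|S|$. For each $v \in T$, I define $f(v)$ to be the endpoint (in $S$) of some shortest directed path from $v$ into $S$ in $D$.

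The first step is a length bound on such a shortest path. Let $v_0 \to v_1 \to \dots \to v_\ell$ be a shortest path with $v_0 = v$ and $v_\ell \in S$; by minimality every $v_0, \dots, v_{\ell-1}$ lies in $N \setminus S$. If $\ell \geq 3$, then $v_1 \in N \setminus S$ is adjacent in $\Mo(D)$ to both $v_0$ and $v_2$ (via the two consecutive arcs of the path), and both $v_0, v_2 \in N \setminus S$. Hence $v_1$ would have degree at least $2$ in $\Mo(D) - S$, contradicting that $S$ is a dissociation set. So the path has length at most~$2$.

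Next I fix $s \in S$ and bound $|f^{-1}(s)|$. Let $P := P^A_s \cap (N \setminus S)$. Any two vertices of $P$ share the common child $s$ in $D$, so they are joined by a moral edge in $\Mo(D) - S$; thus $P$ induces a clique in $\Mo(D) - S$, and the degree-$1$ constraint forces $|P| \leq 2$. A vertex of $f^{-1}(s)$ is either a length-$1$ predecessor (hence an element of $P$) or a length-$2$ predecessor of the form $v \to p \to s$, in which case $v$ is a parent of some $p \in P$ lying in $N \setminus S$, i.e., a neighbor of $p$ in $\Mo(D) - S$. If $|P| = 2$ with $P = \{p_1,p_2\}$, then $p_1$ and $p_2$ are each other's unique neighbors in $\Mo(D) - S$, so no length-$2$ predecessor outside $P$ can exist, and $|f^{-1}(s)| \leq |P| \leq 2$. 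If $|P| \leq 1$, then the length-$2$ predecessors are restricted to the at-most-one neighbor of the unique $p \in P$ in $\Mo(D) - S$, again giving $|f^{-1}(s)| \leq 2$.

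The delicate point is this last case analysis. A naive bound $|f^{-1}(s)| \leq |P| + \sum_{p \in P} \deg_{\Mo(D)-S}(p) \leq 4$ would only yield the weaker $|T| \leq 4|S|$. Obtaining the optimal factor of $2$ requires the observation that when $|P|=2$ the two elements of $P$ already saturate each other's degree budget in $\Mo(D) - S$, so no length-$2$ predecessors outside $P$ can contribute to $f^{-1}(s)$. Once this subtlety is handled, summing $|f^{-1}(s)| \leq 2$ over all $s \in S$ completes the proof.
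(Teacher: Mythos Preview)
Your proof is correct and follows essentially the same approach as the paper: both arguments perform a case analysis on~$|P^A_s \setminus S|$ for each~$s \in S$ and use the degree-one constraint in~$\Mo(D)-S$ to show that at most two vertices of~$N\setminus S$ can reach~$s$ via a path staying outside~$S$. Your explicit framing via a shortest-path map~$f$ and the resulting length bound~$\ell\le 2$ is a clean way to organize the argument; the paper instead defines ``external ancestors'' of each~$s$ directly and bounds their number, but the case distinctions and the key observation (that when~$|P|=2$ the two parents saturate each other's degree budget) are identical.
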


\iflong
\begin{proof}
Let~$v \in S$. We call a vertex~$w \in N \setminus S$ is an~\emph{external ancestor of~$v$} if there exists a path~$(w, w_1, \dots, w_\ell, v)$ in~$D$ such that $w_i \in N \setminus S$ for all~$i \in \{1, \dots, \ell\}$. We show that every vertex in~$S$ has at most~two external ancestors.

First, assume that~$v$ has three distinct parents~$w_1$, $w_2$, and~$w_3$ outside~$S$. Then, there are moral edges~$\{w_1, w_2\}$, $\{w_2, w_3\}$, and~$\{w_3, w_1\}$ forming a triangle outside~$S$ in~$\Mo(D)$. This contradicts the fact that~$S$ is a dissociation set of~$\Mo(D)$. Hence, every~$v \in S$ has at most two parents outside~$S$. Next, consider the following cases.

\textbf{Case 1:~$|P^A_v \setminus S| = 0$.} Then, $v$ has no external ancestors and nothing more needs to be shown.

\textbf{Case 2:~$|P^A_v \setminus S| = 1$.} Then, let~$P^A_v \setminus S = \{w\}$. Since~$S$ is a dissociation set of~$\Mo(D)$ it holds that~$\deg_{\Mo(D)-S}(w) \leq 1$. Hence,~$w$ has at most one parent~$w'$ outside~$S$. Moreover, since~$\deg_{\Mo(D)-S}(w') \leq 1$, the vertex~$w'$ has no parent in~$N \setminus S$. Therefore,~$v$ has at most two external ancestors.

\textbf{Case 3:~$|P^A_v \setminus S| = 2$.} Then, let~$P^A_v \setminus S = \{w_1, w_2\}$. Note that~$\{w_1, w_2\}$ is a moral edge in~$\Mo(D)$. Then, since~$\deg_{\Mo(D)-S}(w_1) \leq 1$ and~$\deg_{\Mo(D)-S}(w_2) \leq 1$, the vertices~$w_1$ and~$w_2$ do not have parents in~$N \setminus S$. Therefore,~$v$ has exactly two external ancestors. $\hfill \Box$
\end{proof}
\fi

The main idea of the algorithm for~\textsc{$(\Pi_1+v)$-Moral BNSL} presented in this work is closely related to XP-algorithms for \textsc{$(\Pi_0+v)$-Moral BNSL}~\cite{KP15} and \textsc{$(\Pi_0+v)$-Skeleton BNSL} (Theorem~\ref{Theorem: VC XP Algo}): If we know which vertices form the dissociation set~$S$ and the set~$Q$ of vertices that are the ancestors of~$S$, the arcs of the network can be found efficiently. Roughly speaking, the steps of the algorithm are to iterate over every possible choice of~$S$ and~$Q$ and then find the arc set of the resulting network respecting this choice. Finding the arc set can then be done in two steps: First, we find all the arcs between the vertices of~$S \cup Q$ and afterwards, we find the remaining arcs of the network. Even though the basic idea of the algorithm is similar to algorithms for BNSL with bounded vertex cover number, several obstacles occur when considering~$\Pi_1$ instead of~$\Pi_0$. 

First, the arcs between~$S \cup Q$ and the remaining arcs of the DAG cannot be computed independently, since there might be arcs between vertices of~$Q$ and~$N \setminus (Q \cup S)$. See Figure~\ref{Figure: ancestor extension and Suitable Arc Set} for an example of a  DAG~$D$ whose moralized graph has a dissociation set~$S$. We overcome this obstacle by partitioning~$Q$ into two sets~$Q_0$ and~$Q_1$ and by considering arc sets~$A_Q \subseteq (S \cup Q) \times (S \cup Q)$ that respect a specific constraint regarding this partition.

Second, the vertices in~$N \setminus (S \cup Q)$ cannot choose their parent sets greedily from~$S$, since they may also choose one parent from~$N \setminus S$. Thus, we need a new technique to find this part of the network. To overcome this obstacle, we define the problem \textsc{Basement Learning} and show that it can be solved in polynomial time.

This section is organized as follows: In Section~\ref{Subsection: Introduce Attic and Basement}, we introduce the terms of \emph{attic arc sets} and \emph{basement arc sets} which form the parts of the arc set that we later combine to a solution. In Sections~\ref{Subsection: Find Attic}, we describe how to find the attic arc set and in Section~\ref{Subsection: Find Basement}, we describe how to find the basement arc set. Finally, in Section~\ref{Subsection: Bounded Diss-No Algo}, we combine the previous results and describe how to solve~\textsc{$(\Pi_1+v)$-Moral BNSL} in XP-time. 

We end this section by showing another hardness result for~\textsc{$(\Pi_1+v)$-Moral BNSL}. Note that due to Corollary~\ref{Cor: Bounded-VC W[2]-h} it is unlikely that~\textsc{$(\Pi_1+v)$-Moral BNSL} is FPT for~$k$. In Section~\ref{Subsection: DissNoBNSL W[1]-h for k+t} we show that even for parameterization by~$k+t+\delta_\Fa+p$, where~$p$ denotes the maximum parent set size, it is unlikely to obtain an FPT algorithm for~\textsc{$(\Pi_1+v)$-Moral BNSL}.

\iflong
\subsection{Attic Arc Sets and Basement Arc Sets} \label{Subsection: Introduce Attic and Basement}
\fi

In this subsection we formally define \emph{attic arc sets} and \emph{basement arc sets}. As mentioned above, these are the two parts of the resulting network that our algorithm finds when the vertices of the dissociation set and their ancestor vertices are known. The intuitive idea behind the names~\emph{attic arc set} and~\emph{basement arc set} is that the dissociation set~$S$ forms the center of the network, the arcs between~$S$ and the ancestors of~$S$ form the upper part of the network (attic) and the remaining arcs from the lower part (basement) of the network. Figure~\ref{Figure: ancestor extension and Suitable Arc Set} shows a DAG~$D$ where the arcs are decomposed into an attic arc set and a basement arc set. 

Throughout this section, we let~$S$ denote the set of the vertices that form the dissociation set and we let~$Q$ denote the set of their ancestors. Furthermore, we assume that~$Q$ is partitioned into two sets~$Q_0$ and~$Q_1$. Intuitively, in the moralized graph of the resulting network, the vertices in~$Q_0$ have no neighbors in~$Q$ and the vertices in~$Q_1$ may have one neighbor in~$Q$.

\begin{figure}
\begin{center}
\begin{tikzpicture}[scale=0.85,yscale=0.7]
\tikzstyle{knoten}=[circle,fill=white,draw=black,minimum size=5pt,inner sep=0pt]
\tikzstyle{bez}=[inner sep=0pt]

\draw[rounded corners] (2, 0) rectangle (6, 1.4) {};
\node[knoten] (v1) at (3,0.7) {};
\node[knoten] (v2) at (4,0.7) {};
\node[knoten] (v3) at (5,0.3) {};
\node[knoten] (v4) at (5,1.1) {};

\draw[rounded corners] (-1, 0) rectangle (1.9, 1.4) {};
\node[knoten] (v5) at (0,0.7) {};
\node[knoten] (v6) at (1,0.7) {};

\draw (4,-1.5) ellipse (2cm and 0.7cm);
\node[knoten] (v7) at (2.4,-1.5) {};
\node[knoten] (v8) at (3,-1.5) {};
\node[knoten] (v9) at (3.6,-1.5) {};
\node[knoten] (v10) at (4.2,-1.5) {};
\node[knoten] (v11) at (4.8,-1.5) {};
\node[knoten] (v12) at (5.4,-1.5) {};

\draw[rounded corners] (-1.5, -4) rectangle (6.5, -3) {};
\node[knoten] (r1) at (-1,-3.5) {};
\node[knoten] (r2) at (0,-3.5) {};
\node[knoten] (r3) at (1,-3.5) {};
\node[knoten] (r4) at (2,-3.5) {};
\node[knoten] (r5) at (3,-3.5) {};
\node[knoten] (r6) at (4,-3.5) {};
\node[knoten] (r7) at (5,-3.5) {};
\node[knoten] (r8) at (6,-3.5) {};

\draw[->]  (v4) to (v3);
\draw[->, bend right=10]  (v1) to (v10);
\draw[->]  (v2) to (v10);
\draw[->, bend right=10]  (v11) to (v2);
\draw[->]  (v3) to (v12);
\draw[->, bend right]  (v5) to (v7);
\draw[->]  (v6) to (v8);
\draw[->]  (v10) to (v9);
\draw[->]  (v7) to (v8);

\draw[->, line width=1pt]  (v5) to (r1);
\draw[->, line width=1pt]   (r3) to (r2);
\draw[->, line width=1pt]   (v8) to (r3);
\draw[->, line width=1pt]   (v7) to (r2);
\draw[->, line width=1pt]   (v8) to (r4);
\draw[->, line width=1pt]   (v11) to (r7);
\draw[->, line width=1pt, bend right=10]   (v11) to (r8);
\draw[->, line width=1pt, bend left=10]   (v12) to (r8);
\draw[->, line width=1pt]   (r7) to (r6);

\draw[-, very thick, densely dotted]  (v1) to (v2);
\draw[-, very thick, densely dotted,  bend left=10]  (r3) to (v7);
\draw[-, very thick, densely dotted]   (v7) to (v6);
\draw[-, very thick, densely dotted]   (v11) to (v12);

\node[bez] (Q) at (7,0.7) {$Q$};
\node[bez] (Q0) at (-0.6,1.1) {$Q_0$};
\node[bez] (Q0) at (2.4,1.1) {$Q_1$};

\node[bez] (S) at (7,-1.5) {$S$};

\node[bez] (R) at (7,-3.5) {$R$};
\end{tikzpicture}
\end{center}
\caption{\small{A DAG~$D$ whose moralized graph has a dissociation set~$S$. The arc set of~$D$ is decomposed into an attic arc set~$A_Q$ and a basement arc set~$A_R$. The thin arrows correspond to the arcs of~$A_Q$ and the thick arrows correspond to the arcs of~$A_R$. The dotted edges are the moral edges.}}\label{Figure: ancestor extension and Suitable Arc Set}
\end{figure}
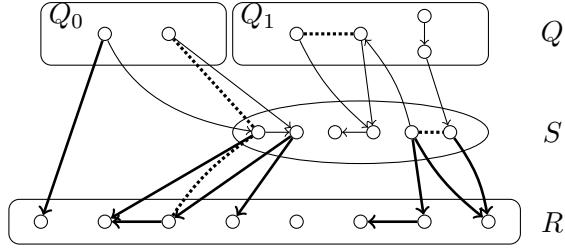

\begin{definition} \label{Def: ancestor extension}
Let~$N$ be a vertex set and let~$S$, $Q_0$, and~$Q_1$ be disjoint subsets of~$N$. An arc set~$A_Q$ is called \emph{attic arc set of~$S$, $Q_0$, and~$Q_1$}, if
\begin{enumerate}
\item[a)] $D_Q := (S \cup Q_0 \cup Q_1, A_Q)$ is a DAG, 
\item[b)] in the moralized graph~$\Mo(D_Q)$, no vertex of~$Q_0$ has neighbors outside~$S$, and every vertex of~$Q_1$ has at most one neighbor outside~$S$.
\end{enumerate}
\end{definition}

If~$S$,~$Q_0$, and~$Q_1$ are clear from the context we may refer to~$A_Q$ as \emph{attic arc set}. Throughout this section we use the following notation as a shorthand for some vertex sets: Given a vertex set~$N$ and disjoint subsets~$S$,~$Q_0$, and~$Q_1$ of~$N$, we let~$Q:=Q_0 \cup Q_1$, and we let~$R:= N \setminus (S \cup Q)$ denote the remaining vertices of~$N$. We next define basement arc sets. 

\begin{definition} \label{Def: suitable arc sets}
Let~$N$ be a vertex set and let~$S$,~$Q_0$, and~$Q_1$ be disjoint subsets of~$N$. An arc set~$A_R \subseteq (S \cup Q_0 \cup R) \times R$ is called \emph{basement arc set for~$S$,~$Q_0$, and~$Q_1$}~if~$A_R$ contains no self-loops and every~$w \in Q_0 \cup R$ has at most one incident arc in~$A_R \cap ((R\cup Q_0) \times R)$. 
\end{definition}

If~$S$,~$Q_0$, and~$Q_1$ are clear from the context we may refer to~$A_R$ as \emph{basement arc set}. The idea is that basement arc sets and attic arc sets can be combined to a solution of~\textsc{$(\Pi_1+v)$-Moral BNSL} and that a solution can be splitted into an attic arc set and a basement arc set. With the next two lemmas, we formalize this idea. First, an attic arc set and a basement arc set can be combined to a DAG where~$S$ is a dissociation set of the moralized graph. 

\begin{lemma} \label{Lemma: segmentation -> DAG}
Let~$N$ be a vertex set and let~$S$,~$Q_0$ and~$Q_1$ be disjoint subsets of~$N$. Furthermore, let~$A_Q$ be an attic arc set, and let~$A_R$ be a basement arc set. \iflong It then holds that
\begin{enumerate}
\item[1.] $D:=(N,A_Q \cup A_R)$ is a DAG, and
\item[2.] $S$ is a dissociation set of~$\Mo(D)$.
\end{enumerate}
\else
Then, 1) $D:=(N,A_Q \cup A_R)$ is a DAG, and 2) $S$ is a dissociation set of~$\Mo(D)$.
\fi
\end{lemma}

\iflong
\begin{proof}
We first show that~$D$ is a~DAG. Assume towards a contradiction that there is a directed cycle in~$D$. Since~$A_Q$ is an attic arc set we conclude from Definition~\ref{Def: ancestor extension}~$a)$ that there is no directed cycle in~$(N,A_Q)$. Hence, the cycle contains an edge~$(v,w) \in A_R$. Note that~$w \in R$ and there exists an outgoing edge~$(w,w') \in A_Q \cup A_R$ that is also part of the cycle. Since no edge in~$A_Q$ is incident with vertices of~$R$ we conclude~$(w, w') \in A_R$ and therefore~$w' \in R$. Note that~$w' \neq w$, since~$A_R$ contains no self-loops. Since~$(w,w')$ is part of the directed cycle, there exists an edge~$(w',w'') \in A_R$ with~$w'' \in R$. Then,~$w'$ is incident with two arcs in~$A_R \cap ((R \cup Q_0) \times R)$ which is a contradiction to the fact that~$A_R$ is a basement arc set. Consequently, there is no directed cycle in~$D$.

It remains to show that~$S$ is a dissociation set of~$\Mo(D)$. That is, we show that every vertex~$v$ has degree at most 1 in~$G:=\Mo(D)-S$. 

If~$v \in Q_1$, then $v$ has degree at most~one in~$\Mo(D_Q)-S$. Since no arc in~$A_R$ is incident with~$v$, we conclude~$\deg_G(v) =1$. 

Otherwise,~$v \in Q_0 \cup R$. Then, there is no arc in~$A_Q$ connecting~$v$ with a vertex in~$N \setminus S$. Moreover, by Definition~\ref{Def: suitable arc sets}, there is at most one arc in~$A_R \cap ((R \cup Q_0) \times R)$ that is incident with~$v$. To prove~$\deg_G(v) \leq 1$ it remains to show that there is no moral edge of~$\Mo(D)$ connecting~$v$ with some other vertex in~$N \setminus S$. Assume towards a contradiction that there exists some~$v' \in N \setminus S$ such that~$v$ and~$v'$ have a common child~$w$. 

If~$w \in S \cup Q$, then~$v \in Q_0$ and~$v' \in Q$. Consequently,~$\{v,v'\}$ is a moral edge in~$\Mo(D_Q)$ which contradicts the fact that vertices in~$Q_0$ have degree 0 in~$\Mo(D_Q)-S$. Hence, we conclude~$w \in R$ and therefore~$(v,w), (v',w) \in A_R$. Then,~$w$ has two incident arcs in~$A_R \cap ( (R \cup Q_0) \times R)$ which contradicts the fact that~$A_R$ is a basement arc set. Hence,~$\deg_G(v) \leq 1$. $\hfill \Box$
\end{proof}
\fi

Next, we show that conversely, the arc set of every DAG whose moralized graph has a dissociation set~$S$ can be partitioned into the an attic arc set and a basement arc set for some~$Q_0$ and~$Q_1$.

\begin{lemma} \label{Lemma: DAG -> segmentation}
Let~$D=(N,A)$ be a DAG and let~$S \subseteq N$ be a dissociation set of~$\Mo(D)$. Furthermore, let~$Q:= \{v \in N \setminus S \mid v \text{ has at least one descendant in }S \}$ and let~$Q$ be partitioned into
\begin{align*}
Q_0 &:= \{v \in Q \mid v \text{ has degree } 0 \text{ in }\Mo(S \cup Q, A_Q)-S\} \text{, and}\\
Q_1 &:= Q \setminus Q_0.
\end{align*}
Then, $A_Q:= ((S \cup Q)\times (S \cup Q) ) \cap A$ is an attic arc set and~$A \setminus A_Q$ is a basement arc set. Moreover,~$|Q| \leq 2 |S|$.
\end{lemma}

\iflong
\begin{proof}
Note that Proposition~\ref{Prop: <= 2|S| ancestors} implies~$|Q| \leq 2|S|$. We first show that Properties~$a)$ and~$b)$ from Definition~\ref{Def: ancestor extension} hold for~$A_Q$. Since~$D=(N,A)$ is a DAG,~$S \cup Q \subseteq N$, and~$A_Q \subseteq A$, it clearly holds that~$D_Q$ is a DAG and therefore Property~$a)$ holds. Consider Property~$b)$. By the definition of~$Q_0$, no vertex in~$Q_0$ has neighbors in~$Q$. Furthermore, since~$S$ is a dissociation set of~$\Mo(D)$, every vertex in~$Q_1$ has at most one neighbor in~$Q$.

It remains to show that~$A \setminus A_Q$ is a basement arc set. To this end, we first show~$A \setminus A_Q \subseteq (S \cup Q_0 \cup R) \times R$. Assume towards a contradiction that~$A \setminus A_Q \not \subseteq (S \cup Q_0 \cup R) \times R$. Consider the following cases.

\textbf{Case 1:} There exists an arc~$(v,w) \in A \setminus A_Q$ with~$w \not \in R$\textbf{.} Then,~$w \in S \cup Q$ and therefore,~$v$ is an ancestor of~$S$. Hence,~$(v,w) \in A_Q$ which contradicts the choice of~$(v,w)$.

\textbf{Case 2:} There exists an arc~$(v,w) \in A \setminus A_Q$ with~$v \in Q_1$\textbf{.} From the previous case we know~$w \in R$. Since~$v$ has degree 1 in~$\Mo(N,A_Q)-S$ and an incident arc to some vertex in~$R$ we conclude~$\deg_{\Mo(D)-S}(v)\geq 2$ which contradicts the fact that~$S$ is a dissociation set of~$\Mo(D)$. Since Cases 1 and 2 are contradictory, we have~$A \setminus A_Q \subseteq (S \cup Q_0 \cup R) \times R$.

Finally, we show that Definition~\ref{Def: suitable arc sets} holds for~$A \setminus A_Q$. Since~$D$ is a DAG we conclude that~$A \setminus A_Q$ contains no self-loops. Moreover, since~$S$ is a dissociation set of~$\Mo(D)$ we conclude that every~$w \in (Q_0 \cup R)$ has at most one incident edge in~$(A \setminus A_Q) \cap ((R \cup Q_0) \times R)$. $\hfill \Box$
\end{proof}
\fi

In general, if we consider a union~$A_1 \cup A_2$ of two disjoint arc-sets, one vertex~$v$ may have incoming arcs from~$A_1$ and~$A_2$. Thus, for the local scores we may have~$f_v(P^{A_1}_v) \neq f_v(P^{A_1 \cup A_2}_v)$. Given an attic arc set~$A_Q$ and a basement arc set~$A_R$, all arcs in~$A_R$ have endpoints in~$R$ and all arcs in~$A_Q$ have endpoints in~$Q \cup S$. Since~$Q \cup S$ and~$R$ are disjoint, for every vertex~$v $ either all incoming arcs are in~$A_Q$ or in~$A_R$. Thus, the local scores under~$A_Q \cup A_R$ can be decomposed as follows.

\begin{lemma} \label{Lemma: Decompose Score}
Let~$(N,\Fa,t,k)$ be an instance of~\textsc{$(\Pi_1+v)$-Moral BNSL} and let~$S$,~$Q_0$, and~$Q_1$ be disjoint subsets of~$N$. Furthermore, let~$A_Q$ be an attic arc set and let~$A_R$ be a basement arc set. Then, score of~$A:= A_Q \cup A_R$ under~$\Fa$~is
\begin{align*}
\sum_{v \in N} f_v(P_v^{A}) = \sum_{v \in S \cup Q} f_v(P_v^{A_Q}) + \sum_{v \in R} f_v(P_v^{A_R}).
\end{align*}
\end{lemma}

\subsection{Finding the Attic Arc Set} \label{Subsection: Find Attic}

Recall that the intuitive idea of the XP-time algorithm is to iterate over all possible vertices that may form the dissociation set and their possible ancestors. Then, for each choice we find an attic arc set and a basement arc set. In this subsection, we present an algorithm to efficiently compute the attic arc set when~$S$,~$Q_0$, and~$Q_1$ are given.

Let~$I:=(N,\Fa,t,k)$ be an instance of~\textsc{$(\Pi_1+v)$-Moral BNSL} and let~$S$, $Q_0$, and~$Q_1$ be disjoint subsets of~$N$. An attic arc set~$A_Q$ is called~\emph{optimal}, if~$\sum_{v \in S \cup Q_0 \cup Q_1} f_v(P_v^{A_Q})$ is maximal among all attic arc sets for~$S$,~$Q_0$, and~$Q_1$.

Let~$\lambda:=|S \cup Q_0 \cup Q_1|$. Observe that, by iterating over every possible set of arcs between the vertices in~$S \cup Q_0 \cup Q_1$, one can enumerate all possible~$A_Q$ in~$2^{\Oh(\lambda^2)} \cdot |I|^{\Oh(1)}$~time. Alternatively, by iterating over all possible parent sets of the vertices of~$S \cup Q_0 \cup Q_1$, one can enumerate all possible~$A_Q$ in~${\delta_\Fa}^{\lambda} \cdot |I|^{\Oh(1)}$~time. However, this might be expensive, since~$\delta_\Fa$ can be exponentially large in the number of vertices. We show that an optimal attic arc set can be computed in~$\lambda^{\Oh(\lambda)} \cdot |I|^{\Oh(1)}$~time. The intuitive idea of this algorithm is to find the connected vertex pairs in~$Q_1$ via brute force and use an algorithm for~\textsc{Vanilla-BNSL} as a subroutine to find the arcs of~$A_Q$.

\begin{proposition} \label{Prop: compute optimal attic arc set}
Let~$I:=(N,\Fa,t,k)$ be an instance of~\textsc{$(\Pi_1+v)$-Moral BNSL}, and let~$S$, $Q_0$, and~$Q_1$ be disjoint subsets of~$N$. An optimal attic arc set for~$S$, $Q_0$, and~$Q_1$ can be computed in~$\lambda^{\Oh(\lambda)} \cdot |I|^{\Oh(1)}$~time, where~$\lambda:=|S \cup Q_0 \cup Q_1|$. 
\end{proposition}

\begin{proof}
Throughout this proof, let~$Q:=Q_0 \cup Q_1$ and~$N':=S \cup Q$. Consider~$Q_1$. An \emph{auxiliary graph}~$H$ is defined as an undirected graph with vertex set~$Q_1$, such that each connected component of~$H$ has size at most~2. Note that there are~$\binom{\lambda^2}{\lambda} \in \lambda^{\Oh(\lambda)}$ many auxiliary graphs, since~$|Q_1| \leq \lambda$.

Let~$H$ be a fixed auxiliary graph. For two vertices~$w_1 \in Q_1$ and~$w_2 \in Q_1$ we write~$w_1 \sim_H w_2$ if they belong to the same connected component of~$H$. In the following, we define a family~$\Fa^H$ of local scores for~$N'$. To this end, we introduce the term of \emph{feasible parent sets regarding~$H$}: First, let~$v \in Q_0$. A set~$P \subseteq N' \setminus \{v\}$ is called \emph{feasible for~$v$} if~$P \subseteq S$. Second, let~$v \in Q_1$. A set~$P \subseteq N' \setminus \{v\}$ is feasible for~$v$ if~$P \cap Q \subseteq \{w\}$ where~$w \sim_H v$. Finally, let~$v \in S$. A set~$P \subseteq N' \setminus \{v\}$ is \emph{feasible for~$v$}, if~$|P \cap Q| \leq 1$, or~$P \cap Q = \{w_1,w_2\}$ for some~$w_1, w_2 \in Q_1$ with~$w_1 \sim_H w_2$. We then define~$\Fa^H$ by
\begin{align*}
f^H_v(P) :=
\begin{cases}
f_v(P) & \text{if }P\text{ is feasible for }v\text{, or}\\
0 & \text{otherwise.}
\end{cases}
\end{align*}

Note that for every vertex~$v \in N'$, every potential parent set~$P \in \mathcal{P}_{\Fa^H}(v)$ is feasible for~$v$ by the definition of~$\Fa^H$.

\textit{Algorithm.} The algorithm to compute an optimal arc set for~$S$, $Q_0$, and~$Q_1$ can be described as follows: Iterate over all auxiliary graphs~$H$. For every choice of~$H$ compute an arc set~$A_H \subseteq N' \times N'$ that maximizes~$\sum_{v \in N'} f^H_v(P^{A_H}_v)$. Return an arc set~$A \in \{ A_H \mid H \text{ is an auxiliary graph}\}$ that maximizes~$\sum_{v \in N'}f_v(P^A_v)$.

\textit{Running time.} We first consider the running time of the algorithm. As mentioned above, we can iterate over all auxiliary graphs in~$\lambda^{\Oh(\lambda)}$~time. For every auxiliary graph, we compute the arc set~$A_H$. This can be done by solving~\textsc{Vanilla-BNSL} for the vertex set~$N'$ and local scores~$\Fa^H$. This can be done in~$2^\lambda \cdot |I|^{\Oh(1)}$~time~\cite{OM03,SM06}. Thus, the overall running time of the algorithm is~$\lambda^{\Oh(\lambda)} \cdot |I|^{\Oh(1)}$. 

\textit{Correctness.} It remains to show that the algorithm is correct. That is, the returned arc set~$A$ is an optimal attic arc set for~$S$, $Q_0$, and~$Q_1$. Note that~$A=A_H$ for some auxiliary graph~$H$. Therefore,~$A$ is a solution of an instance of~\textsc{Vanilla-BNSL} with vertex set~$N'$ and local scores~$\Fa^H$. By Proposition~\ref{Prop: Potential Parents}, we may assume that~$A$ is nice and therefore, for every~$v \in N'$ the parent set $P^A_v$ is feasible for~$v$ regarding~$H$. Consequently,~$f^H_v(P^A_v)=f_v(P^A_v)$ for every~$v \in N$.

We first show that~$A$ is an attic arc set for~$S$, $Q_0$, and~$Q_1$. That is, we show that Properties~$a)$ and~$b)$ from Definition~\ref{Def: ancestor extension} hold. Since~$A$ is a solution of a \textsc{Vanilla-BNSL} instance with vertex set~$N'$, the graph~$(N',A)$ is a DAG. Thus, Property~$a)$ from Definition~\ref{Def: ancestor extension} holds. We next check Property~$b)$. First, consider~$v \in Q_0$ and assume towards a contradiction~$v$ has a neighbor~$w \not \in S$ in~$\Mo(N',A)$. If~$(v,w) \in A$ or~$(w,v) \in A$, either~$v$ or~$w$ has a non-feasible parent set regarding~$H$. A contradiction. Otherwise, if~$\{v,w\}$ is a moral edge, then there exists a vertex~$u \in N'$ with~$\{v,w\} \in P^A_u$. Then,~$P^A_u$ is not feasible for~$u$ regarding~$H$. A contradiction. Second, consider~$v \in Q_1$. Then, due to the definition of feasible parent sets,~$v$ can only be adjacent to a vertex~$w \in Q \setminus \{v\}$ if~$v \sim_H w$. Since the connected components in~$H$ have size at most 2, $v$ has at most one neighbor outside S in~$\Mo(N',A)$. Therefore, Property~$b)$ from Definition~\ref{Def: ancestor extension} holds. Thus,~$A$ is an attic arc set.

We next show that~$A$ is optimal. That is, we show that~$\sum_{S \cup Q} f_v(P^A_v)$ is maximal among all attic arc sets for~$Q_0$,~$Q_1$, and~$S$. To this end, let~$A' \neq A$ be another attic arc set. Consider~$\Mo(N',A')$. Since every vertex in~$Q_1$ has at most one neighbor outside~$S$ in~$\Mo(N',A')$, the graph~$H':=\Mo(N',A')[Q_1]$ has connected components of size at most 2. Consequently,~$H'$ is an auxiliary graph. To show that~$\sum_{S \cup Q} f_v(P^{A'}_v) \leq \sum_{S \cup Q} f_v(P^A_v)$ we use the following claim.

\begin{claim} \label{Claim: Only Feasible Parent Sets}
For every~$n \in N'$, the parent set~$P^{A'}_v$ is feasible for~$v$ regarding the auxiliary graph~$H'$.
\end{claim}

\begin{proof}We consider the following cases.

\textbf{Case 1:} $v \in Q_0$\textbf{.} Then,~$v$ has no neighbors outside~$S$ in~$\Mo(N',A')$. Thus, $v$ has only incoming arcs from~$S$. Therefore,~$P^{A'}_v$ is feasible for~$v$.

\textbf{Case 2:} $v \in Q_1$\textbf{.} Then,~$v$ has at most one neighbor~$w$ outside~$S$ in~$\Mo(N',A')$. Observe that~$w \sim_{H'} v$ by the definition of~$H'$. Therefore,~$P^{A'}_v$ is feasible for~$v$.

\textbf{Case 3:} $v \in S$\textbf{.} Then, if~$|P^{A'}_v \cap Q| \leq 1$,~$P^{A'}_v$ is feasible for~$v$. Furthermore, if~$|P^{A'}_v \cap Q| \geq 3$, the vertices in~$P^{A'}_v \cap Q$ have degree at least 2 outside~$S$ in~$\Mo(N',A')$ contradicting the fact that~$A'$ is an attic arc set. Thus, it remains to consider the case where~$|P^{A'}_v \cap Q| = 2$. Let~$P^{A'}_v \cap Q = \{w_1,w_2\}$. Then,~$w_1$ and~$w_2$ are connected by a moral edge in~$\Mo(N',A')$ implying~$w_1 \sim_{H'} w_2$. Thus,~$P^{A'}_v$ is feasible for~$v$. $\hfill \Diamond$
\end{proof}
Since every~$v \in N'$ has a feasible parent set under~$A'$ regarding~$H'$, we have~$f^{H'}_v(P^{A'}_v) = f_v(P^{A'}_v)$. Since the score of~$A$ under~$\Fa^H$ is at least as big as the score of the best possible DAG under~$\Fa^{H'}$, we conclude
\begin{align*}
\sum_{S \cup Q} f_v(P^{A'}_v) = \sum_{S \cup Q} f^{H'}_v(P^{A'}_v) \leq \sum_{S \cup Q} f^{H}_v(P^{A}_v) = \sum_{S \cup Q} f_v(P^{A}_v).
\end{align*}$\hfill \Box$
\end{proof}

\subsection{Finding the Basement Arc Set} \label{Subsection: Find Basement} 
We now show that we can compute a basement with maximal score in polynomial time if~$S$,~$Q_0$, and~$Q_1$ are given. More precisely, we solve the following problem.
\begin{center}
	\begin{minipage}[c]{.9\linewidth}
          \textsc{Basement Learning}\\
          \textbf{Input}: A set of vertices~$N$, disjoint subsets~$S$,~$Q_0$,~$Q_1$ of~$N$, local scores~$\Fa=\{f_v \mid v \in N\}$, and an integer~$t$.\\
          \textbf{Question}: Is there a basement arc set~$A_R$ for~$S$,~$Q_0$, and~$Q_1$ with~$\sum_{v \in N \setminus (S \cup Q_0 \cup Q_1)} f_v(P_v^{A_R}) \geq t$?
	\end{minipage}
\end{center}

\begin{proposition} \label{Prop: Completion in Poly Time}
\textsc{Basement Learning} can be solved in $\Oh(n^3 \delta_\Fa)$~time.
\end{proposition}

\begin{proof}
We give a polynomial-time reduction to~\textsc{Maximum Weight Matching}\iflong . In~\textsc{Maximum Weight Matching} \else{}, where \fi one is given a graph~$G=(V,E)$, edge-weights~$\omega: E \rightarrow \mathds{N}$, and~$\ell \in \mathds{N}$ and the question is if there exists a set~$M \subseteq E$ of pairwise non-incident edges such that~$\sum_{e \in M} \omega(e) \geq \ell$. 

\textit{Construction:} Let~$I:=(N,S, Q_0, Q_1, \Fa, t)$ be an instance of~\textsc{Basement Learning}. Throughout this proof let~$Q:=Q_0 \cup Q_1$, and let~$R:= N \setminus (S \cup Q)$. We construct an equivalent instance~$(G, \omega, \ell)$ of~\textsc{Maximum Weight Matching}. We first define~$G:=(V,E)$ with~$V:=Q_0 \cup R \cup R'$, where~$R':=\{v' \mid v \in R\}$, and~$E:= X \cup Y \cup Z$, where
\begin{align*}
X &:= \{ \{v,w\} \mid v,w \in R, v \neq w\},\\
Y &:= \{ \{v,w\} \mid v \in R, w \in Q_0\}, \text{ and}\\
Z &:= \{ \{v,v'\} \mid v \in R\}.
\end{align*}
Next, we define edge-weights~$\omega : E \rightarrow \mathds{N}$: For~$e=\{v,v'\} \in Z$, we set

$$\omega(e) := \max_{S' \subseteq S} f_v(S').$$

 Furthermore, for~$e=\{v,w\} \in Y$ with~$v \in R$ and~$w \in Q_0$, we set
 
$$\omega(e) := \max_{S' \subseteq S} f_v(S' \cup \{w\}).$$

Finally, for~$e=\{v,w\} \in X$, we set~$\omega(e):= \max (\varphi(v,w), \varphi(w,v))$, where
\begin{align*}
\varphi(u_1,u_2) := &\max_{S' \subseteq S } f_{u_1}(S' \cup \{u_2\}) + \max_{S' \subseteq S } f_{u_2}(S').
\end{align*}
To complete the construction \iflong of~$(G, \omega, \ell)$\fi, we set~$\ell := t$.

\iflong \textit{Intuition:} Before we prove the correctness of the reduction we provide some intuition. \else
Due to lack of space, the correctness proof is deferred. We provide some intuition:
 \fi A maximum-weight matching~$M$ in~$G$ corresponds to the parent sets of vertices in~$R$ and therefore to arcs in a solution~$A_R$ of~$I$. More precisely, an edge~$\{v,v'\} \in Z$ with~$v \in R$ corresponds to a parent set of~$v$ that contains only vertices from~$S$. Moreover, an edge~$\{v,w\} \in Y$ with~$v \in R$ corresponds to a parent set of~$v$ that contains~$w \in Q_0$ and vertices from~$S$. Finally, an edge~$\{v,w\} \in X$ means that either~$v \in P^{A_R}_w$ or~$w \in P^{A_R}_v$. 
\iflong

\textit{Correctness:} 
We now prove the correctness of the reduction, that is, we show that~$I$ is a yes-instance of~\textsc{Basement Learning} if and only if~$(G,\omega,\ell)$ is a yes-instance of~\textsc{Maximum Weight Matching}.

$(\Rightarrow)$ Let~$A_R$ be a basement arc set of~$S$,~$Q_0$, and~$Q_1$ with~$\sum_{v \in R} f_v(P_v^{A_R}) \geq t$. We define a matching~$M$ with~$\sum_{e \in M} \omega(e) \geq t$. To this end, we describe which edges of~$X$,~$Y$, and~$Z$ we add to~$M$ by defining sets~$M_X$, $M_Y$, and~$M_Z$ and set~$M:= M_X \cup M_Y \cup M_Z$.

First, for every pair~$v,w \in R$ with~$v \in P^{A_R}_w$ or~$w \in P^{A_R}_v$, we add~$\{v,w\} \in X$ to~$M_X$. Second, for every pair~$v,w$ with~$v \in R$,~$w \in Q_0$, and~$w \in P^{A_R}_v$, we add~$\{v,w\}$ to~$M_Y$. Third, for every~$v \in R$ that is not incident with one of the edges in~$M_X \cup M_Y$, we add~$\{v,v'\}$ to~$M_Z$. Obviously, $M_X$, $M_Y$, and~$M_Z$ are pairwise disjoint. 

We first show that~$M$ is a matching by proving that there is no pair of distinct edges in~$M$ that share an endpoint. Consider the following cases.

\textbf{Case 1: $e_1, e_2 \in M_Z$.} Then, if~$e_1, e_2$ share one endpoint~$v \in R$ or~$v' \in R'$ it follows by the definition of~$M_Z$ that~$e_1=e_2=\{v,v'\}$ and, therefore, there are no distinct edges~$e_1, e_2 \in M_Z$ that share exactly one endpoint.

\textbf{Case 2: $e_1, e_2 \in M_X \cup M_Y$.} Then, assume towards a contradiction that~$e_1=\{u,v\}$ and~$e_2=\{v,w\}$ have a common endpoint~$v$. Now,~$\{u,v\} \in M_X \cup M_Y$ implies~$(u,v) \in A_R$ or~$(v,u) \in A_R$. Moreover~$\{v,w\} \in M_X \cup M_Y$ implies~$(w,v) \in A_R$ or~$(v,w) \in A_R$. Then,~$v \in R \cup Q_0$ is incident with two arcs in~$A_R \cap ((R\cup Q_0) \times R)$ which contradicts the fact that~$A_R$ is a basement arc set.

\textbf{Case 3: $e_1 \in M_X \cup M_Y, e_2 \in M_Z$.} Then,~$e_1$ and~$e_2$ can only have a common endpoint in~$R$ which is not possible by the definition of~$M_Z$.

We conclude by the above that~$M$ is a matching. It remains to show that~$\sum_{e \in M} \omega(e) \geq t$. Observe that every~$v \in R$ is incident with some edge in~$M$. Conversely, every edge in~$M_Y \cup M_Z$ has exactly one endpoint in~$R$, and every edge in~$M_X$ has both endpoints in~$R$. Given an edge~$e \in M_Y \cup M_Z$, we let~$\pi(e)$ denote its unique endpoint in~$R$. By the construction of~$M_X$ and the fact that~$A_R$ is a basement arc set we know that for every~$\{v,w\} \in M_X$ it holds that either~$(v,w) \in A_R$ or~$(w, v) \in A_R$. We let~$\pi_1(e)$ and~$\pi_2(e)$ denote the endpoints of~$e=\{v,w\}$ such that~$(\pi_2(e), \pi_1(e)) \in A_R$. Since every~$v \in R$ is incident with some edge in~$M$ and~$M$ is a matching, the following sets form a partition of~$R$.

\resizebox{0.95\linewidth}{!}{
  \begin{minipage}{\linewidth}
\begin{align*}
R_1 &: = \{ \pi_1(e) \mid e \in M_X \}, & R_2 &:=\{ \pi_2(e) \mid e \in M_X \},\\
R_3 &: = \{ \pi(e) \mid e \in M_Y \}, & R_4 &:=\{ \pi(e) \mid e \in M_Z \}.
\end{align*}
  \end{minipage}
}\\

Observe that by the definitions of~$M_X, M_Y$, and~$M_Z$ it holds that all~$v \in  R_2 \cup R_4$ have a parent set~$S'$ under~$A_R$, where~$S' \subseteq S$. Moreover, all~$\pi(e) \in R_3$ have parent set~$P^{A_R}_{\pi(e)} = S' \cup (e \setminus \{\pi(e)\})$ with~$S' \subseteq S$, and all~$\pi_1(e) \in R_1$ have parent sets~$P_{\pi_1(e)}^{A_R} = S' \cup \{\pi_2(e)\}$ with~$S' \subseteq S$. For the weight of~$M$ it then holds that
\begin{align*}
&\sum_{e \in M_X} \omega (e) + \sum_{e \in M_Y} \omega (e) + \sum_{e \in M_Z} \omega (e)\\
=&\sum_{e \in M_X} \max_{S' \subseteq S} f_{\pi_1(e)}(S' \cup \{\pi_2(e)\}) +\sum_{e \in M_X} \max_{S' \subseteq S} f_{\pi_2(e)}(S')\\
&+ \sum_{e \in M_Y} \max_{S' \subseteq S} f_{\pi(e)}(S' \cup (e \setminus \{\pi(e)\})) + \sum_{e \in M_Z} \max_{S' \subseteq S} f_{\pi(e)}(S')\\
\geq&\sum_{v \in R_1 \cup R_2 \cup R_3 \cup R_4} f_v(P^{A_R}_v) \geq t,
\end{align*}
and therefore~$\sum_{e \in M} \omega(e) \geq t$.

$(\Leftarrow)$ Conversely, let~$M \subseteq E$ be a matching of~$G$ with~$\sum_{e \in M} \omega(e) \geq t$. Note that in~$G$, every edge~$e \in E$ has at least one endpoint in~$R$ and consequently every~$e \in M$ has at least one endpoint in~$R$. Moreover, without loss of generality we can assume that every vertex of~$R$ is incident with an edge of~$M$: If a vertex~$v \in R$ is not incident with an edge of~$M$, we replace~$M$ by~$M':=M \cup \{\{v,v'\}\}$. Then,~$\sum_{e \in M'} \omega(e) \geq t+ \omega(\{v,v'\}) \geq t$ and~$M'$ is still a matching since~$\deg_G(v')=1$.

We define a set~$A_R \subseteq (S \cup Q_0 \cup R) \times R$ and show that~$\sum_{v \in R} f_v(P^{A_R}_v) \geq t$ and that~$A_R$ is a basement arc set. To this end, we define a parent set with vertices in~$S \cup Q_0 \cup R$ for every~$v \in R$. First, if~$v$ is incident with an edge~$\{v,v'\} \in M \cap Z$, we set~$P^{A_R}_v := \argmax_{S' \subseteq S} f_v(S')$. Second, if~$v$ is incident with an edge~$\{v,w\} \in M \cap Y$, then~$w \in Q_0$ and we set~$P^{A_R}_v := \{w\} \cup \argmax_{S' \subseteq S} f_v(S' \cup \{w\})$. Third, it remains to define the parent sets of vertices in~$R$ that are endpoints of some edge in~$M \cap X$. Let~$\{v,w\} \in M \cap X$, where~$\varphi(v,w) \geq \varphi(w,v)$. We then set~$P^{A_R}_v := \{w\} \cup \argmax_{S' \subseteq S} f_v(S' \cup \{w\})$ and we set~$P^{A_R}_w := \argmax_{S' \subseteq S} f_w(S')$.

We first show that~$A_R$ is a basement arc set. Obviously, $A_R$ does not contain self-loops and no~$v \in R$ has a parent in~$Q_1$. It remains to show that every vertex in~$Q_0 \cup R$ has at most one incident arc in~$A_R \cap ((R \cup Q_0) \times R)$. Let~$v \in Q_0 \cup R$. Assume towards a contradiction that~$v$ is incident with two distinct arcs in~$A_R \cap ((R \cup Q_0) \times R)$. Then, there exists a vertex~$w_1 \in Q_0 \cup R$ with~$(v,w_1) \in A_R$ or~$(w_1,v) \in A_R$. Moreover, there exists a vertex~$w_2 \in (Q_0 \cup R) \setminus \{w_1\}$ with~$(w_2,v) \in A_R$ or~$(v,w_2) \in A_R$. Then, by the definition of~$A_R$ we conclude~$\{v,w_1\},\{v,w_2\} \in M$ which contradicts the fact that no two edges in~$M$ share one endpoint. We conclude that~$A_R$ is a basement arc set.

It remains to show that~$\sum_{v \in R} f_v(P^{A_R}_v) \geq t$. To this end consider the following.

\begin{claim} \
\begin{enumerate}
\item[a)] If~$\{v,w\} \in M \cap (Y \cup Z)$ with~$v \in R$, then~$\omega(\{v,w\}) = f_v(P_v^{A_R})$.
\item[b)] If~$\{v,w\} \in M \cap X$, then~$\omega(\{v,w\}) = f_v(P_v^{A_R}) + f_w(P_w^{A_R})$.
\end{enumerate}
\label{Claim: Correctness Matching Reduction}
\end{claim}

\begin{proof}
$a)$ If~$\{v,w\} \in M \cap Z$, then~$w=v'$ and it follows by the definition of~$A_R$ that~$f_v(P_v^{A_R})= \max_{S' \subseteq S} f_v(S') = \omega(\{v,v'\})$. Otherwise, if~$\{v,w\} \in M \cap Y$, then~$w \in Q_0$ and analogously~$f_v(P_v^{A_R})= \max_{S' \subseteq S} f_v(S' \cup \{w\}) = \omega(\{v,w\})$.

$b)$ If~$\{v,w\} \in M \cap X$, then~$v,w \in R$. We only consider the case~$\varphi(v,w) \geq \varphi(w,v)$, since the other case is analogue. It then follows from the definition of~$A_R$, that
\begin{align*}
&f_v(P_v^{A_R}) + f_w(P_w^{A_R})\\
 =  &\max_{S' \subseteq S} f_v(S' \cup \{w\}) +  \max_{S' \subseteq S} f_w(S') \\
 =  &\max (\varphi(v,w), \varphi(w,v)) = \omega(\{v,w\}).
\end{align*}
$\hfill \Diamond$
\end{proof}

Now, let~$\tilde{R} \subseteq R$ be the set of vertices in~$R$ that are incident with an edge in~$M \cap X$. We conclude by Claim~\ref{Claim: Correctness Matching Reduction} and the assumption that every~$v \in R$ is incident with an edge in~$M$ that
\begin{align*}
\sum_{v \in R} f_v(P^{A_R}_v) &= \sum_{v \in R\setminus \tilde{R}} f_v(P^{A_R}_v) + \sum_{v \in \tilde{R}} f_v(P^{A_R}_v)\\
&= \sum_{e \in M \cap (Y \cup Z)} \omega(e) + \sum_{e \in M \cap X} \omega(e)\\
&= \sum_{e \in M} \omega(e) \geq  t,
\end{align*}
which completes the correctness proof.

\textit{Running Time.} The constructed instance of~\textsc{Maximum Weight Matching} contains~$\Oh(n)$ vertices and~$\Oh(n^2)$ edges. For each edge~$e$, the edge weight~$\omega(e)$ can be computed in~$\Oh(n\delta_\Fa)$ time. Hence, we can compute the described instance of \textsc{Maximum Weight Matching} from an instance of~\textsc{Basement Learning} in~$\Oh(n^3 \delta_\Fa)$ time. Together with the fact that \textsc{Maximum Weight Matching} can be solved in~$\Oh(\sqrt{|V|}\cdot |E|)$ time~\cite{MV80}, we conclude that~\textsc{Basement Learning} can be solved in~$\Oh(n^3\delta_\Fa)$~time.
\fi
 $\hfill \Box$
\end{proof}

\subsection{An XP-time Algorithm for \textsc{$\mathbf{(\Pi_1+v)}$-Moral BNSL}} \label{Subsection: Bounded Diss-No Algo}

We now combine the previous results of this section to obtain an XP-time algorithm for~\textsc{$(\Pi_1+v)$-Moral BNSL}. Recall that the intuitive idea of the algorithm is to find the dissociation set~$S$ and the ancestors~$Q= Q_0 \cup Q_1$ of~$S$ via brute force. Then, for every choice of~$S$,~$Q_0$, and~$Q_1$ we find an attic arc set and a basement arc set and combine these arc sets to a solution of \textsc{$(\Pi_1+v)$-Moral BNSL}.

\begin{theorem} \label{Theorem: XP Algo}
\textsc{$(\Pi_1+v)$-Moral BNSL} can be solved in~$n^{3k} \cdot k^{\Oh(k)} \cdot |I|^{\Oh(1)}$~time.
\end{theorem}

\iflong
\begin{proof}
\textit{Algorithm.} 
Let~$I=(N,\Fa,t,k)$ be an instance of~\textsc{$(\Pi_1+v)$-Moral BNSL}. The following algorithm decides whether~$I$ is a yes-instance or a no-instance: First, iterate over all possible choices of~$S$,~$Q_0$, and~$Q_1$ where~$|S| \leq k$ and~$|Q_0 \cup Q_1| \leq 2k$. For each such choice do the following:
\begin{enumerate}
\item[•] Compute an optimal attic arc set~$A_Q$ using the algorithm behind Proposition~\ref{Prop: compute optimal attic arc set}.

\item[•] Let~$t':= t- \sum_{v \in S\cup Q} f_v(P_v^{A_Q})$ and check if~$(N,S,Q_0,Q_1, \Fa, t')$ is a yes-instance of~\textsc{Basement Learning}. If this is the case, return \emph{yes}.
\end{enumerate}
If for none of the choices of~$S$,~$Q_0$, and~$Q_1$ the answer~\emph{yes} was returned, then return~\emph{no}.

\textit{Running Time.} First, we discuss the running time of the algorithm: Since~$|S| \leq k$ and~$|Q_0 \cup Q_1| \leq 2k$, there are at most~$\binom{n}{k} \cdot \binom{n-k}{2k} \in \Oh(n^{3k})$~choices for~$S$ and~$Q:=Q_0 \cup Q_1$. For each such choice we can compute all~$4^k$ possible partitions of~$Q$ into two sets. Hence, we can iterate over all possible choices for~$S$,~$Q_0$ and~$Q_1$ in $\Oh(n^{3k} \cdot 4^k)$~time. Afterwards, for each such choice we apply the algorithm behind Proposition~\ref{Prop: compute optimal attic arc set} in~$k^{\Oh(k)} \cdot |I|^{\Oh(1)}$~time, and the algorithm behind Proposition~\ref{Prop: Completion in Poly Time} in~$\Oh(n^3 \delta_{\Fa})$~time. This gives an overall running time of~$n^{3k} \cdot k^{\Oh(k)} \cdot |I|^{\Oh(1)}$ as~claimed.

\textit{Correctness.} Second, we show the correctness of the algorithm by proving that the algorithm returns yes if and only if~$I$ is a yes-instance of~\textsc{$(\Pi_1+v)$-Moral BNSL}. 

$(\Rightarrow)$ Let the algorithm return yes for~$I$. Then, there exist disjoint sets~$S$,~$Q_0$, and~$Q_1$ with~$|S| \leq k$, an attic arc set~$A_Q$, and a basement arc set~$A_R$ such that~$\sum_{v \in S \cup Q} f_v(P_v^{A_Q}) + \sum_{v \in R} f_v(P_v^{A_R}) \geq t$. Then, by Lemma~\ref{Lemma: segmentation -> DAG}, the graph~$(N,A_Q \cup A_R)$ is a DAG whose moralized graph has dissociation set~$S$ and by Lemma~\ref{Lemma: Decompose Score} its score is the sum of the local scores for~$A_R$ and~$A_Q$. Hence,~$I$ is a yes-instance.

$(\Leftarrow)$ Conversely, let~$(N,\Fa,t,k)$ be a yes-instance. Then, there is an~$(N,\Fa,t)$-valid DAG~$D=(N,A)$ whose moralized graph has a dissociation set~$S$ of size at most~$k$. Then, by Lemma~\ref{Lemma: DAG -> segmentation}, there exist disjoint sets~$Q_0$,~$Q_1$, an attic arc set~$A_Q$ and a basement arc set~$A_R$ such that~$A_R \cup A_Q= A$. Furthermore,~$|Q_0 \cup Q_1| \leq 2k$. Since the algorithm iterates over all possible choices for~$S$,~$Q_0$, and~$Q_1$ with~$|S| \leq k$ and~$|Q_0 \cup Q_1| \leq 2k$, it considers~$S$,~$Q_0$, and~$Q_1$ at some point. Since~$A$ is~$(N,\Fa,t)$-valid, the arc set~$A_R$ satisfies~$\sum_{v \in R} f_v(P^{A_R}_v) \geq t- \sum_{v \in S\cup Q} f_v(P_v^{A_Q})$. Hence, the algorithm returns yes. $\hfill \Box$
\end{proof}
\else
\begin{proof}[sketch]
Iterate over all possible choices for~$S$ and~$\langle Q, A_Q \rangle$ where~$|S| \leq k$ and~$|Q| \leq 2k$ in~$\Oh((n\delta_\Fa)^{3k})$ time and find the remaining arcs in polynomial time using Proposition~\ref{Prop: Completion in Poly Time}. $\hfill \Box$
\end{proof}
\fi

The running time stated in Theorem~\ref{Theorem: XP Algo} contains a factor of~$k^{\Oh(k)}$. Let us remark that the constant in the exponent hidden by the~$\Oh$-notation is not too high: The constant relies on the running time of the algorithm behind Proposition~\ref{Prop: compute optimal attic arc set} where we iterate over the possible auxiliary graphs. Since~$|Q_1| \leq 2k$, the number of iterations is~$\Oh(\binom{(2k)^2}{2k})$. Thus, the hidden constant is 4. While it seems possible that this can be improved, it would be more interesting to determine whether the factor of~$k^{\Oh(k)}$ can be replaced by~$2^{\Oh(k)}$.

\iflong
\subsection{\textsc{$\mathbf{(\Pi_1+v)}$-Moral BNSL} Parameterized by~$\mathbf{k+t}$} \label{Subsection: DissNoBNSL W[1]-h for k+t}
\fi
In this section, we show another hardness result for~\textsc{$(\Pi_1+v)$-Moral BNSL}. Due to Corollary~\ref{Cor: Bounded-VC W[2]-h}, the problem is presumably not FPT for parameter~$k$ even if the maximum parent set size is~1. Observe that this implies that even a parameterization by~$k+p$, where~$p$ denotes the maximum parent set size, is unlikely to lead to an FPT algorithm. In the next theorem, we prove W[1]-hardness for~$k+t$, even if~$p=3$ and~$\delta_\Fa=2$ implying that not even for the sum~$k+t+\delta_\Fa+p$ there is an FPT algorithm unless~$\text{W[1]}=\text{FPT}$. The proof is closely related to the W[1]-hardness proof for \textsc{$(\Pi_0+v)$-Moral BNSL}~\cite{KP15}. We provide it here for sake of completeness.

\begin{theorem} \label{Theorem: Bounded-DissNo W[1]-h}
\textsc{$(\Pi_1+v)$-Moral BNSL} is W[1]-hard for~$k+t$, even when the superstructure~$S_{\vec{\Fa}}$ is a DAG, the maximum parent set size is $3$,~$\delta_\Fa=2$, and every local score is either 1 or 0.
\end{theorem}

\iflong
\begin{proof} 
We give a parameterized reduction from~\textsc{Clique}. In~\textsc{Clique} we are given an undirected graph~$G=(V,E)$ and an integer~$\ell$ and the question is if there exists a subset~$K \subseteq V$ of size~$\ell$ such that the vertices in~$K$ are pairwise adjacent in~$G$. \textsc{Clique} is W[1]-hard if parameterized by~$\ell$~\cite{DRGM95}. Let~$(G=(V,E),\ell)$ be an instance of \textsc{Clique}. We describe how to construct an equivalent instance of~\textsc{$(\Pi_1+v)$-Moral BNSL} where~$k+t \in \Oh(\ell^2)$.

\textit{Construction.} We first define the vertex set~$N$ by~$N:= V \cup (E \times \{1\}) \cup (E \times \{2\})$. We write~$e_i:=(e,i)$ for the elements in~$E \times \{i\}$, $i \in \{1,2\}$. Next, we define the local scores~$\Fa$. For every edge~$e=\{u,v\} \in E$ we set~$f_{e_1}(\{e_2,u,v\}):=1$. All other local scores are set to~$0$. Finally, we set~$t:=\binom{\ell}{2}$ and~$k:=\ell$. Note that~$k+t \in \Oh(\ell^2)$ and that the maximum parent set size is 3. Moreover, in the superstructure~$S_{\vec{\Fa}}$, the vertices in~$E \times \{1\}$ are the only vertices that have incoming arcs, and all vertices in~$E \times \{1\}$ are sinks. Hence, $S_{\vec{\Fa}}$ is a DAG.

\textit{Correctness.} We next show that there is a clique of size~$k$ in~$G$ if and only if there exists an~$(N,\Fa,t)$-valid arc set~$A$ such that~$\Mo(N,A)$ has a dissociation set of size at most~$k$.

$(\Rightarrow)$ Let~$K$ be a clique of size~$\ell$ in~$G$. We then define~$A:=\{(e_2,e_1),(u,e_1),(v,e_1) \mid e=\{u,v\} \text{ with }u,v \in K\}$. We prove that~$A$ is~$(N,\Fa,t)$-valid and~$\Mo(D)$ for~$D:=(N,A)$ has a dissociation set of size at most~$k$. 

From the fact that~$S_{\vec{\Fa}}$ is a DAG, we conclude that~$D$ is a DAG. Moreover, since the vertices of~$K$ are pairwise adjacent in~$G$, it follows by the construction of~$\Fa$ that~$f_{e_1}(P^A_{e_1})=1$ for all~$e \in E_G(K)$. Since~$|K|=\ell$ we conclude~$\sum_{v \in N} f_v(P^A_v) = \binom{\ell}{2} = t$. Hence, $A$ is~$(N,\Fa,t)$-valid.

Finally, we show that~$K$ is a dissociation set of size~$k$ in~$\Mo(D)$. Observe that the vertices~$e_1\in N$ with~$e \in E_G(K)$ are the only vertices that have a non-empty parent set in~$D$. Hence, the moralized graph is~$\Mo(D)=(N,E_\Mo)$, where
\begin{align*}
E_{\Mo} = \bigcup_{e:=\{u,v\} \in E_G(K)} \{\{w_1,w_2\} \mid w_1, w_2 \in \{u,v,e_1,e_2\}\}.
\end{align*}
It follows that~$\Mo(D)-K=(N\setminus K, E_\Mo')$ with
\begin{align*}
E_{\Mo}' = \bigcup_{e \in E_G(K)} \{\{e_1,e_2\}\}.
\end{align*}
Therefore, the maximum degree in~$\Mo(D)-K$ is at most 1.

$(\Leftarrow)$ Conversely, let~$A$ be an~$(N,\Fa,t)$-valid arc set such that the moralized graph~$\Mo(D)$, where~$D:=(N,A)$, has a dissociation set~$S$ of size at most~$k$. We prove that there is a clique of size~$k$ in~$G$.

Since~$A$ is~$(N,\Fa,t)$-valid, we know that~$\sum_{v \in N} f_v(P^A_v) \geq \binom{k}{2}$. Together with the definition of~$\Fa$ it follows that there are at least~$\binom{k}{2}$ vertices~$e_1 \in N$ such that~$P_{e_1}^A=\{e_2,u,v\}$ with~$\{u,v\}=e$. Let~$X \subseteq N$ be the set of these vertices~$e_1$. Note that~$|X|\geq\binom{k}{2}$ and recall that for every~$e_1 \in X$ it holds that~$e$ is an edge of~$G$.

For every~$e_1 \in X$ with~$e=\{u,v\}$ the set~$\{e_1, e_2, u,v\}$ is a clique of size 4 in~$\Mo(D)$. Since~$S$ is a dissociation set in~$\Mo(D)$, at least two vertices from each set~$\{e_1, e_2, u,v\}$ belong to~$S$. We may assume that these vertices are~$u$ and~$v$ by a simple exchange argument.

Now, consider~$S \cap V$. Since for every~$e_1 \in X$ with~$e=\{u,v\}$ the vertices~$u$ and~$v$ belong to~$S$, we conclude that~$|E_G(S \cap V)| \geq \binom{k}{2}$. Together with the fact that~$|S| \leq k$, it follows that~$S \cap V$ is a set of~$k$ vertices with at least~$\binom{k}{2}$ edges between them. Therefore, $S \cap V$ is a clique of size~$k$ in~$G$. $\hfill \Box$
\end{proof}
\fi

\section{Constrained BNSL for Related Graph Classes}

In this section we outline the limits of learning Bayesian networks under a sparsity constraints that are related to a bounded vertex cover number and a bounded dissociation number. Recall that a bound on the vertex cover number or the dissociation number automatically implies a bound on the treewidth, and that for efficient inference it is desirable to have a small treewidth in the moralized graph~\cite{D09}. Thus, it is well motivated to study constrained BNSL problems regarding graph classes that give a bound on the treewidth that lies between the treewidth and the dissociation number. 

 In terms of graph-classes, the bound on the vertex cover number is formalized as the graph class~$\Pi_0 + kv$ and the bound on the dissociation number is formalized as the graph class~$\Pi_1 +kv$. Recall that~$\Pi_0$ is the class of edgeless graphs. Equivalently,~$\Pi_0$ is the class of graphs with maximum degree 0, or the graphs with maximum connected component size 1. Analogously,~$\Pi_1$ is the class of graphs with maximum degree 1, or the class of graphs with maximum connected component size 2. In this secction, we consider two superclasses of~$\Pi_1$ and show that~XP-time algorithms for constrained BNSL problems regarding these superclasses are presumably not possible.

Let~$\Pi_2$ be the class of graphs that have maximum degree 2, and let~$\Pi_{c}^{\text{COC}}$ be the class of graphs where each connected component has size at most~$c$ for a fixed integer~$c \geq 3$. These graph classes are superclasses of~$\Pi_1$, that is~$\Pi_1 \subseteq \Pi_2$ and~$\Pi_1 \subseteq \Pi_c^\text{COC}$. Consequently, if a graph~$G$ belongs to the graph class~$\Pi_1+kv$ for some~$k \in \mathds{N}_0$, then there exist~$k'\leq k$ and~$k'' \leq k$ such that~$G \in \Pi_2 + k'v$ and~$G \in \Pi_{c}^{\text{COC}}+k''v$. Moreover, observe that the treewidth of~$G$ is not bigger than~$\min(k',k'')+\Oh(1)$. 

With the next two theorems we show that there is little hope that~\textsc{$(\Pi + v)$-Skeleton BNSL} or~\textsc{$(\Pi + v)$- Moral BNSL} with~$\Pi \in \{\Pi_2,\Pi_{3}^{\text{COC}}\}$ has an XP-time algorithm when parameterized by~$k$. To prove the first result we use a reduction from \textsc{Hamiltonian Path}. This construction was already used to show that \textsc{BNSL} is NP-hard if one adds the restriction that the resulting network must be a directed path~\cite{M01}. In the following we show that it also works for~\textsc{$(\Pi_2 + v)$-Skeleton BNSL} and \textsc{$(\Pi_2 + v)$-Moral BNSL}.

\begin{theorem} \label{Theorem: Bounded-DelToDeg2 NPh}
\textsc{$(\Pi_2 + v)$-Skeleton BNSL} and \textsc{$(\Pi_2 + v)$-Moral BNSL} are NP-hard even if~$k=0$ and the maximum parent set size is 1.
\end{theorem}

\iflong
\begin{proof} 
We give a polynomial-time reduction from the NP-hard~\textsc{Hamiltonian Path} problem to~\textsc{$(\Pi_2 + v)$-Skeleton BNSL}. Afterwards we show that the reduction is also correct for~\textsc{$(\Pi_2 + v)$-Moral BNSL}. In \textsc{Hamiltonian Path} one is given an undirected graph~$G$ and the question is whether there exists a Hamiltonian path, that is, a path which contains every vertex of~$G$ exactly once.

\textit{Construction.} Let~$G=(V,E)$ be an instance of~\textsc{Hamiltonian Path} with $n$~vertices. We describe how to construct an equivalent instance of~\textsc{$(\Pi_2+v)$-BNSL} where~$k=0$. We first set~$N:=V$. Next, for every~$v \in N$ we set~$f_v(\{w\})=1$ if~$w \in N_G(v)$ and~$f_v(P)=0$ for every other~$P \subseteq N \setminus \{v\}$. Finally, we set~$t:= n-1$ and~$k:=0$.

\textit{Correctness.} We next show that~$G$ is a yes-instance of~\textsc{Hamiltonian Path} if and only if~$(N,\Fa,t,0)$ is a yes-instance of~\textsc{$(\Pi_2+v)$-Skeleton BNSL}.

 $(\Rightarrow)$ Let~$P=(v_1, v_2, \dots, v_n)$ be a Hamiltonian path in~$G$. We set~$A:=\{(v_i,v_{i+1}) \mid i \in \{1, \dots, n-1 \}\}$ and show that~$A$ is~$(N, \Fa, t)$-valid and that~$\Sk(N,A)$ has maximum degree 2.

Since~$P$ is a Hamiltonian path, no vertex appears twice on~$P$. Hence,~$(N,A)$ does not contain directed cycles. Moreover, it holds that~$v_i \in N_G(v_{i-1})$ for every~$i \in \{2, \dots, n\}$ and therefore~$\sum_{v \in N} f_v(P^A_v) = n-1 = t$. Hence,~$A$ is~$(N, \Fa, t)$-valid. Moreover, observe that~$\Sk(N,A)= (N, \{\{v_i,v_{i+1}\} \mid i \in \{1, \dots, n-1\})$ and this,~$\Sk(N,A)$ has maximum degree 2.

$(\Leftarrow)$ Conversely, let~$A$ be an~$(N, \Fa, t)$-valid arc set such that~$\Sk(D)$ has maximum degree at most 2, where~$D:=(N,A)$. 
Since~$t=n-1$ and every local score is either 1 or 0 we conclude that~$f_v(P^A_v)=1$ for at least~$n-1$ vertices. Then, there are at least~$n-1$ arcs in~$A$, and thus there are at least~$n-1$~edges in~$\Sk(D)$.
Furthermore, we may assume that in~$D$ no vertex~$v$ has a non-empty parent set with score~$f_v(P^A_v)=0$ since otherwise we may replace it with~$\emptyset$. This implies that no vertex has more than one parent in~$D$. Consequently,~$\Sk(D)$ is acyclic.

Since~$\Sk(D)$ is acyclic, the maximum degree is 2, and there are at least~$n-1$ edges, there is a Hamiltonian path~$P:=(v_1, \dots, v_n)$ in~$\Sk(D)$. We show that~$P$ is a Hamiltonian path in~$G$. 
Let~$v_i$, $v_{i+1}$ be two consecutive vertices on~$P$. Then, either~$(v_i,v_{i+1}) \in A$ or~$(v_{i+1},v_{i}) \in A$. By the construction of~$\Fa$ we have~$v_i \in N_G(v_{i+1})$ and thus,~$\{v_i,v_{i+1}\} \in E$. Therefore,~$P$ is a Hamiltonian path in~$G$.

\textit{Moralized Graph.} We next argue why the construction described above is also a correct reduction from~\textsc{Hamiltonian Path} to \textsc{$(\Pi_2+v)$-Moral BNSL}.

For the forward direction, let~$P$ be a Hamiltonian path in~$G$ and let the arc set~$A$ be defined as above. Since every vertex has at most one incoming arc from~$A$, the moralized graph~$\Mo(N,A)$ has no moral edges and therefore~$\Mo(N,A)$ and~$\Sk(N,A)$ have the same set of edges. Thus,~$\Mo(N,A)$ has maximum degree 2.

For the backwards direction, let~$A$ be an~$(N, \Fa, t)$-valid arc set such that the moralized graph $\Mo(N,A)$ has maximum degree at most 2. Since the edge set of the skeleton of~$(N,A)$ is a subset of the edge set of~$\Mo(N,A)$, we conclude that~$\Sk(N,A)$ has maximum degree at most 2. Then, by the above argumentation, there exists a Hamiltonian path in~$G$. 
 $\hfill \Box$
\end{proof}

\begin{theorem} \label{Theorem: Bounded-3coc NP-h}
Let~$c \geq 3$. Then, \textsc{$(\Pi_{c}^{\text{COC}} + v)$-Skeleton BNSL} and \textsc{$(\Pi_{c}^{\text{COC}} + v)$-Moral BNSL} are NP-hard even if~$k=0$.
\end{theorem}

\begin{proof}
We give a polynomial-time reduction from the NP-hard problem~\textsc{$c$-Clique Cover}~\cite{KH83} to~\textsc{$(\Pi_{c}^{\text{COC}} + v)$-Skeleton BNSL}. Afterwards, we show that the reduction is also correct for~\textsc{$(\Pi_{c}^{\text{COC}} + v)$-Moral BNSL}. In~\textsc{$c$-Clique Cover} one is given an undirected graph~$G=(V,E)$ and the question is whether there exists a packing~$\mathcal{P}$ of vertex-disjoint cliques of size~$c$ such that every vertex of~$G$ belongs to one clique of the packing. We represent the packing of cliques~$\mathcal{P}:= \{ K^i_c \mid i \in \{1, \dots, \frac{|V|}{c} \} \}$ as a partition of~$V$, where every~$K^i_c$ is a clique of size~$c$. 

\textit{Construction.} Let~$G=(V,E)$ be an instance of~\textsc{$c$-Clique Cover} with~$n$ vertices. We describe how to construct an equivalent instance of~\textsc{$(\Pi^\text{COC}_c+v)$-Skeleton BNSL} where~$k=0$. We first set~$N:=V$. Next, for every~$v \in N$ we set~$f_v(P)=1$ if~$G[P]$ is a~$K_{c-1}$ and~$P \subseteq N_G(v)$. Otherwise, we set~$f_v(P)=0$. Note that~$\mathcal{F}$ can be computed in polynomial time since~$c$ is a constant. Finally, we set~$t:=\frac{n}{c}$ and~$k:=0$.

\textit{Correctness.} We next show that~$G$ is a yes-instance of~\textsc{$c$-Clique Cover} if and only if~$(N,\Fa,t,0)$ is a yes-instance of~\textsc{$(\Pi^{\text{COC}}_c+v)$-Skeleton BNSL}.

 $(\Rightarrow)$ Let~$\mathcal{P}=\{ K^i_c \mid i \in \{1, \dots, \frac{n}{c} \} \}$ be a packing of vertex-disjoint cliques of size~$c$ that cover~$G$. For each~$i$, let~$v_i$ be one vertex of~$K^i_c$. We set~$A:= \bigcup_{i \in \{1, \dots, \frac{n}{c}\}} \{(u,v_i) \mid u \in K^i_c \setminus \{v_i\}\}$. Then,~$D:=(N,A)$ is a union of disjoint stars, where the arcs of each star point to the center~$v_i$. Therefore, $D$ is a DAG and every connected component of~$\Sk(D)$ has order~$c$. Therefore,~$\Sk(D) \in \Pi^\text{COC}_c$. Finally, observe that~$f_{v_i}(P^A_{v_i})=1$ for each vertex~$v_i$ by the definition of~$\Fa$. Consequently,~$\sum_{v \in N} f_v(P^A_v) \geq \frac{n}{c}=t$.

$(\Leftarrow)$ Conversely, let~$A$ be an~$(N,\Fa,t)$-valid arc set such that every connected component of~$\Sk(N,A)$ has order at most~$c$. Since~$t=\frac{n}{c}$ and the local scores are either 0 or 1, there are pairwise distinct vertices~$v_1, \dots, v_\frac{n}{c}$ with~$f_v(P^A_{v_i})=1$ for every~$i \in \{1, \dots, \frac{n}{c}\}$. We define~$\mathcal{P}:=\{ K^i:= \{v_i\} \cup P^A_{v_i} \mid i \in \{1, \dots, \frac{n}{c}\}$ and show that~$\mathcal{P}$ is a packing of vertex disjoint size~$c$ cliques that cover~$G$.

By the definition of~$\Fa$, each~$P^A_{v_i}$ is a clique of size~$c-1$ that is completely contained in~$N_G(v_i)$. Thus, every~$K^i \in \mathcal{P}$ is a size-$c$ clique in~$G$. Next, assume towards a contradiction that there are distinct indices~$i$ and~$j$ with~$K^i \cap K^j \neq \emptyset$. Then, since~$v_i \neq v_j$ it follows that~$K^i \cup K^j$ is a connected component of size bigger than~$c$ in~$\Mo(N,A)$. This contradicts the choice of~$A$.

\textit{Moralized Graph.} We next argue why the construction described above is also a correct reduction from~\textsc{$c$-Clique Cover} to~\textsc{$(\Pi_3^\text{COC}$+v)-Moral BNSL}.

$(\Rightarrow)$ Let~$\mathcal{P}$ be a packing of vertex disjoint size-$c$ cliques that cover~$G$, and let~$A$ be defined as above. Then,~$(N,A)$ is a disjoint union of stars that point to the center of~$v_i$. Thus,~$\Mo(N,A)$ is a disjoint union of cliques of size~$c$. Consequently~$\Mo(N,A) \in \Pi^\text{COC}_c$.

$(\Leftarrow)$ Let~$A$ be an~$(N,\Fa,t)$-valid arc set such that every connected component of~$\Mo(N,A)$ has order at most~$c$. Since the edge set of the skeleton of~$(N,A)$ is a subset of the edge set of the moralized graph, we conclude that~$\Sk(N,A) \in \Pi^\text{COC}_c$. Then, by the above argumentation,~$G$ is a yes-instance of~\textsc{$c$-Clique Cover}.
  $\hfill \Box$
\end{proof}
\fi

Note that, given a graph class~$\Pi$, we have~$\Pi = \Pi+0v = \Pi+0e$. Thus, Theorems~\ref{Theorem: Bounded-DelToDeg2 NPh} and~\ref{Theorem: Bounded-3coc NP-h} imply the following.

\begin{corollary}
Let~$\Pi \in \{\Pi_2\} \cup \{\Pi^{\text{COC}}_c \mid c \geq 3\}$. Then, \textsc{$(\Pi + e)$-Skeleton BNSL} and~\textsc{$(\Pi + e)$-Moral BNSL} are NP-hard even if~$k=0$.
\end{corollary}

\section{BNSL with Bounded Number of Edges}

In this section we study BNSL, where we aim to learn a network such that the skeleton or the moralized graph have a bounded number of edges. Formally, we study~\textsc{$(\Pi_0+e)$-Skeleton BNSL} and~\textsc{$(\Pi_0+e)$-Moral BNSL}, where~$\Pi_0$ is the class of edgeless graphs. Clearly,~$\Pi_0$ is monotone.

We first consider~\textsc{$(\Pi_0+e)$-Skeleton BNSL} in Subsection~\ref{Subsection: Bounded Edges Skeleton}. We show that it is fixed-parameter tractable when parameterized by~$k$. Afterwards we consider~\textsc{$(\Pi_0+e)$-Moral BNSL} in Subsection~\ref{Subsection: Bounded Edges Moral}. We observe that it has an XP-time algorithm when parameterized by~$k$ and it is~W[1]-hard for parameterization by~$k+t$. Thus, putting the constraint of a bounded number of edges on the moralized graph makes the learning problem harder than putting a similar constraint on the skeleton.

\subsection{\textsc{$\mathbf{(\Pi_0+e)}$-Skeleton BNSL}} \label{Subsection: Bounded Edges Skeleton}		
In this subsection we consider a version of \textsc{Bayesian Network Structure Learning} where we want to learn a Bayesian network with a bounded number of arcs or---equivalently---a bounded umber of edges in the skeleton. Formally, this is the constrained BNSL problem~\textsc{$(\Pi_0+e)$-Skeleton BNSL}, where~$\Pi_0$ is the class of edgeless graphs.

\paragraph{A Polynomial-Time Algorithm when the Superstructure is Acyclic.} We first show that~\textsc{$(\Pi_0+e)$-Skeleton BNSL} becomes polynomial-time solvable if the superstructure is a DAG. The algorithm uses dynamic programming over a topological ordering of~$S_{ \vec{\Fa}}$. That is, an ordering~$(v_1, \dots, v_n)$ of the vertices of~$N$ such that for every arc~$(v_i,v_j)$ of~$S_{\vec{\Fa}}$ it holds that~$i<j$. 

\begin{proposition} \label{Prop: BA-BNSL poly if Sf DAG}
\textsc{$(\Pi_0+e)$-Skeleton BNSL} can be solved in~$\Oh(\delta_\Fa \cdot k \cdot n)$ time if the superstructure is~a~DAG.
\end{proposition}

\iflong
\begin{proof} Let~$N:=\{1, \dots, n\}$, and let~$(N,\Fa,t,k)$ be an instance of~\textsc{$(\Pi_0+e)$-Skeleton BNSL} such that~$S_{\vec{\Fa}}$ is a DAG. Without loss of generality, let~$(n,n-1, \dots,2,1)$ be a topological ordering of~$S_\Fa$. Hence, for every arc~$(a,b)$ of~$S_{\vec{\Fa}}$ it holds that~$a>b$.

The dynamic programming table~$T$ has entries of the type~$T[i,j]$ for all~$i \in \{0, 1, \dots, n\}$ and~$j \in \{0, 1, \dots, k\}$. Each entry stores the maximum sum of local scores of the vertices~$(i, \dots, 1)$ of the topological ordering that can be obtained by an arc set~$A$ of size at most~$j$. For~$i=0$, we set~$T[0,j]=0$ for all~$j \in \{0, \dots, k\}$. The recurrence to compute an entry for~$i>0$ is
\begin{align*}
T[i,j] = \max_{P \in P_{\Fa}(i), |P| \leq j} (f_i(P) + T[i-1, j- |P|])\text{,}
\end{align*}
and the result can then be computed by checking if~$T[n,k] \geq t$. The corresponding network can be found by traceback. The correctness proof is straightforward and thus omitted. The size of~$T$ is~$\Oh(n \cdot k)$ and each entry~$T[i,j]$ can be computed in~$\Oh(\delta_\Fa)$ time by iterating over the at most~$\delta_\Fa$ triples~$(f_i(P), |P|, P)$ in~$\Fa$ for the vertex~$i$. Therefore, \textsc{$(\Pi_0+e)$-Skeleton BNSL} can be solved in~$\Oh(\delta_\Fa \cdot k \cdot n)$ time if~$S_{\vec{\Fa}}$ is a~DAG. $\hfill \Box$
\end{proof}
\fi

 \paragraph{A Randomized FPT Algorithm.} 
 The dynamic programming algorithm behind Proposition~\ref{Prop: BA-BNSL poly if Sf DAG} can be adapted to obtain an FPT algorithm for~\textsc{$(\Pi_0+e)$-Skeleton BNSL} when parameterized by the number of arcs~$k$. 
The algorithm is based on color coding~\cite{AYZ95}: In a Bayesian network with at most~$k$ arcs, there are at most~$2k$ vertices which are endpoints of such arcs. The idea of color coding is to randomly color the vertices of~$N$ with~$2k$ colors and find a solution~$A$ where all vertices that are incident with arcs of~$A$ are colored with pairwise distinct colors.
\iflong 

To describe the color coding algorithm, we introduce some notation. Let~$N$ be a set of vertices. A function~$\chi: N \rightarrow \{1,\dots,2k\}$ is called a~\emph{coloring (of~$N$ with~$2k$ colors)}. Given a color~$c \in \{1,\dots,2k\}$, we call~$\chi^{-1}(c):=\{v \in N \mid \chi(v)=c\}$ the~\emph{color class of~$c$}. For a subset~$N' \subseteq N$, we let~$\chi(N'):=\{\chi(v) \mid v \in N'\}$, and for a subset~$C \subseteq \{1,\dots,2k\}$ we let~$\chi^{-1}(C):= \bigcup_{c \in C} \chi^{-1}(c)$. The following definition is important for our algorithm.

\begin{definition}\label{Def: Color-Loyal}
Let~$N$ be a set of vertices and let~$\chi: N \rightarrow \{1, \dots, 2k\}$ be a coloring of~$N$. An arc set~$A \subseteq N \times N$ is called~\emph{color-loyal for~$\chi$} if for every color class~$\chi^{-1}(c)$ it holds that
\begin{enumerate}
\item[a)] there is no~$(v,w) \in A$ with~$v,w \in \chi^{-1}(c)$, and
\item[b)] there exists an ordering~$(c_1, \dots, c_{2k})$ of the colors~$1, \dots, 2k$ such that every~$(v,w) \in A$ satisfies~$v \in \chi^{-1}(c_i)$ and~$w \in \chi^{-1}(c_j)$ for some~$i<j$.
\end{enumerate}
\end{definition}
Consider the following auxiliary problem.

\begin{center}
	\begin{minipage}[c]{.9\linewidth}
          \textsc{Colored $(\Pi_0+e)$-Skeleton BNSL}\\
          \textbf{Input}: A set of vertices~$N$, local scores~$\Fa=\{f_v \mid v \in N\}$, two integers~$t,k \in \mathds{N}$, and a coloring~$\chi: N \rightarrow \{1, \dots, 2k\}$.\\
          \textbf{Question}: Is there an~$(N,\Fa,t)$-valid arc set~$A \subseteq N \times N$ that is color-loyal for~$\chi$ and~$|A| \leq k$?
	\end{minipage}
\end{center}

Recall that the intuitive idea behind the color coding algorithm is to randomly color the vertices of~$N$ with~$2k$ colors and find a solution~$A$ satisfying a constraint regarding the random coloring. \textsc{Colored $(\Pi_0+e)$-Skeleton BNSL} is the problem that we solve after we randomly choose the coloring. The correspondence between \textsc{$(\Pi_0+e)$-Skeleton BNSL} and its colored version is stated in the following proposition.

\begin{proposition} \label{Prop: Ba-BNSL eq Colored Ba-BNSL}
Let~$I=(N, \Fa, t, k)$ be an instance of \textsc{$(\Pi_0+e)$-Skeleton BNSL} 
If~$I$ is a yes-instance of~\textsc{$(\Pi_0+e)$-Skeleton BNSL}, then there exist at least~$(2k)!(2k)^{(n-2k)}$ colorings~$\chi:N \rightarrow \{1,2, \dots, 2k\}$ such that~$(N, \Fa,t,k,\chi)$ is a yes-instance of~\textsc{Colored $(\Pi_0+e)$-Skeleton BNSL}.
\end{proposition}

\iflong
\begin{proof}
Let~$I$ be a yes-instance of~\textsc{$(\Pi_0+e)$-Skeleton BNSL}. Then, there exists an~$(N,\Fa,t)$-valid arc set~$A$ with~$|A| \leq k$. Observe that~$|A| \leq k$ implies that there are at most~$2k$ vertices of~$N$ that are endpoints of arcs in~$A$. 

We define a set~$\mathbb{X}$ of colorings of~$N$, such that~$\chi \in \mathbb{X}$ if~$\chi$ assigns all vertices incident with arcs of~$A$ to pairwise distinct colors. Since at most~$2k$ vertices are endpoints of arcs in~$A$, we conclude that~$|\mathbb{X}|\geq(2k)!(2k)^{(n-2k)}$. Let~$I':=(N,\Fa,t,k,\chi)$ be an instance of~\textsc{Colored $(\Pi_0+e)$-Skeleton BNSL} for some arbitrary~$\chi \in \mathbb{X}$. We show that~$A$ is a solution of~$I'$. Note that~$A$ is~$(N, \Fa, t)$-valid, so it remains to show that~$A$ is color-loyal for~$\chi$.

Since all endpoints of arcs in~$A$ have pairwise distinct colors under~$\chi$, at most one vertex in each color class~$\chi^{-1}(c)$ has a non-empty parent set. Let~$D':=(N',A)$ be the subgraph we obtain when removing all isolated vertices from~$(N,A)$. Then,~$D'$ has at most~$2k$ vertices. Consider a topological ordering~$\tau:=(v_1, \dots, v_{|N'|})$ of the DAG~$D'$. Since the vertices of~$N'$ belong to pairwise different color classes, there exists a color sequence~$(c_1, \dots, c_{2k})$ where~$v_i \in \chi^{-1}(c_i)$ for all~$i \in \{1, \dots, |N'|\}$. Let~$(v,w) \in A$. Since~$\tau$ is a topological ordering of~$D'$ we have~$v \in \chi^{-1}(c_i)$ and~$w \in \chi^{-1}(c_j)$ for some~$i<j$ Thus,~$A$ is color loyal for~$\chi$. Consequently,~$I'$ is a yes-instance of~\textsc{Colored $(\Pi_0+e)$-Skeleton BNSL}. $\hfill \Box$
\end{proof}
\fi

\iflong We next show that~\textsc{Colored $(\Pi_0+e)$-Skeleton BNSL} parameterized by~$k$ is fixed-parameter tractable.\fi

\begin{proposition} \label{Prop: Colored BA-BNSL Dyn Prog}
\textsc{Colored $(\Pi_0+e)$-Skeleton BNSL} can be solved in~$\Oh(4^k k^2 n^2 \delta_\Fa)$ time.
\end{proposition}

\begin{proof}
\iflong
Let~$I=(N,\Fa,t,k,\chi)$ be an instance of~\textsc{Colored $(\Pi_0+e)$-Skeleton BNSL} with~$f_v(\emptyset)=0$ for every~$v\in N$ and let~$C:=\{1,2, \dots, 2k\}$ denote the set of colors. By Proposition~\ref{Prop: Translation}, every instance of a constrained BNSL problem can be transformed into such an instance in~$\Oh(|\Fa|) = \Oh(n^2 \cdot \delta_{\Fa})$ time.
\fi

We fill a dynamic programming table~$T$ with entries of type~$T[C',k']$ where~$C' \subseteq C$ and~$k' \in \{0,1, \dots, k\}$. Every entry stores the maximum value of~$\sum_{v \in \chi^{-1}(C')} f_v(P^A_v)$ over all possible DAGs~$D=(N,A)$, where~$A \subseteq \chi^{-1}(C') \times \chi^{-1}(C')$ is color-loyal for~$\chi$ and contains at most~$k'$ arcs. We set~$T[\{c\},k'] := \sum_{w \in \chi^{-1}(c)} f_w(\emptyset) \iflong = 0 \fi$ for every~$c \in C$ and~$k' \in \{0,1, \dots,2k\}$. The recurrence to compute the entry for~$C' \subseteq C$ with~$|C'| > 1$ is
\begin{align*}
&T[C',k'] = \max_{c \in C'} \max_{v \in \chi^{-1}(c)} \max_{\substack{P \in \mathcal{P}_\Fa(v) \\ |P| \leq k' \\ \chi(P) \subseteq C' \setminus \{c\}}} T[C' \setminus \{c\}, k' - |P|] + f_v(P) + \sum_{w \in \chi^{-1}(c) \setminus \{v\}} f_w (\emptyset) \text{.} 
\end{align*}
The result can be computed by checking if~$T[C,k] \geq t$. Note that the corresponding network can be found via traceback. The correctness proof is straightforward and thus omitted. 
We next consider the running time. The size of~$T$ is~$\Oh(2^{2k} \cdot k)$. Note that~$\sum_{w \in \chi^{-1}(c) \setminus \{v\}} f_w (\emptyset)=0$ since we applied the preprocessing from Proposition~\ref{Prop: Translation}. Therefore, each entry can be computed in~$\Oh(2k \cdot n^2 \cdot \delta_\Fa)$ time by iterating over all~$2k$ possible colors~$c$, all~$\Oh(n)$ vertices~$v$ in the corresponding color class, and all~$\Oh(\delta_{\Fa}n)$ vertices in possible parent sets of~$v$. Altogether, \textsc{Colored $(\Pi_0+e)$-Skeleton BNSL} can be solved in~$\Oh(4^k k^2 n^2 \delta_\Fa)$~time.
 $\hfill \Box$
\end{proof}

Propositions~\ref{Prop: Ba-BNSL eq Colored Ba-BNSL} and~\ref{Prop: Colored BA-BNSL Dyn Prog} give the following.

\begin{theorem} \label{Theorem: ColorCoding Algo BA-BNSL}
There exists a randomized algorithm for~\textsc{$(\Pi_0+e)$-Skeleton BNSL} that, in $\Oh((2e)^{2k} \cdot k^2 n^2 \delta_\Fa)$~time returns \emph{no}, if given a no-instance and returns \emph{yes} with probability at least~$1-\frac{1}{e}$, if given a yes-instance.
\end{theorem}

\iflong
\begin{proof}
\textit{Algorithm.} We describe the randomized algorithm applied on an instance~$I=(N,\Fa,t,k)$. Repeat the following two steps~$e^{2k}$ times independently:
\begin{enumerate}
\item[1.] Color every vertex of~$N$ independently with one color from the set~$\{1, \dots, 2k\}$ with uniform probability. Let~$\chi: N \rightarrow \{1, \dots, 2k\}$ be the resulting coloring.
\item[2.] Apply the algorithm behind Proposition~\ref{Prop: Colored BA-BNSL Dyn Prog} to decide if~$(N,\Fa,t,k, \chi)$ is a yes-instance of \textsc{Colored $(\Pi_0+e)$-Skeleton~BNSL}. If this is the case, then return \emph{yes}.
\end{enumerate}
If for none of the~$e^{2k}$ applications the answer~\emph{yes} was returned in Step~2, then return \emph{no}.

\textit{Running Time.} We first consider the running time of the algorithm. By Proposition~\ref{Prop: Colored BA-BNSL Dyn Prog}, one application of the algorithm described above can be performed in~$\Oh(2^{2k} \cdot k^2 n^2 \delta_\Fa)$ time. Thus, the overall running time of the algorithm is~$\Oh((2e)^{2k} \cdot k^2 n^2 \delta_\Fa)$ as claimed. 

\textit{Error Probability.} We next consider the error probability of the algorithm. Given a no-instance, there exists no~$(N,\Fa,t)$-valid arc set~$A$ with~$|A| \leq k$ and therefore the answer \emph{no} is always returned in Step 2. Conversely, given a yes-instance~$I$, we conclude from Proposition~\ref{Prop: Ba-BNSL eq Colored Ba-BNSL} that there exist at least~$(2k)!(2k)^{(n-2k)}$ colorings~$\chi$ such that~$(N,\Fa,t,k,\chi)$ is a yes-instance of~\textsc{Colored $(\Pi_0+e)$-Skeleton BNSL}. The probability of randomly choosing such coloring~$\chi$ is at least
\begin{align*}
\frac{(2k)!(2k)^{(n-2k)}}{(2k)^n} \geq e^{-2k}.
\end{align*} Hence, by repeating the algorithm independently~$e^{2k}$ times, the algorithm returns \emph{yes} with a constant probability of~$1-\frac{1}{e}$. $\hfill \Box$
%
%
\end{proof}
\fi

The \iflong randomized algorithm from Theorem~\ref{Theorem: ColorCoding Algo BA-BNSL} \else algorithm \fi can be derandomized with standard techniques~\cite{NSS95,CFKLMPPS15}.

\begin{corollary} \label{Cor: BA-BNSL FPT} 
\textsc{$(\Pi_0+e)$-Skeleton BNSL} can be solved in~$(2e)^{2k} \cdot k^{\Oh(\log(k))} \cdot |I|^{\Oh(1)}$ time.
\end{corollary}

%
%
%


Bounding the number of arcs appears to be not so relevant for practical use. However, the algorithm might be useful as a heuristic upper bound: If we want to add a restricted number of dependencies to a given Bayesian network, the result of \textsc{$(\Pi_0+e)$-Skeleton BNSL} gives an upper bound for the profit we can expect from that modification. 

\subsection{\textsc{$\mathbf{(\Pi_0+e)}$-Moral BNSL}} \label{Subsection: Bounded Edges Moral}

We now study a version of~\textsc{BNSL}, where we aim to learn a network whose moralized graph has a bounded number of edges.
Formally, this is the constrained BNSL problem~\textsc{$(\Pi_0 + e)$-Moral BNSL}, where~$\Pi_0$ is the class of edgeless graphs.

Note that, given some~$k \in \mathds{N}$, a DAG~$D=(N,A)$ with~$\Mo(D) \in (\Pi_0 + ke)$ is a DAG whose moralized graph contains at most~$k$ edges. Observe that there is a simple XP-time algorithm that solves~\textsc{$(\Pi_0 + e)$-Moral BNSL} when parameterized by~$k$: Let~$I=(N,\Fa,t,k)$ be an instance of~\textsc{$(\Pi_0 + e)$-Moral BNSL}. If~$(N,A)$ is a Bayesian network whose moralized graph has~$k$ or less edges, then~$|A| \leq k$. We can find~$A$ by iterating over all~$\Oh(n^{2k})$ possible arc sets~$A$ with~$|A| \leq k$. If we consider the superstructure~$S_{\vec{\Fa}}=(N,A_{\Fa})$ we can instead iterate over all possible subsets~$A'\subseteq A_\Fa$ with~$|A'| \leq k$. Afterwards, we check if~$A'$ is~$(N,\Fa,t)$-valid and if~$\Mo(N,A') \in \Pi_0 + ke$. This implies the following.

\begin{proposition} \label{Prop: Bounded-Arc BNSL in XP}
\textsc{$(\Pi_0 + e)$-BNSL} can be solved in~$m^k \cdot |I|^{\Oh(1)}$~time.
\end{proposition}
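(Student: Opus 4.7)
The plan is to formalize the brute-force argument sketched just before the statement. The key structural observation is that every arc $(u,v)\in A$ contributes the edge $\{u,v\}$ to $\Mo((N,A))$, so if the moralized graph has at most $k$ edges, then the arc set itself satisfies $|A|\le k$. In particular, a solution can only contain arcs that appear in the superstructure $S_{\vec{\Fa}}=(N,A_\Fa)$: by Proposition~\ref{Prop: Potential Parents}, in the monotone setting we may assume every vertex's parent set is a potential parent set (or empty), and so every arc of a solution is an arc of $S_{\vec{\Fa}}$.

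Given this, I would enumerate all candidate arc sets $A'\subseteq A_\Fa$ with $|A'|\le k$. There are at most $\sum_{i=0}^{k}\binom{m}{i}\in \Oh(m^k)$ such subsets (equivalently $\Oh(n^{2k})$ since $m<n^2$), and each can be listed in polynomial time per set using a standard enumeration. For each candidate $A'$, I would verify in polynomial time that:
\begin{enumerate}
\item $(N,A')$ is a DAG (topological sort),
\item $\sum_{v\in N} f_v(P^{A'}_v)\ge t$ (look up the scores in $\Fa$; any $P^{A'}_v$ not appearing in $\Fa$ contributes~$0$), and
\item $\Mo((N,A'))$ has at most $k$ edges, which is checked by counting the direct edges plus the moral edges obtained from pairs of vertices sharing a common child in $(N,A')$.
\end{enumerate}
If some candidate passes all three tests, output \emph{yes}; otherwise output \emph{no}.

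For correctness, one direction is immediate: if a candidate $A'$ passes the checks, it is an $(N,\Fa,t)$-valid arc set with $\Mo((N,A'))\in \Pi_0+ke$, so $(N,\Fa,t,k)$ is a yes-instance. Conversely, if $(N,\Fa,t,k)$ is a yes-instance, Proposition~\ref{Prop: Potential Parents} gives a solution~$A$ using only arcs of the superstructure, and the structural observation yields $|A|\le k$; thus $A$ itself is among the enumerated candidates and will be accepted by the verification step.

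There is no real obstacle here: the only thing to be careful about is the size bound $|A|\le k$, which relies on the fact that arcs inject into edges of the moralized graph, and the restriction of the search space to $A_\Fa$, which is justified by monotonicity of~$\Pi_0$ and Proposition~\ref{Prop: Potential Parents}. The total running time is $\Oh(m^k)$ enumeration multiplied by $\poly(|I|)$ verification, giving the claimed bound.
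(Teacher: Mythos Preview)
Your proposal is correct and follows essentially the same approach as the paper: both argue that $|A|\le k$ because arcs inject into edges of the moralized graph, restrict the search to arc subsets of the superstructure of size at most~$k$, and verify each candidate in polynomial time. Your write-up is somewhat more explicit (spelling out the role of Proposition~\ref{Prop: Potential Parents} and the verification steps), but the underlying argument is identical to the paper's.
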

\else 
and \textsc{$(\Pi_0 + e)$-BNSL} can be solved in $n^{2k} \cdot |I|^{\Oh(1)}$ time by a brute-force algorithm.
\fi
To put this simple XP-time algorithm into context, we show that~\iflong{}\textsc{$(\Pi_0 + e)$-BNSL} is~W[1]-hard when parameterized by~$t+k$. Hence, \fi there is little hope to solve~\textsc{$(\Pi_0 + e)$-BNSL} in time~$g(t+k) \cdot |I|^{\Oh(1)}$ for any computable function~$k$.

\begin{theorem} \label{Theorem: Bounded-Edges BNSL W[1]-h}
\textsc{$(\Pi_0 + e)$-Moral BNSL} is W[1]-hard when parameterized by~$t+k$, even when~$S_{\vec{\Fa}}$ is a DAG, the maximum parent set size is 3, and every local score is either 1 or 0.
\end{theorem}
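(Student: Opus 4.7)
The plan is to reduce from \textsc{Clique} parameterized by the clique size $\ell$ (W[1]-hard~\cite{DRGM95}) to \textsc{$(\Pi_0+e)$-BNSL} parameterized by $t+k$. The construction follows the spirit of Proposition~\ref{Prop: Bounded-DissNo W[1]-h}, but adds a single \emph{hub} vertex whose purpose is to convert ``many selected edges touching few vertices'' into a tight edge budget on the moralized graph.

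Given $(G=(V,E),\ell)$, set $N := V \cup E \cup \{h\}$, treating each $e \in E$ as an auxiliary vertex and introducing one hub vertex $h$. For each $e = \{u,v\} \in E$, put $f_e(\{u,v,h\}) := 1$ and set all other local scores to zero. Finally, let $t := \binom{\ell}{2}$ and $k := 4\binom{\ell}{2} + \ell$, so that $t+k \in \Oh(\ell^2)$. Every potential parent set has size three, and the superstructure has all arcs directed from $V \cup \{h\}$ into $E$, hence is bipartite and acyclic, satisfying the extra conditions in the statement.

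For the counting step, Proposition~\ref{Prop: Potential Parents} lets us assume that in any solution $A$ the only vertices with a non-empty parent set are the \emph{selected} $e = \{u,v\} \in E$ with $P_e^A = \{u,v,h\}$; let $X$ denote this set and $V_X \subseteq V$ the $V$-vertices incident to some $e \in X$. Each selected $e$ contributes six pairwise distinct moralized-graph edges: the arc edges $\{u,e\},\{v,e\},\{h,e\}$ and the moral edge $\{u,v\}$ (all unique to $e$), together with the two hub moral edges $\{u,h\},\{v,h\}$ (shared across all selections touching $u$ or $v$). Hence the moralized graph has exactly $4|X| + |V_X|$ edges.

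The correctness then follows. In one direction, an $\ell$-clique $K \subseteq V$ induces the solution $X = E(G[K])$ with $|X| = \binom{\ell}{2}$ and $|V_X| = \ell$, giving score $t$ and edge count $k$. In the other, any valid solution satisfies $|X| \geq \binom{\ell}{2}$ and $4|X| + |V_X| \leq k = 4\binom{\ell}{2} + \ell$, which forces $|V_X| \leq \ell$; together with the graph-theoretic bound $|X| \leq \binom{|V_X|}{2} \leq \binom{\ell}{2}$, equality holds throughout and $V_X$ must be a clique of size~$\ell$. The key design obstacle is to make the edge budget simultaneously encode both $|X|$ and $|V_X|$ with matching tightness, which is exactly what the hub $h$ accomplishes: the moral edges $\{v,h\}$ act as a per-vertex tax that lets a clique fit the budget exactly, while any $\binom{\ell}{2}$-edge subgraph spanning more than $\ell$ vertices exceeds it.
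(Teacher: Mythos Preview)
Your proof is correct and follows essentially the same approach as the paper: both introduce a single hub vertex whose moral edges to~$V$ serve as a per-vertex tax, and both arrive at the identical parameters $t=\binom{\ell}{2}$ and $k=4\binom{\ell}{2}+\ell$ with size-three parent sets and an acyclic superstructure.

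The only noteworthy difference is the source problem and the resulting backward-direction argument. The paper reduces from \textsc{Multicolored Clique} and creates one auxiliary vertex $w_{\{i,j\}}$ per \emph{color pair}; to reach score~$t$ every $w_{\{i,j\}}$ must be activated, and the argument then counts moral edges incident to the hub color-class by color-class to force exactly one selected vertex per class. You reduce from plain \textsc{Clique} with one auxiliary vertex per \emph{edge} and derive the closed-form count $4|X|+|V_X|$ for the moralized graph, after which the two-line inequality chain $|X|\ge\binom{\ell}{2}$, $|V_X|\le\ell$, $|X|\le\binom{|V_X|}{2}$ pins everything down. Your route is slightly more elementary (no multicolored machinery, cleaner arithmetic), while the paper's route makes the ``one vertex per class'' conclusion structurally automatic; neither buys anything the other lacks for this theorem.
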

\iflong
\begin{proof}
We prove W[1]-hardness by giving a parameterized reduction from the following problem.
\begin{center}
	\begin{minipage}[c]{.9\linewidth}
          \textsc{Multicolored Clique}\\
          \textbf{Input}: A properly $\ell$-colored undirected graph $G=(V,E)$ with color classes~$C_1,\ldots,C_\ell \subseteq V$.\\
          \textbf{Question}: Is there a clique containing one vertex from each color class in $G$? 
	\end{minipage}
\end{center}
\textsc{Multicolored Clique} is  W[1]-hard when parameterized by~$\ell$~\cite{P03,FHRV09}. 

\textit{Construction.} Let~$G=(V,E)$ be a properly~$\ell$-colored undirected graph with color classes $C_1, \dots, C_\ell$. We describe how to construct an equivalent instance~$I=(N,\Fa,t,k)$ of~\textsc{$(\Pi_0 + e)$-Moral BNSL} from~$G$. First, we define the vertex set~$N$. Every vertex~$v \in V$ becomes a vertex in~$N$ and for every pair~$\{C_i, C_j\}$ ($i \neq j$) of color classes we add a vertex~$w_{\{i,j\}}$ to~$N$. Let~$W$ be the set of all such vertices~$w_{\{i,j\}}$. Moreover, we add a vertex~$x$ to~$N$ which we will call the~\emph{central vertex} for the rest of the proof. 

Second, we define the local scores~$\Fa$. For every vertex~$u \in V \cup \{x\}$ and every~$P \subseteq N \setminus \{u\}$, we set~$f_{u}(P) := 0$. It remains to define the local scores for the vertices in~$W$. Let~$i, j \in \{1, \dots, \ell\}$ with~$i \neq j$. We set~$f_{w_{\{i,j\}}}(\{u,v,x\}):=1$ if there is an edge~$\{u,v\} \in E$ connecting a vertex~$u \in C_i$ and~$v \in C_j$. For all other sets~$P$, we set~$f_{w_{\{i,j\}}}(P):=0$. Observe that the value of the local scores is either~$0$ or~$1$ and that  there are exactly~$|E|$ values with~$f_{v}(P)=1$. 

Finally, we set~$t:= \binom{\ell}{2}$ and~$k:=4 \binom{\ell}{2} + \ell$. Note that~$t+k \in \Oh(\ell^2)$. 

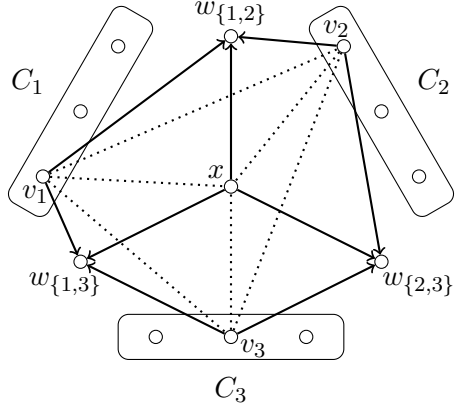
\begin{figure}
\begin{center}
\begin{tikzpicture}
\tikzstyle{knoten}=[circle,fill=white,draw=black,minimum size=5pt,inner sep=0pt]
\tikzstyle{bez}=[inner sep=0pt]

\draw[rounded corners] (-1.5, -0.3) rectangle (1.5, 0.3) {};
\node[knoten]  at (-1,0) {};
\node[knoten] (v3) at (0,0) {};
\node[knoten]  at (1,0) {};

\begin{scope}[xshift=-2cm, yshift=3cm, rotate around={60:(0,0)}]
\draw[rounded corners] (-1.5, -0.3) rectangle (1.5, 0.3) {};
\node[knoten] (v1) at (-1,0) {};
\node[knoten]  at (0,0) {};
\node[knoten]  at (1,0) {};
\end{scope}

\begin{scope}[xshift=2cm, yshift=3cm, rotate around={-60:(0,0)}]
\draw[rounded corners] (-1.5, -0.3) rectangle (1.5, 0.3) {};
\node[knoten] (v2) at (-1,0) {};
\node[knoten]  at (0,0) {};
\node[knoten]  at (1,0) {};
\end{scope}

\node[knoten] (x) at (0,2) {};
\node[bez] at (-0.2,2.2) {$x$};

\node[knoten] (w13) at (-2,1) {};
\node[bez] at (-2.2,0.7) {$w_{\{1,3\}}$};

\node[knoten] (w23) at (2,1) {};
\node[bez] at (2.5,0.7) {$w_{\{2,3\}}$};

\node[knoten] (w12) at (0,4) {};
\node[bez] at (0,4.3) {$w_{\{1,2\}}$};

\draw[->, thick]  (v1) to (w13);
\draw[->, thick]  (v3) to (w13);
\draw[->, thick]  (x) to (w13);

\draw[->, thick]  (v1) to (w12);
\draw[->, thick]  (v2) to (w12);
\draw[->, thick]  (x) to (w12);

\draw[->, thick]  (v3) to (w23);
\draw[->, thick]  (v2) to (w23);
\draw[->, thick]  (x) to (w23);


\draw[dotted, thick]  (v1) to (v2);
\draw[dotted, thick]  (v3) to (v2);
\draw[dotted, thick]  (v1) to (v3);

\draw[dotted, thick]  (v1) to (x);
\draw[dotted, thick]  (v2) to (x);
\draw[dotted, thick]  (v3) to (x);

\node[bez] at (0,-0.7) {$C_3$};
\node[bez] at (-2.7,3.4) {$C_1$};
\node[bez] at (2.7,3.4) {$C_2$};

\node[bez] at (-2.6,1.9) {$v_1$};
\node[bez] at (1.4,4.1) {$v_2$};
\node[bez] at (0.3,-0.15) {$v_3$};

\end{tikzpicture}
\end{center}
\caption{\small{An example of the construction given in the proof of Theorem \ref{Theorem: Bounded-Edges BNSL W[1]-h}. The original instance contains a multicolored clique on the vertices~$v_1 \in C_1$, $v_2 \in C_2$, and~$v_3 \in C_3$. The directed edges represent the arcs of a~DAG with score~$3$ such that the moralized graph contains~$15$ edges. The dotted edges correspond to the moralized~edges.}}\label{Figure: Pi0 W1-h}
\end{figure}
Observe that the maximum parent set size is 3 and the superstructure~$S_{\vec{\Fa}}$ is a DAG since every vertex in~$V \cup \{x\}$ has in-degree 0 in~$S_{\vec{\Fa}}$ and every vertex in~$W$ is a sink in~$S_{\vec{\Fa}}$. Figure~\ref{Figure: Pi0 W1-h} shows an example of the construction.

\textit{Intuition.} Before we show the correctness of the reduction, we start with some intuition. To obtain score~$t= \binom{\ell}{2}$, every vertex in~$W$ must choose a parent set with score~$1$. Hence, every~$w_{\{i,j\}}$ chooses a parent set~$\{u,v,x\}$ with~$u \in C_i$ and~$v \in C_j$. This choice represents the choice of an edge~$\{u,v\} \in E$ between the vertices~$u$ and~$v$ of a multicolored clique in~$G$. Considering the moralized graph of the resulting Bayesian network, the bound of the number of edges gives a bound of the number of moral edges that are incident with the central vertex~$x$. This guarantees that the chosen edges form a multicolored clique in the following sense: If the parent sets of vertices in~$W$ do not correspond to the edges of a multicolored clique in~$G$, then the moralized graph has more than~$k=4 \binom{\ell}{2} + \ell$ edges.
\iflong

\textit{Correctness.} We show that~$G$ is a yes-instance of \textsc{Multicolored Clique} if and only if~$(N,\Fa,t,k)$ is a yes-instance of~\textsc{$(\Pi_0+e)$-Moral BNSL}.

$(\Rightarrow)$ Let~$S:=\{v_1, \dots, v_\ell\}$ with~$v_i \in C_i$ be a multicolored clique in~$G$. We define the arc set~$A:=\{ (v_i, w_{\{i,j\}}), (v_j, w_{\{i,j\}}), (x, w_{\{i,j\}}) \mid w_{\{i,j\}} \in W \}$. We show that~$A$ is~$(N,\Fa,t)$-valid and that there are at most~$k$ edges in~$\Mo(N,A)$. 

Since the vertices of~$S$ are pairwise adjacent in~$G$, it holds that~$f_{w_{\{i,j\}}}(P^A_{w_{\{i,j\}}})=1$ for every~$w_{\{i,j\}} \in W$ and therefore~$\sum_{v \in N} f_v(P^A_v) = \binom{\ell}{2} = t$. Moreover, since~$S_{\vec{\Fa}}$ is a DAG, we conclude that~$(N,A)$ is a DAG. Hence,~$A$ is~$(N,\Fa,t)$-valid.

It remains to check that there are at most~$k= 4 \cdot \binom{\ell}{2} + \ell$ edges in~$\Mo(N,A)$. First, we consider the number of arcs in~$(N,A)$. Since every vertex in~$W$ has~three parents in~$(N,A)$, we conclude~$|A|=3 \cdot \binom{\ell}{2}$. Next, we consider the moral edges in~$\Mo(N,A)$. Let~$a, b \in N$ be two vertices that have a common child in~$(N,A)$. Observe that all vertices in~$V \setminus S \cup W$ have out-degree 0 in~$(N,A)$. We conclude~$a,b \in S \cup \{x\}$. Then, there are at most~$|\{\{a,b\} \mid a \in S, b \in S \cup \{x\}\}| = \binom{\ell}{2} + \ell$ moral edges. Hence, there are at most~$k= 4 \cdot \binom{\ell}{2} + \ell$ edges in~$\Mo(N,A)$.

$(\Leftarrow)$ Conversely, let~$A\subseteq N \times N$ be an~$(N,\Fa,t)$-valid arc set such that~$\Mo(N,A)$ contains at most~$k= 4 \binom{\ell}{2} + \ell$ edges. We show that there exists a multicolored clique~$S$ in~$G$.

Since~$A$ is~$(N,\Fa,t)$-valid, we know that~$\sum_{v \in N} f_v(P^A_v) = \binom{\ell}{2}$ and thus~$f_{w_{\{i,j\}}}(P^A_{w_{\{i,j\}}})=1$ for every~$w_{\{i,j\}} \in W$. By the construction of~$\Fa$ this implies~$|P^A_{w_{\{i,j\}}}|=3$ for every~$w_{\{i,j\}} \in W$. We conclude~$|A|= 3 \binom{\ell}{2}$. Hence, there are at most~$\binom{\ell}{2}+\ell$ moral edges in~$\Mo(N,A)$. 

Before we define the multicolored clique~$S$, we take a closer look at the moral edges that are incident with vertices of the color classes~$C_1, \dots, C_\ell$. Let~$C_i$ and $C_j$ be distinct color classes. Then, since~$P^A_{w_{\{i,j\}}}$ contains one vertex from~$C_i$ and one vertex from~$C_j$, there exists a moral edge between the vertices of~$C_i$ and~$C_j$. Hence, there are at least~$\binom{\ell}{2}$ moral edges between the color classes of~$C_1, \dots, C_\ell$. Now, since the overall number of moral edges in~$\Mo(N,A)$ is at most~$\binom{\ell}{2}+\ell$, we may conclude that there are at most~$\ell$ moral edges that are incident with the central vertex~$x$. We use the following claim to define a multicolored clique~$S$ in~$G$.

\begin{claim} \label{Claim: One Moral edge per color class}
For every color class~$C_i$ it holds that~$|E_{\Mo(N,A)}(C_i,\{x\})| = 1$.
\end{claim}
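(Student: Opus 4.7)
The plan is to combine the edge budget argument already established above with a counting observation about moral edges forced by the structure of the parent sets.

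First I would recall what we already have: since $A$ is $(N,\Fa,t)$-valid, every $w_{\{i,j\}} \in W$ must attain the local score $1$, so by the construction of $\Fa$ its parent set $P^A_{w_{\{i,j\}}}$ is of the form $\{u,v,x\}$ with $u \in C_i$ and $v \in C_j$. In particular, the central vertex $x$ is a common parent of $w_{\{i,j\}}$ together with some $u \in C_i$, so $\{u,x\}$ is a moral edge of $\Mo((N,A))$. This immediately shows that for every color class~$C_i$ we have $|E_{\Mo((N,A))}(C_i,\{x\})|\ge 1$, since any index $j \neq i$ provides such a witness~$u$.

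Next, I would exploit the edge budget. The discussion preceding the claim already argues that at most~$\ell$ moral edges of~$\Mo((N,A))$ are incident with~$x$, because the $\binom{\ell}{2}$ moral edges forced between distinct color classes eat up all but~$\ell$ of the $\binom{\ell}{2}+\ell$ available moral edges. Summing the lower bound from the previous paragraph over all $\ell$ color classes gives
\begin{equation*}
\sum_{i=1}^{\ell} |E_{\Mo((N,A))}(C_i,\{x\})| \;\ge\; \ell.
\end{equation*}
Since the $C_i$ are pairwise disjoint and the sum equals the total number of moral edges from~$x$ to $\bigcup_i C_i$, which is at most~$\ell$, equality must hold termwise. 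Hence $|E_{\Mo((N,A))}(C_i,\{x\})| = 1$ for every~$i$.

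There is no real obstacle here; the only subtlety is to check that the moral edges counted in the two different ways (pairs between color classes, and edges from~$x$ to color classes) are genuinely disjoint, which is clear because one type has both endpoints in $V$ and the other has $x$ as an endpoint, and that the lower bound of $\binom{\ell}{2}$ moral edges strictly between color classes was indeed established just before the claim. Once those accountings are in place, the claim drops out by a one-line pigeonhole argument.
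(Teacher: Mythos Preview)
Your proposal is correct and follows essentially the same approach as the paper: establish that each color class contributes at least one moral edge to~$x$, use the previously derived bound of at most~$\ell$ moral edges incident with~$x$, and conclude equality via pigeonhole. The paper's proof is slightly terser (it does not write out the sum explicitly) and additionally remarks that no arc of~$A$ connects~$x$ to any vertex in~$C_i$, so that $E_{\Mo((N,A))}(C_i,\{x\})$ consists solely of moral edges; you implicitly rely on this when equating your sum with the number of moral edges, so it would be worth stating it, but the argument is otherwise identical.
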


\begin{proof}
Let~$C_i$ be a color class. Note that there is no arc in~$A$ connecting~$x$ with some vertices in~$C_i$. So, $E_{\Mo(N,A)}(C_i,\{x\})$ contains only moral edges. For every~$j \in \{1, \dots, \ell\}$ with~$j \neq i$, the vertex~$w_{\{i,j\}}$ has a parent set~$P^A_{w_{\{i,j\}}}$ containing some~$v \in C_i$, $u \in C_j$ and~$x$. 
Then, there exist moral edges~$\{u,v\}$, $\{v,x\}$, and~$\{u,x\}$. Therefore, every color class contains a vertex that is adjacent to~$x$ by a moral edge. Since there are at most~$\ell$ moral edges incident with~$x$, we conclude~$|E_{\Mo(N,A)}(C_i,\{x\})| = 1$. $\hfill \Diamond$
\end{proof}

We now define~$S:=\{v_1, v_2, \dots, v_\ell\}$, where~$v_i$ is the unique element in~$E_{\Mo(N,A)}(C_i,\{x\})$. Observe that this implies~$P^A_{w_{\{i,j\}}}=\{v_i, v_j, x\}$ for all~$v_i,v_j \in S$ with~$i \neq j$. We show that~$S$ is a multicolored clique in~$G$. Obviously, the vertices of~$S$ are elements of distinct color classes. Thus, it remains to show that the vertices in~$S$ are pairwise adjacent in~$G$. Let~$v_i, v_j \in S$ with~$i \neq j$. Then,~$P^A_{w_{\{i,j\}}}=\{v_i, v_j, x\}$ and since~$f_{w_{\{i,j\}}}(P^A_{w_{\{i,j\}}})=1$ it follows from the construction of~$\Fa$ that there is an edge~$\{v_i, v_j\} \in E$. Hence, $S$ is a multicolored clique in~$G$.
\else

Due to lack of space, the correctness proof is deferred to a full version of this work. 
\fi
$\hfill \Box$
\end{proof}
\fi

\section{BNSL with Bounded Feedback Edge Set}

In this section, we provide a first step into the study of the parameterized complexity of learning a Bayesian network whose moralized graph has a feedback edge set of bounded size. Formally, this is the constrained BNSL problem~\textsc{$(\Pi_F+e)$-Moral BNSL}, where~$\Pi_F$ is the class of forest, which are undirected acyclic graphs. 
Recall that for efficient inference it is desirable to have a small treewidth in the moralized graph~\cite{D09}. As all other parameters considered in this work, the size of a feedback edge set is a upper bound for the treewidth. Thus, learning a network where the moralized graph has a bounded feedback edge set is motivated from a practical point of~view. 

Before we consider~\textsc{$(\Pi_F+e)$-Moral BNSL}, we briefly discuss~\textsc{$(\Pi_F+e)$-Skeleton BNSL}. When~$k=0$, this is the problem of learning a Bayesian network with an acyclic skeleton, also known as polytree. Finding an optimal polytree is NP-hard even on instances with maximum parent set size 2~\cite{D99}. Consequently, \textsc{$(\Pi_F+e)$-Skeleton BNSL} is NP-hard even if~$k=0$ is fixed. In contrast, the case~$k=0$ can be solved efficiently if we consider the moralized graph instead of the skeleton. This can be seen as follows. 
Let~$D:=(N,A)$ be a DAG such that~$\Mo(D)$ is acyclic. Then, each~$v \in N$ has at most one parent in~$D$, since otherwise~$\Mo(D)$ contains a triangle. Thus,~$D:=(N,A)$ is a branching.
Consequently, \textsc{$(\Pi_F+e)$-Moral BNSL} with~$k=0$ can be solved by computing an optimal branching which can be done in polynomial time~\cite{CL68,GKLOS15}.

\begin{proposition}
\textsc{$(\Pi_F+e)$-Moral BNSL} can be solved in polynomial time when limited to instances with~${k=0}$.
\end{proposition}


This positive result makes it interesting to study the parameterized complexity of~\textsc{$(\Pi_F+e)$-Moral BNSL} when parameterized by~$k$. In the following, we provide a first step into this parameterized complexity analysis and show that~\textsc{$(\Pi_F+e)$-Moral BNSL} is W[1]-hard when parameterized by~$k$. Thus,~\textsc{$(\Pi_F+e)$-Moral BNSL} can presumably not be solved in~$g(k) \cdot |I|^{\Oh(1)}$~time for a computable function~$g$. However, an XP-time algorithm might still be possible.

\begin{theorem} \label{Theorem: B-FES BNSL W1h}
\textsc{$(\Pi_F+e)$-Moral BNSL} is W[1]-hard when parameterized by~$k$, even when \iflong restricted to instances where\fi~$S_{\vec{\Fa}}$ is a DAG and the maximum parent set size is 4.
\end{theorem}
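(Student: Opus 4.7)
The plan is to adapt the construction from the proof of Theorem~\ref{Theorem: Bounded-Edges BNSL W[1]-h} by replacing the single central vertex~$x$ with a pair~$x, x'$; this yields a parameterized reduction from~\textsc{Multicolored Clique} (W[1]-hard with respect to $\ell$) to~\textsc{$(\Pi_F + e)$-BNSL}. Given an~$\ell$-colored graph~$G = (V, E)$ with color classes~$C_1, \dots, C_\ell$, I would build an instance $(N, \Fa, t, k)$ with $N := V \cup \{w_{\{i,j\}} \mid 1 \leq i < j \leq \ell\} \cup \{x, x'\}$, set $f_{w_{\{i,j\}}}(\{u,v,x,x'\}) := 1$ for every edge $\{u,v\} \in E$ with $u \in C_i$ and $v \in C_j$ (all other local scores are~$0$), and put $t := \binom{\ell}{2}$ and $k := 4\binom{\ell}{2} + \ell$. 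The maximum parent set size is four, the superstructure~$S_{\vec{\Fa}}$ is a DAG because every $w_{\{i,j\}}$ is a sink, and $k = \Oh(\ell^2)$.

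The motivation for the two central vertices is an accounting trick: any vertex of~$V$ that appears as a parent of some~$w_{\{i,j\}}$ contributes \emph{two} new moral edges (one to~$x$, one to~$x'$) to the moralized graph while adding only one new vertex to its support. Hence every ``extra'' selected parent beyond the $\ell$ representatives forced by color consistency raises the feedback edge number by one, whereas with a single central vertex (as in Theorem~\ref{Theorem: Bounded-Edges BNSL W[1]-h}) the contribution was neutral and only the edge count could be bounded.

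Concretely, since $t$ equals the number of $w$-vertices and each has maximum individual score~$1$, every valid solution must give each~$w_{\{i,j\}}$ a parent set $\{u_{ij}, v_{ij}, x, x'\}$ with $\{u_{ij}, v_{ij}\} \in E$, $u_{ij} \in C_i$, $v_{ij} \in C_j$. Setting $V^* := \bigcup_{i<j} \{u_{ij}, v_{ij}\}$, the non-isolated part of $\Mo((N,A))$ contains $V^* \cup \{x,x'\} \cup W$ and is connected via~$x$; enumerating the parent edges (exactly $4\binom{\ell}{2}$) and the moral edges---$\binom{\ell}{2}$ of the form $\{u_{ij}, v_{ij}\}$, $2|V^*|$ of the forms $\{v, x\}$ and $\{v, x'\}$, and one~$\{x, x'\}$---gives $5\binom{\ell}{2} + 2|V^*| + 1$ edges on $|V^*| + \binom{\ell}{2} + 2$ vertices, so the feedback edge number equals $4\binom{\ell}{2} + |V^*|$ regardless of the choice of parents.

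Correctness is then immediate. In the forward direction, a multicolored clique $\{v_1, \dots, v_\ell\}$ yields the solution that assigns $\{v_i, v_j, x, x'\}$ to each~$w_{\{i,j\}}$, which has score~$t$, satisfies $|V^*| = \ell$, and has feedback edge number~$k$. In the backward direction, a solution with feedback edge number at most~$k$ forces $|V^*| \leq \ell$; combined with the lower bound $|V^*| \geq \ell$ (each color class supplies a parent), this forces a unique representative~$v_i$ per color and $\{v_i, v_j\} \in E$ for every pair, so $\{v_1, \dots, v_\ell\}$ is a multicolored clique. The main technical hurdle is a careful verification that the moral edge enumeration is correct for arbitrary (possibly inconsistent) parent choices---in particular, ruling out accidental coincidences between moral edges arising from different $w_{\{i,j\}}$'s, which holds because the color classes $C_1, \dots, C_\ell$ are pairwise disjoint.
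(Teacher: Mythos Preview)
Your reduction is correct. The edge count is right: with $V^*$, $W$, $x$, $x'$ forming a single connected component through $x$, the feedback edge number is exactly $4\binom{\ell}{2}+|V^*|$, and the pigeonhole argument $\ell\le|V^*|\le\ell$ forces one representative per color and hence a multicolored clique. Your caveat about coincidences among the $\{u_{ij},v_{ij}\}$ edges is handled cleanly by the disjointness of the color classes.

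The paper takes a different and somewhat shorter route: instead of reducing directly from \textsc{Multicolored Clique}, it reduces from \textsc{$(\Pi_0+e)$-BNSL} (already shown W[1]-hard in Theorem~\ref{Theorem: Bounded-Edges BNSL W[1]-h}) by adding a \emph{single} fresh vertex~$x$ and forcing it into every parent set via a large additive bonus. Then~$x$ is adjacent to all of~$N$ in the moralized graph, so the edges among~$N$ are exactly the feedback edges, and the feedback edge number equals the original moralized edge count; the parameter~$k$ is preserved verbatim. This is more modular---it treats Theorem~\ref{Theorem: Bounded-Edges BNSL W[1]-h} as a black box and avoids any edge accounting---whereas your approach is self-contained and exposes the combinatorics directly. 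Your two-vertex trick (each selected $v\in V^*$ contributes two moral edges but one vertex, making the feedback edge number sensitive to~$|V^*|$) is a nice alternative to the paper's ``force a spanning star and count the surplus'' idea; both achieve the same goal of converting an edge-count bound into a cycle-space bound.
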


\iflong
\begin{proof}
We give a parameterized reduction from~\textsc{$(\Pi_0+e)$-Moral BNSL} parameterized by the number of edges~$k$ which is~W[1]-hard even on instances where the superstructure is a DAG and the maximum parent set size is 3 due to Theorem~\ref{Theorem: Bounded-Edges BNSL W[1]-h}.

\textit{Construction.} Let~$I:=(N,\Fa,t,k)$ be such an instance of~\textsc{$(\Pi_0+e)$-Moral BNSL}. We describe how to construct an equivalent instance~$I':=(N',\Fa',t',k')$ of~\textsc{$(\Pi_F+e)$-Moral BNSL} where~$k'=k$. The vertex set is~$N':=N \cup \{x\}$ for some~$x\not \in N$. 

We define the vertex set by~$N':=N \cup \{x\}$ for some~$x\not \in N$. 
To define the local scores~$\Fa'$, we set~$\ell^+ := 1+ \sum_{v \in N} \max_{P \subseteq N \setminus \{v\}} f_v(P)$. For every~$v \in N$ we set~$f_v'(P)= f_v(P \setminus \{x\}) + \ell^+$ if~$x \in P$ and~$P \setminus \{x\} \in \mathcal{P}_\Fa(v)$. In all other cases, we set~$f_v'(P)=0$. For the vertex~$x$, we set~$f_x'(P)=0$ for every~$P$. Finally, we set~$t':= t + n \cdot \ell^+$.

We can obviously compute~$I'$ from~$I$ in polynomial time. Since~$I$ is an instance where~$S_{\vec{\Fa}}$ is a DAG and the maximum parent set size is 3, we conclude that the maximum parent set size of~$I'$ is 4 and that~$S_{\vec{\Fa'}}$ is a DAG.

\textit{Intuition.} Before we prove the correctness of the reduction we provide some intuition. To obtain an~$(N',\Fa',t')$-valid arc set~$A'$, the vertex~$x$ must be a parent of every vertex of~$N$. Hence, for every~$v \in N$, there exists an edge~$\{x,v\}$ in~$\Mo(N',A')$. The idea is that~$\Mo(N',A')$ can be transformed into an acyclic graph by deleting all edges between the vertices of~$N$.

\textit{Correctness.} We now prove that~$I$ is a yes-instance of~\textsc{$(\Pi_0+e)$-Moral BNSL} if and only if~$I'$ is a yes-instance of~\textsc{$(\Pi_F+e)$-Moral BNSL}.

$(\Rightarrow)$ Let~$A \subseteq N \times N$ be an~$(N,\Fa,t)$-valid arc set such that~$\Mo(D)$ for~$D:=(N,A)$ contains at most~$k$ edges. We then define~$A':= A \cup \{(x,v) \mid v \in N\}$ and let~$D':=(N',A')$. We show that~$A'$~is~$(N',\Fa',t')$-valid and~$\Mo(D')$ has a feedback edge set of size at most~$k$.

We first show that~$A'$ is~$(N',\Fa',t')$-valid. Since~$S_{\vec{\Fa'}}$ is a DAG we conclude that~$D$ is a~DAG. Moreover,~$P^{A'}_v=P^A_v \cup \{x\}$ for every~$v \in N$ and therefore
\begin{align*}
\sum_{v \in N'} f_v'(P^{A'}_v) &= \sum_{v \in N} (f_v(P^{A}_v) + \ell^+)\\
&=t+n \cdot \ell^+ = t'.
\end{align*}
Consequently, $D'$ is~$(N',\Fa',t')$-valid. It remains to show that~$\Mo(D')$ has a feedback edge set of size at most~$k$. To this end, consider the following claim.

\begin{claim} \label{Claim: Moral Edges iff}
Let~$v,w \in N$. Then,~$\{v,w\}$ is a moral edge in~$\Mo(D)$ if and only if~$\{v,w\}$ is a moral edge in~$\Mo(D')$.
\end{claim}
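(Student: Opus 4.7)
The plan is to verify both directions of the equivalence by computing common children directly from the definitions. Recall that $\{v,w\}$ is a moral edge in $\Mo(D)$ precisely if some vertex $z \in N$ satisfies $(v,z), (w,z) \in A$, and analogously for $\Mo(D')$ with $A'$ in place of $A$. Hence the claim reduces to verifying that $v$ and $w$ share exactly the same common children in $D$ as in $D'$.

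The key observation is that $A' \setminus A = \{(x,u) \mid u \in N\}$ consists solely of arcs outgoing from $x$. Since $v,w \in N$ are both distinct from $x$, neither $v$ nor $w$ acquires any new outgoing arc when passing from $D$ to $D'$, so the set of children of $v$ is the same in $D$ and $D'$, and likewise for $w$. Thus their common children agree, and the definition of moral edges yields the claimed equivalence in both directions simultaneously.

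The argument is essentially a one-line check from the definitions; the only point that needs care is tracking the direction of the added arcs. Had the construction inserted arcs $(u,x)$ \emph{into} $x$ instead of outgoing from $x$, then $x$ itself would become a common child of many pairs of vertices in $N$ and the two sets of moral edges could differ substantially, so the specific orientation chosen in the reduction is essential for the claim. No nontrivial obstacle arises, and no case distinction is needed.
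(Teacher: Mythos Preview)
Your proof is correct and essentially the same as the paper's: both arguments rely on the observation that the only arcs in $A'\setminus A$ emanate from $x$, so for $v,w\in N$ the out-neighbourhoods (hence the common children) are unchanged when passing from $D$ to $D'$. The paper splits the two directions explicitly and argues the backward direction by noting that $x$ has no incoming arcs so the common child lies in $N$; your unified formulation via preserved children sets is slightly more concise but amounts to the same reasoning.
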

\begin{proof}
Let~$\{v,w\}$ be a moral edge in~$\Mo(D)$. Then, there exists a vertex~$u \in N$ such that~$(v,u)$, $(w,u) \in A$. Since~$A \subseteq A'$ we conclude that~$\{v,w\}$ is a moral edge in~$\Mo(D')$.

Conversely, let~$\{v,w\}$ be a moral edge in~$\Mo(D')$. Then,~$v$ and~$w$ have a common child~$u$ in~$D'$. Since~$x$ has no incoming arcs, we conclude~$u \in N$ and therefore~$(v,u),(w,u) \in A$. Hence,~$\{v,w\}$ is a moral edge in~$\Mo(D)$. $\hfill \Diamond$
\end{proof}

Claim~\ref{Claim: Moral Edges iff} together with the fact that~$(N \times N) \cap A' = A$ implies that~$v,w \in N$ are adjacent in~$\Mo(D)$ if and only if they are adjacent in~$\Mo(D')$. Hence, if we delete every edge of~$\Mo(D)$ from~$\Mo(D')$ we obtain the graph~$G:=(N', \{\{x,v\} \mid v \in N\})$ which is acyclic. Since there are at most~$k$ edges in~$\Mo(D)$ we conclude that there exists a feedback edge set of size at most~$k$ for~$\Mo(D')$.

$(\Leftarrow)$ Conversely, let~$A'$ be an~$(N',\Fa',t')$-valid arc set such that~$\Mo(D')$ for~$D':=(N',A')$ has a feedback edge set of size at most~$k$. We define~$A:= (N \times N) \cap A'$. Note that~$P_v^A = P_v^{A'} \setminus \{x\}$ for every~$v \in N$.

We first show that~$D:=(N,A)$ is~$(N,\Fa,t)$-valid. Obviously,~$D$ is a DAG since~$S_{\vec{\Fa}}$ is a DAG. Moreover, it holds that
\begin{align*}
\sum_{v \in N} f_v(P^{A}_v) &= \sum_{v \in N} f_v(P^{A'}_v \setminus \{x\})\\
&=t'-n \cdot \ell^+ = t.
\end{align*}
Consequently, $D$ is~$(N,\Fa,t)$-valid. It remains to show that there are at most~$k$ edges in~$\Mo(D)$. To this end, observe that~$x \in P^{A'}_v$ for every~$v \in N$: If there exists a vertex~$w \in N$ with~$x \not \in P^{A'}_w$, then~$f'_w(P^{A'}_w) = 0$ and therefore the sum of the local scores is smaller than~$n \cdot \ell^+$. This contradicts the fact that~$A'$ is~$(N',\Fa',t')$-valid. 

%

Next, assume towards a contradiction that there are more than~$k$ edges in~$\Mo(D)$. Since~$A \subseteq A'$, this implies that in~$\Mo(D')$ there are more than~$k$ edges between the vertices of~$N$. Furthermore, since~$x \in P^{A'}_v$ for every~$v \in N$ we conclude that every vertex in~$N$ is adjacent to~$x$ in~$\Mo(D')$. Hence, $\Mo(D')$ consists of~$n+1$ vertices and at least~$n+k+1$ edges which contradicts the fact that~$\Mo(D')$ has a feedback edge set of size at most~$k$. $\hfill \Box$
\end{proof}
\fi

\section{On Problem Kernelization}
In this section we prove a new hardness result for \textsc{Vanilla-BNSL}: We show that under the standard assumtion~$\text{NP} \not \subseteq \text{coNP} / \poly$, \textsc{Vanilla-BNSL} does not admit a polynomial problem kernel when parameterized by the number of vertices. That is, it is presumably impossible to transform an instance of \textsc{Vanilla-BNSL} in polynomial time into an equivalent instance of size~$|I| = n^{\Oh(1)}$. Thus, it is sometimes necessary to keep an exponential number of parent scores to compute an optimal network. The kernel lower bound even holds for instances where all local scores are either 0 or 1. Thus, the kernel lower bound is not based on the fact that large local scores might be incompressible.

We then use the kernel lower bound for~\textsc{Vanilla-BNSL} to complement the FPT result from Corollary \ref{Cor: BA-BNSL FPT} and show that there is little hope that \textsc{$(\Pi_0+e)$-Skeleton BNSL} admits a polynomial problem kernel.

\begin{theorem} \label{Theorem: No Poly Kernel BA-BNSL}
\textsc{Vanilla-BNSL} parameterized by~$n+t$ does not admit a polynomial kernel unless~$\badstuffhappens$ even when  restricted to instances where all local scores are either 0 or 1.
\end{theorem}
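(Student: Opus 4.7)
The strategy I would adopt is an OR-cross-composition into \textsc{BA-BNSL} parameterized by $t+k$, restricted to instances with $k=n^2$. Since any DAG on~$n$ vertices has at most~$n^2$ arcs, setting~$k=n^2$ makes the arc-count constraint vacuous, so the task reduces to ruling out a polynomial compression for~\textsc{Vanilla-BNSL} in the combined parameter $t+n^2$. As \textsc{Vanilla-BNSL} is NP-hard~\cite{C95}, I would use it as the source problem of the composition; a successful OR-cross-composition then rules out a polynomial kernel unless~$\badstuffhappens$ by the standard framework~\cite{CFKLMPPS15}.

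First I would normalize via a polynomial-equivalence padding so that the $s$ input instances share a common vertex set~$\hat{N}$ of size~$\hat{n}$ and a common threshold~$\hat{t}$. Then I would build the composed \textsc{BA-BNSL} instance on vertex set $N=\hat{N}\cup Y\cup \{z\}$ with $|Y|=\lceil\log s\rceil$. The selector vertices in~$Y$ encode the index $i\in\{1,\ldots,s\}$ of the chosen input instance via the subset $S_i\subseteq Y$ determined by the binary representation of~$i$. For every $v\in\hat{N}$ and every parent set~$P$ with $f_v^i(P)>0$ in instance~$i$, I would set the composed local score $f_v(P\cup S_i):=f_v^i(P)$; the remaining scores on~$Y\cup\{z\}$ would enforce a unique encoded index via a large bonus~$M$. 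With $k=n^2$ and composed threshold $M+\hat{t}$ we get $n\in O(\hat{n}+\log s)$ and hence $t+k\in\poly(\max_j|x_j|+\log s)$, which is the parameter bound required by the cross-composition framework.

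The main obstacle lies in the consistency part of the correctness argument: preventing a solution from mixing parent-set contributions from different input instances, which would break OR-semantics. I would resolve this by using the bonus~$M$ on~$z$ to force~$z$'s parent set to equal a specific~$S_i$, and then restrict the parent-set options for vertices of~$\hat{N}$ with nonzero contribution so that only those compatible with that~$S_i$ can profitably be chosen — for instance by requiring every positive $v$-score to include~$z$ as a parent and designing auxiliary acyclicity or multiplicity constraints that tie the $S_i$-component of each~$v$-parent set to the one fixed by~$z$. After verifying both directions of the reduction and checking the parameter bound, the OR-cross-composition is complete, and the non-existence of a polynomial kernel (even for $k=n^2$) follows.
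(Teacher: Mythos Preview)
Your route via OR-cross-composition is genuinely different from the paper's, which instead gives a polynomial parameter transformation from \textsc{Multicolored Independent Set} parameterized by~$|C_1\cup\dots\cup C_{\ell-1}|$ (a parameterization for which that problem is already known not to admit a polynomial kernel). In the paper's construction on $N=C_1\cup\dots\cup C_{\ell-1}\cup\{x\}$, each $v\in C_i$ has a single positive-score parent set containing all of $C_i\setminus\{v\}$ together with $N_G(v)\setminus C_\ell$ and~$x$, while $x$ encodes the choice from~$C_\ell$. Acyclicity then does all the work: two selected vertices from the same class, or two adjacent selected vertices, would be parents of each other, so any DAG achieving score~$\ell$ already certifies a multicolored independent set. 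This yields $n=|C_1\cup\dots\cup C_{\ell-1}|+1$, $t=\ell$, $k=n^2$, and avoids having to build an instance-selector gadget at all.

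Your proposal has a genuine gap at exactly the spot you flag as the main obstacle. Forcing $z$'s parent set to be some~$S_i$ via a bonus~$M$, and requiring every positive-score parent set of~$v\in\hat N$ to contain~$z$, does \emph{not} tie $v$'s selector component to $z$'s: the arcs $S_i\to z\to v$ together with $S_j\to v$ form a DAG for arbitrary~$i,j$, so acyclicity imposes no consistency whatsoever. The phrase ``auxiliary acyclicity or multiplicity constraints'' does not name a concrete mechanism, and none is apparent here---the arc budget is vacuous ($k=n^2$), \textsc{BA-BNSL} has no moral-graph constraint, and nothing structural prevents the vertices of~$\hat N$ from independently harvesting score from different input instances. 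Without such a mechanism the backward direction fails: a total score of~$M+\hat t$ can be reached by mixing contributions across the inputs even when every individual input is a no-instance. If you want to rescue a cross-composition you would need a gadget in which two inconsistent selector choices force a directed cycle, in the spirit of the paper's ``each candidate's parent set contains all other candidates in its class'' trick; as written, your construction does not supply one.
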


\iflong
\begin{proof}
We prove the theorem by giving a polynomial parameter transformation from the following problem.
\begin{center}
	\begin{minipage}[c]{.9\linewidth}
          \textsc{Multicolored Independent Set}\\
          \textbf{Input}: A properly $\ell$-colored undirected graph $G=(V,E)$ with color classes~$C_1,\ldots,C_\ell \subseteq V$.\\
          \textbf{Question}: Is there an independent set containing one vertex from each color in $G$? 
	\end{minipage}
\end{center}
\textsc{Multicolored Independent Set} does not admit a polynomial kernel when parameterized by~$|C_1 \cup \dots \cup C_{\ell-1}|$ unless~$\badstuffhappens$~\cite{GK20}.

\textit{Construction.} Let~$G=(V,E)$ be an instance of~\textsc{Multicolored Independent Set} with the color classes~$C_1, \dots, C_\ell$. We describe how to construct an equivalent instance~$(N,\Fa,t)$ of~\textsc{Vanilla-BNSL}. First, set~$N:= C_1 \cup C_2 \cup \dots \cup C_{\ell-1} \cup \{x\}$ for some~$x \not \in V$. Second, we define the local scores~$\Fa$ as follows: Let~$i \in \{1, \dots, \ell-1\}$. For~$v \in C_i$ we set~$f_v(P)=1$ if~$P=(C_i \setminus \{v\}) \cup (N_G(v) \setminus C_\ell) \cup \{x\}$. Otherwise, we set~$f_v(P)=0$. For~$x$, we set~$f_x(P)=1$ if there exists some~$w \in C_\ell$ with~$N_G(w)=P$. Otherwise we set~$f_x(P)=0$. Finally, we set~$t:= \ell$.

Observe that the value of the local scores is either~$1$ or~$0$, and that there are exactly~$|V|$ values where~$f_v(P)=1$. Hence,~$|\Fa| \in \Oh(|V|)$. We can obviously compute~$(N,\Fa,t)$ in polynomial time from~$G$. Furthermore, recall that~$|N|=|C_1 \cup \dots \cup C_{\ell-1}| +1$ and therefore,~$n+t$ is polynomially bounded in~$|C_1 \cup \dots \cup C_\ell|$.

\textit{Intuition:} Before we show the correctness of the polynomial parameter transformation, we start with some intuition. To reach the score~$t=\ell$, exactly one vertex per color class~$C_1, \dots, C_{\ell-1}$ and the vertex~$x$ must learn a parent set with score~$1$. The vertices from~$C_1, \dots, C_{\ell-1}$ and the choice of the parent set of~$x$ then correspond to a multicolored set in~$G$. The condition that the resulting directed graph must be a DAG guarantees that the chosen vertices form an independent~set.

\textit{Correctness.} $(\Rightarrow)$ Let~$S=\{v_1, \dots, v_\ell\}$ be a multicolored independent set in~$G$ with~$v_i \in C_i$ for all~$i \in \{1, \dots, \ell\}$. We define the arc set~$A$ by defining the parent sets of all vertices in~$N$: For all~$v \in N \setminus \{v_1, \dots, v_{\ell-1}\}$ we set~$P^A_v := \emptyset$. Next, for~$v_i \in \{v_1, \dots, v_{\ell-1}\}$ we set~$P_{v_i}^A := (C_i \setminus \{v_i\}) \cup (N_G(v_i) \setminus C_\ell ) \cup \{x\}$. Finally, we set~$P^A_x=N_G(v_\ell)$. We now prove that~$A$ is~$(N,\Fa,t)$-valid. By definition of~$\Fa$ it holds that~$f_v(P^A_v)=1$ for every~$v \in \{v_1, \dots, v_{\ell-1},x\}$. Hence,~$\sum_{c \in N}f_v(P^A_v) = t$.

It remains to show that~$D:=(N,A)$ is a DAG. If~$D$ contains a directed cycle, all vertices on the directed cycle have incoming and outgoing arcs. Observe that~$v_1, \dots, v_{\ell-1}$, and~$x$ are the only vertices with incoming arcs.

We first prove that every~$v \in \{v_1, \dots, v_{\ell-1}\}$ is a sink in~$D$. Assume towards a contradiction that there is some~$v \in \{v_1, \dots, v_{\ell-1}\}$ that has an outgoing arc~$(v,w) \in A$. Without loss of generality, let~$v=v_1$. Since~$v_1 \not \in P^A_{v_1}$ and only the vertices~$v_1, v_2, \dots, v_{\ell-1}, x$ have parents under~$A$, we conclude~$w \in \{v_2, \dots, v_{\ell-1},x\}$. If~$w \in \{v_2, \dots, v_{\ell-1}\}$, then~$v \in P^A_w$ and therefore~$v \in N_G(w)$. Otherwise, if~$w=x$, then~$v \in P^A_x$ and therefore~$v \in N_G(v_\ell)$. Both cases contradict the fact that~$S$ is an independent set in~$G$ and therefore every~$v \in \{v_1, \dots,  v_{\ell-1}\}$ is a sink in~$D$.

We conclude that~$x$ is the only vertex that might have incoming and outgoing arcs in~$D$. Hence,~$x$ is the only vertex that might be part of a directed cycle. Since~$x \not \in P^A_x$ we conclude that there is no cycle in~$D$ and therefore~$D$~is~a~DAG.

$(\Leftarrow)$ Let~$A$ be an~$(N,\Fa,t)$-valid arc set. We show that there exists a multicolored independent set~$S$ in~$G$. To this end, consider the following claim.
\begin{claim} \ \label{Claim: Claim for BA-BNSL NoPolyKernel}
\begin{enumerate}
\item[a)] There are at least~$\ell$ vertices~$v \in N$ with~$f_v(P^A_v)=1$.
\item[b)] For every~$C_i$ with~$i\in \{1, \dots, \ell-1\}$ there is at most one vertex~$v_i \in C_i$ with~$f_{v_i}(P_{v_i}^A)=1$.
\end{enumerate}
\end{claim}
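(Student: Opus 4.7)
The plan is to prove the two parts of the claim separately, using the structure of~$\Fa$ from the construction and the definition of an~$(N,\Fa,t)$-valid arc set.

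For part~$a)$, I would argue directly from the scoring. Since~$A$ is~$(N,\Fa,t)$-valid, we have~$\sum_{v \in N} f_v(P^A_v) \geq t = \ell$. By the construction of~$\Fa$, every local score is either~$0$ or~$1$. Therefore, the sum can reach~$\ell$ only if at least~$\ell$ vertices satisfy~$f_v(P^A_v) = 1$. This is a one-line counting argument and requires no further case analysis.

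For part~$b)$, I would proceed by contradiction. Suppose that for some~$i \in \{1, \dots, \ell-1\}$ there exist two distinct vertices~$v_i, v_i' \in C_i$ both with~$f_{v_i}(P^A_{v_i}) = 1$ and~$f_{v_i'}(P^A_{v_i'}) = 1$. By the definition of~$\Fa$, the only parent set assigning score~$1$ to a vertex~$v \in C_i$ is~$P = (C_i \setminus \{v\}) \cup (N_G(v) \setminus C_\ell) \cup \{x\}$. Crucially, this parent set contains \emph{every other} vertex of~$C_i$. Hence~$v_i' \in P^A_{v_i}$ and~$v_i \in P^A_{v_i'}$, which yields arcs~$(v_i', v_i) \in A$ and~$(v_i, v_i') \in A$, giving a directed~$2$-cycle in~$(N,A)$ and contradicting the fact that it is a DAG.

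The main conceptual step is the observation that the gadget forces each score-$1$ parent set of a~$C_i$-vertex to contain~$C_i$ minus that vertex; this is precisely what makes the DAG property translate into at-most-one-per-color-class, which will in turn underlie the reduction to the independent~set structure of the original instance. Everything else is a routine check against the definition of~$\Fa$ and against acyclicity, so I do not anticipate any real obstacle beyond being careful to quote the correct case of the construction.
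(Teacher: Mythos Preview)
Your proposal is correct and matches the paper's proof essentially line for line: part~$a)$ is the same counting argument from the $\{0,1\}$-valued scores and the bound $\sum_v f_v(P^A_v)\ge t=\ell$, and part~$b)$ is the same contradiction via the observation that a score-$1$ parent set of any $v\in C_i$ contains $C_i\setminus\{v\}$, yielding the two-cycle $(v_i',v_i),(v_i,v_i')\in A$ that violates acyclicity.
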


\begin{proof}
We first show statement~$a)$. Since~$A$ is~$(N,\Fa,t)$-valid we know that~$\sum_{v \in V} f_v(P^A_v) \geq t = \ell$. Since every local score is either~$0$ or~$1$, statement~$a)$ follows.

We next show statement~$b)$. Note that~$(N,A)$ is a DAG since~$A$ is~$(N,\Fa,t)$-valid. Let~$i \in \{1, \dots, \ell-1\}$. Assume towards a contradiction that there are distinct~$u, v \in C_i$ with~$f_{u}(P_{u}^A)=f_{v}(P_{v}^A)=1$. Then, by the construction of~$\Fa$ we conclude~$C_i \setminus \{u\} \subseteq P_{u}^A$ and $C_i \setminus \{v\} \subseteq P_{v}^A$. Hence~$(u, v), (v, u) \in A$ which contradicts the fact that~$(N,A)$ is a DAG. Consequently, statement~$b)$ holds. $\hfill \Diamond$
\end{proof}

From Claim~\ref{Claim: Claim for BA-BNSL NoPolyKernel}~$a)$ and~$b)$ we conclude that for each~$C_i$ with~$i \in \{1, \dots, \ell-1\}$ there is exactly one~$v_i \in C_i$ with~$f_{v_i}(P^A_{v_i})=1$ and that~$f_{x}(P^A_{x})=1$. Moreover,~$f_{x}(P^A_{x})=1$ implies that there exists a vertex~$v_\ell \in C_\ell$ with~$N_G(v_\ell)=P^A_x$. We define~$S:= \{v_1, \dots, v_{\ell-1}, v_\ell\}$ and show that~$S$ is a multicolored independent set in~$G$.

Obviously, the vertices of~$S$ are from pairwise distinct color classes. Thus, it remains to show that no two vertices in~$S$ are adjacent in~$G$. Assume towards a contradiction that there exist~$v,w \in S$ such that~$\{v,w\} \in E$. Without loss of generality, let~$v=v_1$. Consider the following cases.

\textbf{Case 1:} $w \in \{v_2, \dots, v_{\ell-1}\}$\textbf{.} Then, $\{v,w\} \in E$ implies~$w \in N_G(v) \setminus C_\ell$ and~$v \in N_G(w) \setminus C_\ell$. Together with the fact that~$f_v(P^A_v)=f_w(P^A_w)=1$ we conclude~$(v,w),(w,v) \in A$ which contradicts the fact that~$(N,A)$ is a~DAG.

\textbf{Case 2:} $w=v_\ell$\textbf{.} Then,~$\{v,w\} \in E$ implies~$v \in N_G(w)$ and therefore~$v \in P^A_x$. Moreover~$f_v(P^A_v)=1$ implies~$x \in P_v^A$. Hence,~$(v,w),(w,v) \in A$ contradicting the fact that~$(N,A)$ is a DAG.

We conclude that no two vertices of~$S$ are adjacent in~$G$ and therefore,~$S$ is a multicolored independent set in~$G$. $\hfill \Box$.
\end{proof}
\fi

We next use Theorem~\ref{Theorem: No Poly Kernel BA-BNSL} to complement the FPT result for~\textsc{$(\Pi_0+e)$-Skeleton BNSL} by a kernel lower-bound for constrained BNSL problems. Consider an arbitrary constrained BNSL problem for some monotone and infinite graph class~$\Pi$. Recall that, if~$k=n^2$ the sparsity constraint always holds. Thus, the constrained BNSL problem is the same as~\textsc{Vanilla-BNSL} on instances with~$k=n^2$. Together with the kernel lower-bound from Theorem~\ref{Theorem: No Poly Kernel BA-BNSL}, this implies the following.

\begin{corollary} \label{Cor: BNSL No Poly Kernel for n}
Let~$\Pi$ be a monotone graph class that contains infinitely many graphs. Then, every constrained BNSL problem for~$\Pi$ parameterized by~$k+t$ does not admit a polynomial kernel unless~$\badstuffhappens$.
\end{corollary}

\section{Conclusion}
 
 We have outlined the tractability borderline of \textsc{Bayesian Network Structure
   Learning} with respect to several structural constraints on the learned network or on
 its moralized graph. In particular, we have shown that putting structural sparsity
 constraints on the moralized graph may make the problem harder than putting similar
 constraints on the network. This is somewhat counterintuitive since the moralized graph
 is a supergraph of the underlying undirected graph of the network. It seems interesting
 to investigate this issue further, that is, to find other structural constraints such
 that putting these constraints on the network leads to an easier problem than putting
 them on the moralized graph.

 The two most important concrete questions left open by our work are the following. First,
 can we compute an optimal network where the skeleton has dissociation number at most~$k$ in polynomial
 time for constant~$k$? Second, can we compute an optimal network whose moralized
 graph has a feedback edge set of size at most~$k$ in polynomial time for constant~$k$?
 
 Another important direction for future work is to study how well the algorithms for~\textsc{$(\Pi_0+v)$-Skeleton-BNSL} (Theorem~\ref{Theorem: VC XP Algo}) and~\textsc{$(\Pi_1+v)$-Moral-BNSL} (Theorem~\ref{Theorem: XP Algo}) perform on benchmark data sets. This way, one can extend the work of~\citeA{KP15} who experimentally evaluated the performance of a similar algorithm for~\textsc{$(\Pi_0+v)$-Moral BNSL}.

 As a final remark, the algorithm for learning a Bayesian network with a bounded number of arcs (Theorem~\ref{Theorem: ColorCoding Algo BA-BNSL}) seems to have no direct practical
 applications. It may, however, be useful as subroutines in a branch-and-bound scenario
 where one may aim to compute upper or lower bounds on the score that can be achieved by adding~$k$ arcs to a network that is currently considered in the search.
 Thus, it would be interesting to explore variants of \textsc{Bayesian Network Structure
   Learning} where the input contains a partial network and the aim is to extend it. Do
 the positive results for \textsc{Bayesian Network Structure Learning} also hold for this
 more general problem?

\bibliography{bnsl}

\begin{thebibliography}{}

\bibitem[\protect\BCAY{Alon, Yuster,\ \BBA\ Zwick}{Alon et~al.}{1995}]{AYZ95}
Alon, N., Yuster, R., \BBA\ Zwick, U. \BBOP1995\BBCP.
\newblock \BBOQ Color-coding\BBCQ\
\newblock {\Bem Journal of the ACM}, {\Bem 42\/}(4), 844--856.

\bibitem[\protect\BCAY{Bodlaender, Thomass{\'{e}},\ \BBA\ Yeo}{Bodlaender
  et~al.}{2011}]{BTY11}
Bodlaender, H.~L., Thomass{\'{e}}, S., \BBA\ Yeo, A. \BBOP2011\BBCP.
\newblock \BBOQ Kernel bounds for disjoint cycles and disjoint paths\BBCQ\
\newblock {\Bem Theor. Comput. Sci.}, {\Bem 412\/}(35), 4570--4578.

\bibitem[\protect\BCAY{Chickering}{Chickering}{1995}]{C95}
Chickering, D.~M. \BBOP1995\BBCP.
\newblock \BBOQ Learning {B}ayesian networks is {NP}-complete\BBCQ\
\newblock In {\Bem Proceedings of the Fifth International Conference on
  Artificial Intelligence and Statistics, (AISTATS'95)}, \BPGS\ 121--130.
  Springer.

\bibitem[\protect\BCAY{Chow\ \BBA\ Liu}{Chow\ \BBA\ Liu}{1968}]{CL68}
Chow, C.~K.\BBACOMMA\  \BBA\ Liu, C.~N. \BBOP1968\BBCP.
\newblock \BBOQ Approximating discrete probability distributions with
  dependence trees\BBCQ\
\newblock {\Bem {IEEE} Transactions on Information Theory}, {\Bem 14\/}(3),
  462--467.

\bibitem[\protect\BCAY{Cooper}{Cooper}{1990}]{C90}
Cooper, G.~F. \BBOP1990\BBCP.
\newblock \BBOQ The computational complexity of probabilistic inference using
  {B}ayesian belief networks\BBCQ\
\newblock {\Bem Artificial Intelligence}, {\Bem 42\/}(2-3), 393--405.

\bibitem[\protect\BCAY{Cygan, Fomin, Kowalik, Lokshtanov, Marx, Pilipczuk,
  Pilipczuk,\ \BBA\ Saurabh}{Cygan et~al.}{2015}]{CFKLMPPS15}
Cygan, M., Fomin, F.~V., Kowalik, L., Lokshtanov, D., Marx, D., Pilipczuk, M.,
  Pilipczuk, M., \BBA\ Saurabh, S. \BBOP2015\BBCP.
\newblock {\Bem Parameterized Algorithms}.
\newblock Springer.

\bibitem[\protect\BCAY{Darwiche}{Darwiche}{2009}]{D09}
Darwiche, A. \BBOP2009\BBCP.
\newblock {\Bem Modeling and Reasoning with {B}ayesian Networks}.
\newblock Cambridge University Press.

\bibitem[\protect\BCAY{Dasgupta}{Dasgupta}{1999}]{D99}
Dasgupta, S. \BBOP1999\BBCP.
\newblock \BBOQ Learning polytrees\BBCQ\
\newblock In {\Bem Proceedings of the 15th Conference on Uncertainty in
  Artificial Intelligence ({UAI} '99)}, \BPGS\ 134--141. Morgan Kaufmann.

\bibitem[\protect\BCAY{Downey\ \BBA\ Fellows}{Downey\ \BBA\
  Fellows}{1995}]{DRGM95}
Downey, R.~G.\BBACOMMA\  \BBA\ Fellows, M.~R. \BBOP1995\BBCP.
\newblock \BBOQ Parameterized computational feasibility\BBCQ\
\newblock In {\Bem Feasible mathematics II}, \BPGS\ 219--244. Springer.

\bibitem[\protect\BCAY{Elidan\ \BBA\ Gould}{Elidan\ \BBA\ Gould}{2008}]{EG08}
Elidan, G.\BBACOMMA\  \BBA\ Gould, S. \BBOP2008\BBCP.
\newblock \BBOQ Learning bounded treewidth {B}ayesian networks\BBCQ\
\newblock In {\Bem Proceedings of the Twenty-Second Annual Conference on Neural
  Information Processing Systems, (NIPS'08)}, \BPGS\ 417--424. Curran
  Associates, Inc.

\bibitem[\protect\BCAY{Fellows, Hermelin, Rosamond,\ \BBA\ Vialette}{Fellows
  et~al.}{2009}]{FHRV09}
Fellows, M.~R., Hermelin, D., Rosamond, F.~A., \BBA\ Vialette, S.
  \BBOP2009\BBCP.
\newblock \BBOQ On the parameterized complexity of multiple-interval graph
  problems\BBCQ\
\newblock {\Bem Theoretical Computer Science}, {\Bem 410\/}(1), 53--61.

\bibitem[\protect\BCAY{Gaspers, Koivisto, Liedloff, Ordyniak,\ \BBA\
  Szeider}{Gaspers et~al.}{2015}]{GKLOS15}
Gaspers, S., Koivisto, M., Liedloff, M., Ordyniak, S., \BBA\ Szeider, S.
  \BBOP2015\BBCP.
\newblock \BBOQ On finding optimal polytrees\BBCQ\
\newblock {\Bem Theoretical Computer Science}, {\Bem 592}, 49--58.

\bibitem[\protect\BCAY{Gr{\"{u}}ttemeier\ \BBA\ Komusiewicz}{Gr{\"{u}}ttemeier\
  \BBA\ Komusiewicz}{2020}]{GK20}
Gr{\"{u}}ttemeier, N.\BBACOMMA\  \BBA\ Komusiewicz, C. \BBOP2020\BBCP.
\newblock \BBOQ On the relation of strong triadic closure and cluster
  deletion\BBCQ\
\newblock {\Bem Algorithmica}, {\Bem 82\/}(4), 853--880.

\bibitem[\protect\BCAY{Kirkpatrick\ \BBA\ Hell}{Kirkpatrick\ \BBA\
  Hell}{1983}]{KH83}
Kirkpatrick, D.~G.\BBACOMMA\  \BBA\ Hell, P. \BBOP1983\BBCP.
\newblock \BBOQ On the complexity of general graph factor problems\BBCQ\
\newblock {\Bem {SIAM} J. Comput.}, {\Bem 12\/}(3), 601--609.

\bibitem[\protect\BCAY{Korhonen\ \BBA\ Parviainen}{Korhonen\ \BBA\
  Parviainen}{2013}]{KP13}
Korhonen, J.~H.\BBACOMMA\  \BBA\ Parviainen, P. \BBOP2013\BBCP.
\newblock \BBOQ Exact learning of bounded tree-width {B}ayesian networks\BBCQ\
\newblock In {\Bem Proceedings of the Sixteenth International Conference on
  Artificial Intelligence and Statistics, (AISTATS'13)}, \BPGS\ 370--378.
  JMLR.org.

\bibitem[\protect\BCAY{Korhonen\ \BBA\ Parviainen}{Korhonen\ \BBA\
  Parviainen}{2015}]{KP15}
Korhonen, J.~H.\BBACOMMA\  \BBA\ Parviainen, P. \BBOP2015\BBCP.
\newblock \BBOQ Tractable {B}ayesian network structure learning with bounded
  vertex cover number\BBCQ\
\newblock In {\Bem Proceedings of the Twenty-Eighth Annual Conference on Neural
  Information Processing Systems, (NIPS'15)}, \BPGS\ 622--630. MIT Press.

\bibitem[\protect\BCAY{Meek}{Meek}{2001}]{M01}
Meek, C. \BBOP2001\BBCP.
\newblock \BBOQ Finding a path is harder than finding a tree\BBCQ\
\newblock {\Bem Journal of Artificial Intelligence Research}, {\Bem 15},
  383--389.

\bibitem[\protect\BCAY{Micali\ \BBA\ Vazirani}{Micali\ \BBA\
  Vazirani}{1980}]{MV80}
Micali, S.\BBACOMMA\  \BBA\ Vazirani, V.~V. \BBOP1980\BBCP.
\newblock \BBOQ An ${O(\sqrt{|V|}|E|)}$ {A}lgorithm for finding maximum
  matching in general graphs\BBCQ\
\newblock In {\Bem Proceedings of the Twenty-First Annual Symposium on
  Foundations of Computer Science, (FOCS'80)}, \BPGS\ 17--27. {IEEE} Computer
  Society.

\bibitem[\protect\BCAY{Naor, Schulman,\ \BBA\ Srinivasan}{Naor
  et~al.}{1995}]{NSS95}
Naor, M., Schulman, L.~J., \BBA\ Srinivasan, A. \BBOP1995\BBCP.
\newblock \BBOQ Splitters and near-optimal derandomization\BBCQ\
\newblock In {\Bem Proceedings of the Thirty-Sixth Annual Symposium on
  Foundations of Computer Science, (FOCS'95)}, \BPGS\ 182--191. {IEEE} Computer
  Society.

\bibitem[\protect\BCAY{Ordyniak\ \BBA\ Szeider}{Ordyniak\ \BBA\
  Szeider}{2013}]{OS13}
Ordyniak, S.\BBACOMMA\  \BBA\ Szeider, S. \BBOP2013\BBCP.
\newblock \BBOQ Parameterized complexity results for exact {B}ayesian network
  structure learning\BBCQ\
\newblock {\Bem Journal of Artificial Intelligence Research}, {\Bem 46},
  263--302.

\bibitem[\protect\BCAY{Ott\ \BBA\ Miyano}{Ott\ \BBA\ Miyano}{2003}]{OM03}
Ott, S.\BBACOMMA\  \BBA\ Miyano, S. \BBOP2003\BBCP.
\newblock \BBOQ Finding optimal gene networks using biological
  constraints\BBCQ\
\newblock {\Bem Genome Informatics}, {\Bem 14}, 124--133.

\bibitem[\protect\BCAY{Pietrzak}{Pietrzak}{2003}]{P03}
Pietrzak, K. \BBOP2003\BBCP.
\newblock \BBOQ On the parameterized complexity of the fixed alphabet shortest
  common supersequence and longest common subsequence problems\BBCQ\
\newblock {\Bem Journal of Computer and System Sciences}, {\Bem 67\/}(4),
  757--771.

\bibitem[\protect\BCAY{Silander\ \BBA\ Myllym{\"{a}}ki}{Silander\ \BBA\
  Myllym{\"{a}}ki}{2006}]{SM06}
Silander, T.\BBACOMMA\  \BBA\ Myllym{\"{a}}ki, P. \BBOP2006\BBCP.
\newblock \BBOQ A simple approach for finding the globally optimal {B}ayesian
  network structure\BBCQ\
\newblock In {\Bem Proceedings of the Twenty-Second Conference in Uncertainty
  in Artificial Intelligence (UAI'06)}. {AUAI} Press.

\end{thebibliography}
\bibliographystyle{theapa}

\newpage

\end{document}

\section{Notes}

\subsection{New Negative Results}

\begin{theorem}
\textsc{Bounded-MoralEdges-BNSL} is W[2]-h for~$k$.
\end{theorem}

\begin{proof}
Reduktion von \textsc{Set Cover}: Für jedes~$F \in \mathcal{F}$ definiere einen Knoten~$v_F$ und für jedes $x \in U$ definiere Knoten~$w_u$. Außerdem adde einen Masterknoten~$a$.

Jedes~$w_u$ erhält einen Score von~$1$, wenn~$P_u= \{v_F, a\}$ für ein $F$ mit~$u \in F$. Ansonsten sind alle scores~$0$.

Setze den, zu erlernenden Score auf~$|U|$.
$\hfill \Box$
\end{proof}

\subsection{New Positive Results}
\begin{theorem}
\textsc{Bounded-FES-BNSL} is XP for vc(ss)+k. Und für vc(moralized(ss)).
\end{theorem}

\begin{proof}
Bruteforce alle Parents von vc(ss)-Knoten und zusätzlich alle Knoten mit mehr als einem Vater. Es bleiben nur IS-Knoten übrig, die sich ihren besten Vater aus dem vc greedy wählen können.

Für vc(moralized(ss)) bedenke: vc(moralized(ss)) $\geq \#$Moralkanten: Ist ein Knoten im vc inzident zu vielen Moralkanten im IS muss er für jede davon einen neuen Nachbarn im vc haben. $\hfill \Box$
\end{proof}

\subsection{Other Stuff}
\textsc{BNSL} parameter für die Scorefunctions:
\begin{enumerate}
\item[•] Number~$k$ of~$0$-entries: Trivial~$\mathcal{O}^*(2^k)$ algorithm.
\item[•] Number~$\ell$ of Non-Zero-entries: NP-h for~$k=0$: Nehme z.B. $\Delta_{\text{Superstructure}}=4$ Reduktion aus~\cite{OS13} und ersetze alle~$f_v(A)=0$ durch~$f_v(A)=\frac{1}{\text{\#(zero-entr.)} + 1}$. \textbf{Vorsicht:} benötigt sinnvolle Argumentation wie~$(f_v)_{v \in N}$ abgespeichert wird. Sonst läuft die Reduktion nicht mehr in Polyzeit!
\end{enumerate}

